\newcommand{\myparskip}{3pt}
 \def\BIBand{and}%
\newtheorem{thm}{Theorem}[section]
\newtheorem{theorem}[thm]{Theorem}
\newtheorem{lemma}[thm]{Lemma}
\newtheorem{proposition}[thm]{Proposition}
\newtheorem{corollary}[thm]{Corollary}
\newtheorem{definition}[thm]{Definition}
\def\reff#1{(\ref{#1})}
\def\nn{\nonumber}
\def\veps{\varepsilon}
\def\bR{{\mathbb{R}}}  % \usepackage{amsfonts}
\def\cC{{\mathcal{C}}}
\def\cF{{\mathcal{F}}}
\def\cX{{\mathcal{X}}}
\def\cent{{\rm cent}}
\def\cost{{\rm cost}}
\def\outlier{{\rm outlier}}
\def\bea{\begin{eqnarray}}
\def\eea{\end{eqnarray}}
\def\ba{\begin{array}}
\def\ea{\end{array}}
\DeclareMathOperator*{\argmax}{argmax}  % \usepackage{amsmath}
\DeclareMathOperator*{\argmin}{argmin}
\newenvironment{breakablealgorithm}
  {% \begin{breakablealgorithm}
   \begin{center}
     \refstepcounter{algorithm}% New algorithm
     \hrule height.8pt depth0pt \kern2pt% \@fs@pre for \@fs@ruled
     \renewcommand{\caption}[2][\relax]{% Make a new \caption
       {\raggedright\textbf{\ALG@name~\thealgorithm} ##2\par}%
       \ifx\relax##1\relax % #1 is \relax
         \addcontentsline{loa}{algorithm}{\protect\numberline{\thealgorithm}##2}%
       \else % #1 is not \relax
         \addcontentsline{loa}{algorithm}{\protect\numberline{\thealgorithm}##1}%
       \fi
       \kern2pt\hrule\kern2pt
     }
  }{% \end{breakablealgorithm}
     \kern2pt\hrule\relax% \@fs@post for \@fs@ruled
   \end{center}
  }
\begin{document}

\title{Outliers Detection Is Not So Hard: Approximation Algorithms for
Robust Clustering Problems Using Local Search Techniques}

\author{
Yishui Wang\thanks{School of Mathematics and Physics, University of Science and Technology Beijing, Beijing 100083, P.R. China. Email: {\tt wangys@ustb.edu.cn}.} \and
Rolf H. M${\ddot{\rm o}}$hring\thanks{Institute for Applied Optimization, Department of Computer Science and Technology, Hefei University, P.R. China, and The Combinatorial Optimization and Graph Algorithms (COGA) group, Institute for Mathematics, Technical University of Berlin, Germany. Email: {\tt Rolf.Moehring@tu-berlin.de}.}
\and
Chenchen Wu\thanks{Corresponding author. College of Science, Tianjin University of Technology,  Tianjin 300384, P.R. China. Email: {\tt wu\_chenchen\_tjut@163.com}.}
\and
Dachuan Xu\thanks{Department of Operations Research and Information Engineering, Beijing University of Technology, Beijing 100124, P.R. China. Email: {\tt xudc@bjut.edu.cn}.}
\and
Dongmei Zhang\thanks{School of Computer Science and Technology, Shandong Jianzhu University, Jinan 250101, P.R. China. Email: {\tt zhangdongmei@sdjzu.edu.cn}.}
}

%\date{}

\begin{titlepage}
\maketitle
\thispagestyle{empty}

\begin{abstract}
In this paper, we consider two types of robust models of the $k$-median/$k$-means problems: the outlier-version ($k$-MedO/$k$-MeaO) and the penalty-version ($k$-MedP/$k$-MeaP), in which we can mark some points as outliers and  discard them. In $k$-MedO/$k$-MeaO, the number of outliers is bounded by a given integer. In $k$-MedP/$k$-MeaP, we do not bound the number of outliers, but each outlier will incur a penalty cost.
We develop a new technique to analyze the approximation ratio of local search algorithms for these two problems by introducing an adapted cluster that can capture useful information about outliers in the local and the global optimal solution. For $k$-MeaP, we improve the best known approximation ratio based on local search from $25+\veps$ to $9+\veps$. For $k$-MedP, we obtain the best known approximation ratio. For $k$-MedO/$k$-MeaO, there exists only two bi-criteria approximation algorithms based on local search. One violates the outlier constraint (the constraint on the number of outliers), while the other violates the cardinality constraint (the constraint on the number of clusters). We consider the former algorithm and improve its approximation ratios from $17+\veps$ to $3+\veps$ for $k$-MedO, and from $274+\veps$ to $9+\veps$ for $k$-MeaO.
\end{abstract}
\end{titlepage}

\section{Introduction} \label{sec:introduction}
Using large data sets to make better decisions is becoming more important
and routinely applied in
Operations Research, Management Science, Biology, Computer Science,
and Machine Learning
(see e.g. \citealp{bms,bh,hl,lw}).
Clustering large data is a fundamental problem in data analytics.
Among many clustering types, center-based clustering
is the most popular and widely used one.
Center-based clustering problems include
$k$-median, $k$-means, $k$-center, facility location problems, and so on (see e.g. \citealt{answ,bprst,lloyd,li,lswxxz,nssxz}).
The $k$-median and $k$-means problems are the most basic and classic clustering problems.
The goal of $k$-median/means clustering
is to find $k$ cluster centers such that
the total (squared) distance from each input datum to the closest cluster center
is minimized.
Usually, one considers $k$-median problems in arbitrary metrics
while $k$-means problems in the Euclidean space $\mathbb{R}^D$.

Both problems are NP-hard to approximate beyond  with the lower bounds
$1 + 2/e \approx 1.736$ (\citealp{jmmsv})
and $1.07$ (\citealp{ck})
for $k$-median and $k$-means, respectively.
There are many papers on designing efficient approximation algorithms.
The best known approximations are $2.675+\varepsilon$ (\citealp{bprst}) and
$6.357+\varepsilon$ (\citealp{answ})
for $k$-median and $k$-means, respectively.
If we restrict to a fixed-dimensional Euclidean space,
the $k$-median and $k$-means problems have a PTAS (see \citealp{arr,ckm,frs}).

However, real-world data sets often contain outliers
which may totally spoil $k$-median/means clustering results.
To overcome this problem, robust clustering techniques have been developed
to avoid  being affected by outliers. In general, there are two types
of robust formulations:
$k$-median/means with outliers ($k$-MedO/$k$-MeaO) and
$k$-median/means with penalties ($k$-MedP/$k$-MeaP).
We formally define these problems as follows.

\begin{definition}[$k$-Median Problem with Outliers/Penalties] \label{defmedian}
In the $k$-median problem with outliers ($k$-MedO),
we are given a client set $ \mathcal{X} $  of $n$ points,
a facility set $ \mathcal{F} $ of $m$ points,
a metric space $(\mathcal{X} \cup \mathcal{F}, d)$,
and two positive integers $k < m $ and $z < n $.
The aim is to find a subset
$ S \subseteq  \mathcal{F} $ of cardinality at most $ k $,
and an outlier set $ P \subseteq \mathcal{X} $ of cardinality at most $ z $
such that the objective function
$ \sum_{x \in \mathcal{X} \setminus P } \min_{s \in S } d(x, s)   $
is minimized.
In the $k$-median problem with penalties ($k$-MedP),
we have the same input except that
the cardinality restrictions on the penalty set
$P \subseteq \mathcal{X}$ is replaced by
a nonnegative penalty $p_x $ for each $ x \in \mathcal{X} $,
and the objective function is to minimize
$ \sum_{x \in \mathcal{X} \setminus P } \min_{s \in S } d(x, s)
+  \sum_{x \in  P } p_x $.
\end{definition}

\begin{definition}[$k$-Means Problem with Outliers/Penalties] \label{defmeans}
In the $k$-means problem  with outliers ($k$-MeaO),
we are given a data set $ \mathcal{X} $ in $ \mathbb{R}^d$ of $n$ points
and two positive integers $k$ and $z < n $. Let $d(u,v):=\|u-v\|_2$ be the Euclidean distance of the points $u$ and $v$.
The aim is to find a cluster center set
$ S \subseteq  \mathbb{R}^d$ of cardinality at most $ k $,
%$ \{ c_1, \cdots,  c_k \} \subset  \mathbb{R}^d$,
and an outlier set $ P \subseteq \mathcal{X} $ of cardinality at most $ z $
such that the objective function
$ \sum_{x \in \mathcal{X} \setminus P } \min_{s \in S } d(x,s)^2  $
is minimized.
In the $k$-means problem with penalties ($k$-MeaP),
we have the same input except that
the cardinality restrictions on penalty set
$P \subseteq \mathcal{X}$  is instead of
a nonnegative penalty $p_x $ for each $ x \in \mathcal{X} $,
and the objective function is to minimize
$ \sum_{x \in \mathcal{X} \setminus P } \min_{s \in S } d(x,s)^2
+  \sum_{x \in  P } p_x $.
\end{definition}

From the perspective of clustering, we can view a facility in the $k$-median problem as a center, and view a client as a point. To avoid confusion, we use ``center'' and ``point'' for all problems we consider in this paper.

Basic versions of these problems have been widely studied, and many approximation algorithms based on many different techniques, including LP-rounding (see e.g. \citealp{cgts,chl,li}), primal-dual (see e.g. \citealp{answ,jv}), dual-fitting (see e.g. \citealp{jmmsv,myz}), local search (see e.g. \citealp{agkmmp,kmnpsw,kpr}), Lagrangian relaxation (see e.g. \citealp{jmmsv,jv}), bi-point rounding (see e.g. \citealp{jmmsv,jv}), and pseudo-approximation (see e.g. \citealp{bprst,ls}), have been developed and applied.

We now discuss the state-of-art approximation results for robust version of $k$-median/$k$-means. \cite{chen} has presented the first constant but very large  approximation algorithm for $k$-MedO via successive local search.
\cite{kls} have obtained an iterative LP rounding framework yielding
 $(7.081 +\varepsilon)$- and $(53.002 +\varepsilon)$-approximation algorithms
for $k$-MedO and $k$-MeaO, respectively.
To the best of our knowledge, these are only two constant factor approximation results for $k$-MedO and $k$-MeaO.

The first constant $4$-approximation for $k$-MedP has been given by \cite{ckmn}
using Lagrangian relaxation framework of \cite{jv}.
The best  $(3+\varepsilon)$-approximation for $k$-MedP has been obtained by \cite{hkk}, who called the problem the red-blue median problem.
Three years later, \cite{wxdw} have independently obtained the same factor approximation for $k$-MedP
and have further generalized it to a $(3.732+\varepsilon)$-approximation for the $k$-facility location problem with linear penalties, which is a common generalization of facility location (in which there are facility opening costs and no cardinality constraint) and $k$-MedP.
Both of them use local search techniques. \cite{z} has obtained  the approximation ratio $3.732+\varepsilon$ for the $k$-facility location problem ($k$-FLP).
The currently best ratio of $3.25$ for $k$-FLP is due to \cite{chl}.
For the $k$-median problem with uniform penalties,
\cite{wdx} have adapted the pseudo-approximation technique of \cite{ls} and obtained a $(2.732+\varepsilon)$-approximation.

\cite{zhwxz} have presented the first constant $(25+\varepsilon)$- approximation algorithm for $k$-MeaP using local search.
\cite{fzsw} have improved this to a $(19.849+\varepsilon)$-approximation by combing Lagrangian relaxation with bipoint rounding.
%Ren et al. \cite{rxdl} have generalized the result of Ahmadian et al.  \cite{answ}  for $k$-means to $k$-MeaP and obtained the currently best $(6.357+\varepsilon)$-approximation.
A summary of the up-to-date approximation results for $k$-MedO/$k$-MeaO and $k$-MedP/$k$-MeaP along with their ordinary versions is given in Table \ref{tab1}.

\begin{table}[h] \label{tab1}
\fontsize{8.5}{11.4}\selectfont
\centering
\caption{Comparion of (robust) clustering problems.}
\vspace{0.2cm}
\begin{tabular}{|l|l|l|l|l|l|l|}
\hline
Techniques and reference & $k$-median & $k$-MedO & $k$-MedP & $k$-means & $k$-MeaO & $k$-MeaP % 添加斜线表头
\\  \hline
LP rounding \citep{cgts}   &  $6\frac23$  &    &    &  &   &
\\  \hline
Lagrangian relaxation \citep{jv}   &  $6$  &    &    & $108$  &   &
\\  \hline
Lagrangian relaxation \citep{ckmn}  &    &    & $4$   &  &   &
\\  \hline
Lagrangian relaxation \citep{jmmsv}  &  $4$  &    &    &   &   &
\\  \hline
Local search \citep{agkmmp}  & $3+\varepsilon$  &   &   &  &   &
\\  \hline
Local search \citep{kmnpsw}  &    &   &   & $9+\varepsilon$   &   &
\\ \hline
Successive local search \citep{chen}  &    & constant  &    &     &    &
\\ \hline
Dependent LP rounding \citep{chl}   & $3.25 $    &    &    &     &    &
\\ \hline
Local search \citep{hkk} &    &   & $3+\varepsilon$   &     &    &
\\ \hline
Pseudo-approximation \citep{ls} & $2.732+\varepsilon$    &   &     &     &    &
\\ \hline
Pseudo-approximation \citep{bprst} & $2.675+\varepsilon$  &   &   &  &   &
\\ \hline
Iterative LP rounding \citep{kls} &    & $7.081 +\varepsilon $  &   &     & $53.002 +\varepsilon $  &
\\ \hline
Primal-dual  \citep{answ}  &    &   &   & $6.357+\varepsilon$   &   &
\\  \hline
Local search \citep{zhwxz}   &    &   &   &    &   & $25+\varepsilon$
\\  \hline
Bipoint rounding \citep{fzsw}   &    &   &   &  &   & $19.849+\varepsilon$
\\  \hline
%\cite{rxdl}: primal-dual   &    &   &  &     &    &  $6.357+\varepsilon$
%\\ \hline
%Our results: local search &    & $3+\varepsilon$   &  $3+\varepsilon$  &     & $9+\varepsilon$   &  $9+\varepsilon$
%\\ \hline
\end{tabular}
\end{table}

The available literature suggests two observations concerning the approximation factor:
i) $k$-MedP/$k$-MeaP seems more easy to approximate than $k$-MedO/$k$-MeaO.
ii) The existence of outliers make the approximation of the corresponding robust clustering problems much harder than the ordinary clustering problems.

The best known approximation ratios for $k$-MedO and $k$-MeaO have been obtained by LP-rounding, but these algorithms are not strongly polynomial-time since they involve solving linear programs. Concerning time complexity, local search is better than LP-rounding, and this technique has been well applied to $k$-median/$k$-means and their penalty versions. Furthermore, the standard local search algorithm is also used for $k$-median/$k$-means with some special metrics such as the minor-free metric \citep{ckm} and the doubling metric \citep{frs}.  These two papers show that the standard local search scheme yields a PTAS for the considered problems when the dimension is fixed. Their results hold in particular for the Euclidean metric, since both the minor-free metric and the doubling metric are extensions of the Euclidean metric.

Unfortunately, the standard local search algorithm for $k$-MedO/$k$-MeaO can not produce a feasible solution with a bounded approximation ratio \citep{fkrs}. So some research directions focus on bi-criteria approximation algorithms based on local search for these two problems. These algorithms have a bounded approximation ratio but violate either the $k$-constraint or the outlier constraint by a bounded factor. \cite{gklmv} have developed a method for addressing outliers in a local search algorithm, yielding  a bi-criteria $(274+\veps,O(\frac{k}{\veps}\log n\delta))$-approximation algorithm ($\delta$ as defined in Section \ref{sec:contributions}) that violates the outlier constraint. \cite{fkrs} have provided $(3+\veps,1+\veps)$- and $(25+\veps,1+\veps)$-local search bi-criteria  approximation algorithms for $k$-MedO and $k$-MeaO respectively.

We will consider the standard local search algorithm for $k$-MedP/$k$-MeaP, and the outlier-based local search algorithm by \cite{gklmv} for $k$-MedO/$k$-MeaO. Using our new technique, we will improve the approximation ratios for $k$-MeaP, $k$-MeaO and $k$-MedO. For $k$-MedP, we  obtain the same approximation ratio which is the best one possible.

We list the related results about local search algorithms for $k$-MedO/$k$-MeaO and $k$-MedP/$k$-MeaP in Table \ref{tab2}.

\begin{table}[t] \label{tab2}
\fontsize{8.5}{11.4}\selectfont
\centering
\caption{Local search algorithms for (robust) clustering problems. The \# centers blowup means the factor by which the cardinality constraint is violated. The \# outliers blowup means the factor by which the outlier constraint is violated. }
\vspace{0.2cm}
\begin{tabular}{|c|c|c|c|c|}
\hline
Reference  & Problem  &  Ratio  & \# centers blowup & \# outliers blowup %添加斜线表头
\\  \hline
\cite{agkmmp}  &  $k$-median  & $3+\varepsilon$  & none & none
\\  \hline
\cite{kmnpsw} &  $k$-means   &  $9+\varepsilon$ & none & none
\\ \hline
\cite{chen}  &  $k$-MedO  & constant & none & none
\\ \hline
\cite{hkk}  & $k$-MedP & $3+\varepsilon$  & none & none
\\ \hline
\cite{zhwxz}  & $k$-MeaP    & $25+\varepsilon$ & none & none
\\  \hline
%\cite{ckm} & $k$-median/$k$-means & & &
%\\
%           & in minor-free metrics   & PTAS & none & none
%\\
%           &  with fixed dimension & & &
%\\  \hline
\cite{ckm} & \makecell[c]{$k$-median/$k$-means \\ in minor-free metrics \\ with fixed dimension}  & PTAS & none & none
\\  \hline
\cite{frs} & \makecell[c]{$k$-median/$k$-means with \\ fixed doubling dimension} & PTAS & none & none
\\  \hline
\cite{fkrs}  & \makecell[c]{$k$-MedO \\ $k$-MeaO} & \makecell[c]{$3+\veps$ \\ $25+\veps$} & \makecell[c]{$1+\veps$ \\ $1+\veps$} & \makecell[c]{none \\ none}
\\  \hline
\cite{gklmv} & \makecell[c]{$k$-MedO \\ $k$-MeaO} & \makecell[c]{$17+\veps$ \\ $274+\veps$} & \makecell[c]{none \\ none} & \makecell[c]{$O(k\log(n\delta)/\veps)$ \\ $O(k\log(n\delta)/\veps)$}
\\  \hline
Our results & \makecell[c]{$k$-MedP \\ $k$-MeaP \\ $k$-MedO \\ $k$-MedO \\ $k$-MeaO \\ $k$-MeaO} & \makecell[c]{$3+\varepsilon$ \\ $9+\varepsilon$ \\ $5+\varepsilon$ \\ $3+\varepsilon$ \\ $25+\varepsilon$ \\ $9+\varepsilon$} & \makecell[c]{none \\ none \\ none \\ none \\ none \\ none} & \makecell[c]{none \\ none \\ $O(k\log(n\delta)/\veps)$ \\ $O(k^2\log(n\delta)/\veps)$ \\ $O(k\log(n\delta)/\veps)$ \\ $O(k^2\log(n\delta)/\veps)$}
\\  \hline
\end{tabular}
\end{table}

\subsection{Our techniques}\label{sec:our-tech}
We concentrate on $k$-MedP and $k$-MeaP to illustrate our techniques.
The associated outlier versions are then easy generalizations.

In the standard local search algorithm, one starts from an arbitrary feasible solution.
Operations such as add center, delete center, or swap centers, define
the neighborhood of the currently  feasible solution.
One then searches for a local optimal solution in the whole neighborhood and takes it as the new  current solution. This is iterated until the improvement becomes sufficiently small.

Similar to the previous analyses of local search algorithms for $k$-median and $k$-means, we want to find some valid inequalities by constructing swap operations in order to establish some ``connections'' between  local and global optimal solutions.
Integrating all these inequalities or connections carefully, we can bound cost of the local optimal solution  by the global optimal cost.

In the analysis of $k$-median (see \citealt{agkmmp}), these connections are given individually for each point (that is, each point yields an inequality that gives a bound of its cost after the constructed swap operation). We call this type of analysis an ``individual form''.

Another analysis type is  the ``cluster form'', in which the connections between the local and global optimal solutions are revealed for some clusters containing several points. The cluster form analysis was first used for $k$-means in \cite{kmnpsw}. In the work of \cite{kmnpsw}, the authors use the Centroid Lemma (introduced in Section \ref{sec:tech-lem}) to obtain equality for each cluster in the optimal solution, and then deduce the approximation ratio by these equalities and the triangle inequality. They found that the cluster form analysis is tighter than the individual form.
However, the same analysis does not apply to $k$-MeaP due to the existence of outliers.
Indeed, the clusters derived with equalities from the Centroid Lemma should contain no outliers in both the local and global optimal solutions, since they do not incur a cost for outliers.

To this end, we careful recognize and define an \emph{adapted cluster} as a cluster that excludes outliers. In order to use the Centroid Lemma, we identify a new centroid for the adapted cluster and use the triangle inequality for the squared distances to identify the associated centroid of the global solution in the analysis. These new centroids can be found by a carefully defined mapping function.

Our cluster form analysis also applies to $k$-MedP, although there is no result like the Centroid Lemma for this problem. In fact, we only need to denote the optimal center of the adapted cluster for $k$-MedP (corresponding to the centroid in the analysis for $k$-MeaP), and use its optimality to derive the inequality for the adapted cluster. During the entire process of the analysis, we do not compute the optimal center, so we do not need a result like the Centroid Lemma.

Our cluster form analysis establishes a bridge between local and global solutions for both robust and ordinary clusterings, and we obtain a clear and unified understanding of them.
Furthermore, we believe that our technique can be generalized to other robust clustering problems such as the robust facility location and $k$-center problems.

\subsection{Our contributions}\label{sec:contributions}
We use the standard local search algorithm for  $k$-MedP and $k$-MeaP. Via a subtle cluster form analysis, we obtain the following result.

\begin{theorem}
The standard local search algorithm yields $(3+\varepsilon)$- and $(9+\varepsilon)$-approximations for $k$-MedP and $k$-MeaP respectively.
\end{theorem}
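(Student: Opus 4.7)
The plan is to analyze a $p$-swap local optimum $(S, P_S)$ against the global optimum $(O, P_O)$, where $p = O(1/\varepsilon)$, via a polynomial family of non-improving multi-swap operations. First I would partition $\cX$ by how each point is treated: served by both solutions, served by one and outlier in the other, or outlier in both. For each $o\in O$ I define the \emph{adapted cluster} $A_o \subseteq \cX$ to consist of those $x$ whose $O$-nearest center is $o$ and which lie in neither $P_S$ nor $P_O$. The family $\{A_o\}_{o\in O}$ partitions $\cX \setminus (P_S \cup P_O)$, and crucially every $x \in A_o$ contributes an honest service cost to both objectives, so no penalty terms hide inside an adapted cluster.

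Next I would build the swap family using the capture map $\pi \colon O \to S$ defined by $\pi(o) := \argmin_{s \in S} d(s,o)$, following Arya et al.\ and Kanungo et al.: the preimages of $\pi$ together with the uncaptured centers are organized into swap blocks $(T,U)$ with $T \subseteq S$, $U \subseteq O$, $|T| = |U| \le p$. For each such swap the certifying reassignment is: any point $x \in \cX \setminus (P_S \cup P_O)$ with $s_x \in T$ is reassigned to $\pi(o_x) \in (S\setminus T) \cup U$, paying at most $d(x,o_x) + d(o_x, s_x) \le d(x,o_x) + 2\,d(x,s_x)$ in the median case, and the analogous squared bound via $d(a,c)^2 \le (1+\alpha) d(a,b)^2 + (1 + 1/\alpha) d(b,c)^2$ in the means case; any $x \in P_S$ with $o_x \in U$ may be promoted out of the outlier set (trading $p_x$ for a service cost), and any $x \in P_O$ with $s_x \in T$ may be demoted to the outlier set (trading a service cost for $p_x$). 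Local optimality forces a nonnegative net-change inequality for each swap, and summing with the Arya--Gupta--Kleinberg multiplicities yields a master inequality in which the $P_S$- and $P_O$-penalty terms telescope cleanly, precisely because the adapted clusters are penalty-free.

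The final step applies the cluster-form inequalities. For $k$-MedP, the $1$-median optimality of $o$ on $A_o$ gives $\sum_{x \in A_o} d(x,o) \le \sum_{x \in A_o} d(x,s)$ for every $s\in S$; combining this with the swap bounds yields $\cost(S, P_S) \le 3\,\cost(O, P_O) + \varepsilon \cdot \cost(S, P_S)$ once $p$ is taken large enough, hence the $(3+\varepsilon)$-ratio. For $k$-MeaP the Centroid Lemma gives
\begin{equation}
\sum_{x \in A_o} d(x,y)^2 = \sum_{x \in A_o} d(x, \bar c_o)^2 + |A_o|\, d(\bar c_o, y)^2
\end{equation}
for every $y \in \bR^d$, where $\bar c_o$ is the centroid of $A_o$; applying this at $y = o$, combining with the scaled squared triangle inequality at its optimal $\alpha$, and summing produces the $(9+\varepsilon)$-ratio. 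The main obstacle — and where the adapted-cluster idea is indispensable — is handling points whose outlier status differs between $S$ and $O$: a swap whose inequality mixes service costs on one side with penalties on the other must be arranged so that every such ``disputed'' point appears on the same side of the global sum as the cost it actually pays in $(S, P_S)$, and it is precisely the exclusion of $P_S \cup P_O$ from each $A_o$ that keeps the cluster-form inequalities free of mixed cost-and-penalty terms, letting the residual individual-form terms fold into lower-order contributions absorbed by the $\varepsilon$.
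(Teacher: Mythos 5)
Your skeleton matches the paper's: the adapted clusters $A_o = N^*(o)\setminus(P_S\cup P_O)$, the block decomposition of swaps with the usual multiplicities, and the promotion/demotion of points whose outlier status differs between the two solutions are exactly the paper's ingredients. But two load-bearing steps fail as written. First, in the median case you bound the reassignment cost of $x$ by $d(x,o_x)+d(o_x,s_x)\le d(x,o_x)+2\,d(x,s_x)$; the inequality $d(o_x,s_x)\le 2\,d(x,s_x)$ is false in general (there is no pointwise relation between $d(x,o_x)$ and $d(x,s_x)$), and the correct chain $d(o_x,s_x)\le d(o_x,x)+d(x,s_x)$ gives $2\,d(x,o_x)+d(x,s_x)$. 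With your coefficients the $\cost_c(x)$ terms do not telescope and the factor $3$ is lost. Second, you invoke ``the $1$-median optimality of $o$ on $A_o$'' to claim $\sum_{x\in A_o}d(x,o)\le\sum_{x\in A_o}d(x,s)$ for all $s\in S$; but $o$ is optimal for its full cluster $N^*(o)$, not for the sub-cluster $A_o$, so this premise is false. The paper instead introduces the genuine optimal center $\cent_{\cC}(A_o)$ of the adapted cluster and uses the inequality in the \emph{opposite} direction, $\sum_{x\in A_o}d(\cent_{\cC}(A_o),x)\le\sum_{x\in A_o}d(o,x)$.

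The more substantive gap is in the means case. You define the capture map on $O$ itself, $\pi(o)=\argmin_{s\in S}d(s,o)$, reassign $x$ to $\pi(o_x)$, apply the Centroid Lemma at $y=o$, and finish with the scaled inequality $d(a,c)^2\le(1+\alpha)d(a,b)^2+(1+1/\alpha)d(b,c)^2$. This does not yield $9+\veps$. The Centroid Lemma decomposes the reassignment cost as $\sum_{x\in A_o}d(x,\bar c_o)^2+|A_o|\,d(\bar c_o,\pi(o))^2$, and because $\pi(o)$ is the nearest $S$-center to $o$ rather than to $\bar c_o$, the second term can only be controlled through the three-term detour $d(\bar c_o,\pi(o))\le d(\bar c_o,o)+d(o,x)+d(x,s_x)$; squaring, summing, and applying Cauchy--Schwarz gives roughly $5\,{\rm OPT}+{\rm SOL}+4\sqrt{{\rm OPT}\cdot{\rm SOL}}$ in place of the $2\,{\rm OPT}+{\rm SOL}+2\sqrt{{\rm OPT}\cdot{\rm SOL}}$ needed, and the resulting quadratic in $\sqrt{{\rm SOL}/{\rm OPT}}$ lands near the $25$--$27$ regime of the earlier individual-form analyses (indeed the scaled-$\alpha$ inequality is precisely how the previous $25+\veps$ bound arises). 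The paper's essential move, absent from your sketch, is to define the capture map on the set $\{\cent_{\cC}(A_o)\}$ of adapted-cluster centroids, so that $d(\cent_{\cC}(A_o),\phi(\cent_{\cC}(A_o)))\le d(\cent_{\cC}(A_o),s_x)$ holds for \emph{every} $x\in A_o$ by definition of $\phi$; then the exact Centroid Lemma, a single two-term triangle inequality, Cauchy--Schwarz, the optimality $\sum_{x\in A_o}d(x,\cent_{\cC}(A_o))^2\le\sum_{x\in A_o}d(x,o)^2$, and a final factorization of the quadratic inequality in $\sqrt{\cost_c+\cost_p}$ deliver $9+\veps$. (You would also need the swapped-in centers to come from an $\hat\veps$-approximate centroid set, a $(1+\hat\veps)$ loss your sketch does not account for, but that is a minor technicality.)
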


Our analysis is different to that of \cite{hkk} who have also obtained a local search $(3+\varepsilon)$-approximation for $k$-MedP, and improve the previous local search $(25+\veps)$-approximation \citep{zhwxz} and the primal-dual $(19.849+
\veps)$-approximation \citep{fzsw} for $k$-MeaP. Moreover, our result indicates that the penalty-version of the clustering problems have the same approximation ratios as the ordinary  version, when we adopt the local search technique followed with our cluster form analysis.

For $k$-MedO and $k$-MeaO, we use the outlier-based local search algorithm (based on \citealt{gklmv}).

The  algorithm has a parameter for controlling the descending step-length of the cost in each iteration. This parameter is fixed in \cite{gklmv}, while it is an input in our algorithm  because both the approximation ratio and the number of outliers blowup are associated with the value of this parameter. This helps us to reveal a tradeoff between the approximation ratio and the outlier blowup. When selecting appropriate values for this parameter, we can obtain constant approximation ratios. In the following theorems, $\delta$ denotes the maximal distance between two points in the data set.

\begin{theorem}
The outlier-based local search algorithm yields bicriteria $(5+\varepsilon,O($ $\frac{k}{\veps}\log(n\delta)))$- and $(3+\varepsilon,O(\frac{k^2}{\veps}\log(n\delta)))$-approximations for $k$-MedO, and bicriteria $(25+\veps,O(\frac{k}{\veps}\log(n\delta)))$- and $(9+\veps,O(\frac{k^2}{\veps}\log(n\delta)))$-approximations for $k$-MeaO, where $O(\frac{k}{\veps}\log(n\delta))$ and $O(\frac{k^2}{\veps}\log(n\delta))$ are the factors by which the outlier constraint is violated.
\end{theorem}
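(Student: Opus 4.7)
The plan is to analyze the outlier-based local search algorithm in two layers: bound the number of points marked as outliers by the algorithm (giving the outlier blowup) and bound the clustering cost of the returned solution (giving the approximation ratio). The descent-step parameter controls both quantities, so the stated tradeoff between approximation ratio and outlier blowup emerges naturally from the analysis.

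First, I would track the cost decrease per iteration. The algorithm terminates when no swap can decrease the current cost by more than a factor $\alpha$. Since the cost lies in an interval of width bounded by a polynomial in $n\delta^p$ (with $p=1$ for $k$-MedO and $p=2$ for $k$-MeaO), a geometric descent bounds the number of iterations by $O(\log_{1/(1-\alpha)}(n\delta^p))$. In each iteration, after applying the improving swap, the algorithm is allowed to discard up to $z$ additional expensive points as extra outliers, which is what enables further improvement to be found. Setting $\alpha = \Theta(\veps/k)$ for the single-swap version (and $\alpha = \Theta(\veps/k^2)$ for the $p$-swap version with $p = O(1/\veps)$), summing the $z$ extra outliers over all iterations yields the claimed blowup factors $O(\tfrac{k}{\veps}\log(n\delta))$ and $O(\tfrac{k^2}{\veps}\log(n\delta))$, respectively.

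Next, for the cost bound, I would apply the cluster form analysis of Section~\ref{sec:our-tech} to the final local solution $S$ against a global optimum $S^*$. The key step is to construct an appropriate family of swap operations, each of which must fail to improve the cost by more than $\alpha\cdot\cost(S)$ at local termination. Summing these weak inequalities across the family and exploiting \emph{adapted clusters} that exclude outliers belonging to either the algorithm's solution or to $S^*$, I obtain a bound of the form $\cost(S) \le \rho\cdot\cost(S^*) + \alpha\cdot k\cdot\cost(S)$ (or $\alpha\cdot k^2\cdot\cost(S)$ in the multi-swap case). For $k$-MedO, single-swaps give $\rho = 5$, while multi-swaps of size $p = O(1/\veps)$ give $\rho = 3$, mirroring the classical single-swap and multi-swap analyses of Arya et al.\ for $k$-median. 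For $k$-MeaO, the Centroid Lemma applied to the adapted clusters plays the analogous role and yields $\rho = 25$ and $\rho = 9$ in the single-swap and multi-swap cases respectively. Rearranging and choosing $\alpha$ as above absorbs the second term into the $\veps$ slack.

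The main obstacle is the bookkeeping that links the algorithm's outlier-marking rule to the swap inequalities. Each swap-induced reassignment must account for three kinds of points: those assigned in both $S$ and $S^*$ (standard case), points that are outliers in $S^*$ but not in $S$ (which contribute to $\cost(S)$ only and must be reassigned to a surviving center in the swap), and points marked outliers by the algorithm (which contribute zero to $\cost(S)$ but may be clustered in $S^*$, so the reverse reassignment must avoid charging them). The adapted-cluster construction, together with a mapping from clusters of $S$ to centroids of $S^*$ that respects the outlier sets on both sides, is precisely what decouples these cases so that the Centroid Lemma (for means) or the optimality of the cluster center (for median) can be invoked cluster-wise. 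Once this decoupling is in place, the remainder of the argument reduces to the ordinary cluster-form analysis of $k$-median/$k$-means.
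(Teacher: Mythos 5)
Your proposal follows essentially the same route as the paper: bound the iteration count geometrically to get the $O(\frac{q}{\veps}\log(n\delta))$ outlier blowup with $q=\Theta(k)$ or $\Theta(k^2)$, derive per-swap inequalities weakened by an $\frac{\veps}{q}\cost(S,P)$ slack, and sum them over a family of $k$ single swaps (ratios $5$, $25$) or $O(k^2)$ multi-swaps (ratios $3+2/\rho$, $9+\veps$) using the adapted clusters, the mapping $\phi$, and the Centroid Lemma. The only step your sketch leaves implicit is that the leftover term $\sum_{x\in P^*\setminus P}\cost_c(x)$ must itself be bounded by $\frac{\veps}{q}\cost(S,P)$ via the termination condition of the no-swap (add-outliers) step, which is where the extra $+1$ in the paper's factor $1-(1+k)\veps/q$ comes from; otherwise the argument is the paper's.
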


With the same outlier blowup, our ratios obtained with single-swap  significantly improve the previous ratios $17 +\varepsilon $  and  $274 +\varepsilon $  for the $k$-MedO and $k$-MeaO, respectively. The multi-swap version improves this even more, but with a larger outlier blowup.

These results strengthens our comprehension of robust clustering problems from a local search aspect. Furthermore, our cluster form analysis  has a high potential to be applied in the robust version for FLP and $k$-FLP, since the structures of these two problems are similar to $k$-MedP, and the analyses for  the connection cost and facility opening cost are seperated in the previous papers that study  local search algorithms for FLP and $k$-FLP (see \citealt{agkmmp, z}).

\subsection{Outline of the paper}
Section 2 presents the unified models and notations for $k$-MedP/$k$-MeaP and $k$-MedO/$k$-MeaO, and some useful technical lemmas. Section 3 then presents our standard local search algorithms for  $k$-MedP/$k$-MeaP and our corresponding theoretical results. In Section 4, we develop our outlier-based local search algorithms for  $k$-MedO/$k$-MeaO and present our corresponding theoretical results. The conclusions are given in Section 5.
All technical proofs are given in the appendices.

\section{Preliminaries}\label{sec:preliminaries}
\subsection{The models}\label{sec:model}
We use the following notation for the problems studied in this paper (in addition to the notation introduced in the introduction). $\cC$ denotes the candidate center set, and $\Delta(a,b)$ denotes the connection cost between two points $a$ and $b$. For $k$-MedP and $k$-MedO, we have $\cC = \cF$ and $\Delta(a,b) = d(a,b)$; for $k$-MeaP and $k$-MeaO, we have $\cC = \cX$ and $\Delta(a,b) = d^2(a,b)$. Then, the penalty-version  can be formulated as
$$
\min_{S \subseteq \cC, P \subseteq \cX} \sum\limits_{x \in \cX \setminus P} \min_{s \in S} \Delta(s,x) + \sum\limits_{x \in P} p_x,
$$
and the outlier-version can be formulated as
$$
\min_{S \subseteq \cC, P \subseteq \cX: |P|\le z} \sum\limits_{x \in \cX \setminus P} \min_{s \in S} \Delta(s,x).
$$

Considering $k$-MeaP and $k$-MedP, we assume that $S$ is a set of $k$ centers. It is obvious that the optimal penalized point set with respect to $S$ is $P = \{x  \in \mathcal{X} | p_x \le  \min_{s \in S} d(s, x) \}$ for $k$-MedP  and $P =\{x  \in \mathcal{X}| p_x \le  \min_{s \in S} d^2(s, x) \}$ for $k$-MeaP, implying that $S$ determines the corresponding $k$ clusters $ N (s) := \{x \in \mathcal{X}\setminus P |  s_x = s\}$ for all $s \in S$, where $ s_x$ denotes the closest center in $S$ to $x  \in \mathcal{X} \setminus P $, i.e., $s_x :=  \argmin_{s \in S} d (s,x)$. Thus, we also call $S$ a feasible solution for $k$-MedP and $k$-MeaP.

Given a center set $S$ and a subset $R \subseteq \cX$, we suppose that $\cX \setminus R = \{ x_1,x_2,\dots,$ $x_{|\cX \setminus R|} \}$ subject to $d(s_{x_1},x_1) \ge d(s_{x_2},x_2) \ge \dots \ge d(s_{x_{|\cX \setminus R|}},x_{|\cX \setminus R|})$. Let $\outlier(S,R)$ $:= \{x_1, x_2, \dots, x_z\}$ if $|\cX \setminus R| \ge z$, otherwise, $\outlier(S,R) := \cX \setminus R$. We simplify $\outlier(S,\emptyset)$ to $\outlier(S)$.
For $k$-MedO and $k$-MeaO, it is obvious that the optimal outlier set with respect to $S$ is $\outlier(S)$, implying that the set $S$ can be seen as a feasible solution. We also use $(S,P)$ to denote a solution (not necessarily feasible) in which the center set is $S$ and  the outlier set is $P$ for $k$-MedO and $k$-MeaO.

\subsection{Some technical lemmas}\label{sec:tech-lem}
Given a data subset $D \subseteq  \mathcal{X}$ and a point  $ c  \in \cC$,
we define $ \Delta(c,D) := \sum_{x \in D} \Delta(c,x) $. Let $\cent_{\cC}(D)$ be a center point in $\cC$ that optimizes the objective of the $k$-means/$k$-median  problem, i.e.,  $\cent_{\cC}(D) := \argmin_{c \in \cC} \Delta(c,D)$. We remark that the notation $\argmin$ ($\argmax$) denotes an arbitrary element that minimizes (maximizes) the objective.  From the well-known centroid lemma (\citealt{kmnpsw}), we get $\cent_{\cC}(D) = \cent(D)$ for $k$-means, where $  {\rm cent} (D)$ is the centroid of $D$, that is defined as follows.
\begin{definition}[Centroid]\label{defcentroid}
Given a set $D \subseteq \bR^d$, we call the point $\sum_{x \in D} x / |D|$ denoted by ${\rm cent} (D)$  the centroid of $D$.
\end{definition}

\begin{lemma}[Centroid Lemma \citep{kmnpsw}] \label{lem2.1}
For any data subset $D \subseteq \mathcal{X} $ and a point $ c \in \mathbb{R}^{d}$,
we have $d^2 (c,D) =
d^2({\rm cent} (D), D) + |D| d^2 ( {\rm cent} (D), c)$.
\end{lemma}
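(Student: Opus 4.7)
The plan is to use the classical bias-variance decomposition familiar from least-squares statistics: add and subtract the centroid $\mu := \mathrm{cent}(D) = \frac{1}{|D|}\sum_{x\in D} x$ inside each squared norm, and show that the cross term vanishes by the defining property of $\mu$. This is an algebraic identity that holds pointwise, so no geometry beyond the inner product structure of $\mathbb{R}^d$ is needed.

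First I would expand, for each $x \in D$,
$$
\|c - x\|_2^2 \;=\; \|(c-\mu) + (\mu - x)\|_2^2 \;=\; \|c-\mu\|_2^2 \;+\; 2\langle c-\mu,\, \mu - x\rangle \;+\; \|\mu - x\|_2^2.
$$
Summing over $x \in D$ yields three terms. The first sums to $|D|\cdot\|c-\mu\|_2^2 = |D|\, d^2(\mathrm{cent}(D), c)$, and the third sums to $d^2(\mathrm{cent}(D), D)$ by the definition of $\Delta(\cdot,\cdot)$ for $k$-means.

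The crux is the middle (cross) term, which factors as
$$
2\Bigl\langle c-\mu,\; \sum_{x\in D}(\mu - x)\Bigr\rangle.
$$
Here I would invoke the defining property of the centroid: $\sum_{x\in D}(\mu - x) = |D|\mu - \sum_{x\in D}x = 0$. Hence the cross term vanishes identically (regardless of the choice of $c$), and combining the surviving two terms gives exactly the claimed identity $d^2(c,D) = d^2(\mathrm{cent}(D), D) + |D|\, d^2(\mathrm{cent}(D), c)$.

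There is no real obstacle here; the only thing worth double-checking is the convention that $d(a,b) = \|a-b\|_2$ and hence $d^2$ is the squared Euclidean norm, so that the inner-product expansion is valid. Once this is fixed, the result is a one-line consequence of the cross-term cancellation.
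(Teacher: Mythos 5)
Your proof is correct and is the standard cross-term-cancellation argument; the paper itself does not reprove this lemma but simply cites \cite{kmnpsw}, where essentially the same bias-variance decomposition is used. Nothing is missing: the expansion is valid because $d^2$ is the squared Euclidean norm, and the vanishing of $\sum_{x\in D}(\mu-x)$ is exactly the defining property of the centroid.
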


So, the candidate center points of a $k$-means problem are the centroid points for all subsets of $\mathcal{X} $.
Note that the total amount of these candidate center points is $2^{|\mathcal{X}|}-1 $.
To cut down this exponential magnitude, \cite{mat} introduces the concept of approximate centroid set  shown in the following definition.
\begin{definition}\label{defi1}
A set $\cC' \subseteq  \mathbb{R}^d$ is an
$\varepsilon$-approximate centroid set for $ \mathcal{X} \subseteq \mathbb{R}^d $
if for any set $ D \subseteq \mathcal{X} $, we have
$ \min_{c \in \cC' }  d^2 (c,D) \le
( 1+ \varepsilon ) \min_{c \in \mathbb{R}^d }  d^2 (c,D)$.
\end{definition}

The following lemma shows the important observation  that
a  polynomial size $ \hat \epsilon$-approximate centroid set for $ \mathcal{X} $ can be found in polynomial time.
In the remainder of this paper, we restrict that the candidate center set of $k$-MeaP/$k$-MeaO is the $ \hat \varepsilon$-approximate centroid set $\cC'$,
by utilizing this observation.
\begin{lemma}[\cite{mat}] \label{lem2.2}
Given an $n$-point set $\cX$ and a real number $\veps>0$, an $\veps$-approximate centroid set for $\cX$, of size $O\left(n\veps^{-d}\log(1/\veps)\right)$, can be computed in time $O\left(n\log n+n\veps^{-d}\log(1/\veps)\right)$.
\end{lemma}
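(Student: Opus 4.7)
The plan is to reduce the approximation requirement to a purely geometric covering question via the Centroid Lemma (Lemma 2.1), and then exhibit an explicit grid-based construction whose size matches the claimed bound. First, Lemma 2.1 tells us that for any data subset $D$ and any candidate center $c$, $d^2(c,D) = d^2(\mathrm{cent}(D),D) + |D|\, d^2(\mathrm{cent}(D),c)$. So demanding $\min_{c \in \cC'} d^2(c,D) \le (1+\veps) d^2(\mathrm{cent}(D),D)$ is equivalent to finding some $c \in \cC'$ within Euclidean distance $\sqrt{\veps\, d^2(\mathrm{cent}(D),D)/|D|}$ of $\mathrm{cent}(D)$. Setting $\sigma_D := \sqrt{d^2(\mathrm{cent}(D),D)/|D|}$, we need $\cC'$ to contain a point in the ball $B(\mathrm{cent}(D), \sqrt{\veps}\,\sigma_D)$.

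The key geometric observation is that $\mathrm{cent}(D)$ is always near some input point. Applying Lemma 2.1 once more with $c$ equal to the point $p^\star \in D$ nearest to $\mathrm{cent}(D)$, the nonnegativity of the left-hand side forces $d^2(p^\star,\mathrm{cent}(D)) \le d^2(\mathrm{cent}(D),D)/|D| = \sigma_D^2$; that is, some input point of $D$ lies within distance $\sigma_D$ of $\mathrm{cent}(D)$. This suggests the construction: for each $p \in \cX$ and each candidate scale $r$, place an axis-aligned grid of side $\sqrt{\veps/d}\,\cdot r$ inside the ball $B(p,r)$. Any point of that ball is then within $\sqrt{\veps}\,r$ of some grid vertex, and one such vertex certifies the approximation for every subset $D$ with $p^\star = p$ and $\sigma_D \approx r$. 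Each (point, scale) pair contributes $O(\veps^{-d/2})$ grid vertices, and I would absorb the square root by a more generous side $\veps r/\sqrt{d}$, which still yields the $O(\veps^{-d})$ count claimed.

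It remains to argue that only $O(\log(1/\veps))$ scales per point are needed, rather than a number growing with the spread of the point set. The plan here is to preprocess $\cX$ in $O(n\log n)$ time by computing a minimum spanning tree (or any structure giving, for each $p$, a $2$-approximation $\eta_p$ of the distance from $p$ to its nearest neighbor) and a global diameter estimate. Scales below $\veps\eta_p$ are useless: if $\sigma_D$ is that small, then $p$ itself (already in $\cX$, hence insertable into $\cC'$) is within $\sqrt{\veps}\sigma_D$ of $\mathrm{cent}(D)$ after accounting for the triangle inequality and Centroid-Lemma slack. Scales larger than $\mathrm{diam}(\cX)$ never arise. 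The remaining scales, taken as powers of $2$, can be shown via a standard charging argument (every subset $D$ is charged to the scale $2^{\lceil \log_2 \sigma_D\rceil}$ at its anchor point $p^\star$) to be covered by a geometric progression whose length reduces, after rescaling by the MST-based preprocessing, to $O(\log(1/\veps))$ effective scales per point. Summing, $|\cC'| = O(n\veps^{-d}\log(1/\veps))$, and the construction runs in the claimed time since the MST/nearest-neighbor step dominates the additive $O(n\log n)$ and the grid enumeration dominates the rest.

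The main obstacle, in my view, is precisely the scale-pruning step: a naive analysis only yields $O(\log\Phi)$ scales, where $\Phi$ is the aspect ratio of $\cX$, which can be arbitrarily large. Tightening this to $O(\log(1/\veps))$ requires one to show that subsets whose natural scale lies in the gap between an input point's nearest-neighbor distance and the diameter are nevertheless handled by grids anchored at \emph{other} points, via a telescoping-charging argument on the MST (or, equivalently, via a well-separated pair decomposition). Executing this bookkeeping rigorously, while simultaneously keeping the $\veps^{-d}$ dependence unchanged, is the technically delicate part of the proof; the rest is a routine combination of the Centroid Lemma, grid covering, and input-size-based preprocessing.
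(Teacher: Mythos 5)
A preliminary remark: the paper offers no proof of Lemma \ref{lem2.2}; it is imported verbatim from \cite{mat} and used as a black box, so there is no in-paper argument to measure yours against. Judged on its own, your sketch sets up the standard reduction correctly: by the Centroid Lemma it suffices that $\cC'$ meet the ball $B(\cent(D),\sqrt{\veps}\,\sigma_D)$ with $\sigma_D^2=d^2(\cent(D),D)/|D|$, and an averaging argument (you do not need the ``nonnegativity'' contortion --- just note that $\min_{p\in D}d^2(\cent(D),p)$ is at most the average, which equals $\sigma_D^2$) shows some $p^\star\in D$ lies within $\sigma_D$ of $\cent(D)$. Grids of side $\Theta(\sqrt{\veps/d}\;r)$ in balls $B(p,2r)$ then give $O(\veps^{-d/2})$ candidates per (point, scale) pair, comfortably within the claimed $O(\veps^{-d})$.

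The genuine gap is exactly where you flag it, and flagging it is not the same as closing it. Pruning scales to a window $[\Omega(\eta_p),\ \mathrm{diam}(\cX)]$ yields $O(\log\Phi)$ scales per anchor, hence a candidate set of size $O(n\veps^{-d}\log\Phi)$ with $\Phi$ the spread --- a strictly weaker statement than the lemma, whose entire point is independence from $\Phi$. Your proposed repair (``a telescoping-charging argument on the MST, or a well-separated pair decomposition'') is named but never executed: you do not specify which scales are retained at which anchor, how a subset $D$ whose natural scale falls in a pruned range is rerouted to a grid anchored at a different point, or why the total number of retained (anchor, scale) pairs is $O(n\log(1/\veps))$. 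Since this amortization is the whole content of the size bound, the proof is incomplete at its crux. Separately, your justification for discarding scales below $\veps\eta_p$ does not work as stated: knowing $d(p^\star,\cent(D))\le\sigma_D$ does not place $p^\star$ within $\sqrt{\veps}\,\sigma_D$ of $\cent(D)$, no matter how small $\sigma_D$ is. The correct observation is that if $p$ anchors $D$ and $|D|\ge 2$, then every other point of $D$ is at distance at least $\eta_p-\sigma_D$ from $\cent(D)$, which forces $\sigma_D=\Omega(\eta_p)$; singletons are handled by simply putting $\cX\subseteq\cC'$. The conclusion survives, but only with this different argument.
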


For the $k$-median problem, we do not need the approximate center set, since the candidate centers are in the finite set $\cF$.

\section{Local search approximation algorithms  for  $k$-MedP and $k$-MeaP}\label{sec:penalty}
Let  $\rho$ be a fixed integer.
For any feasible solution $S$, $ A \subseteq S $ and $ B \subseteq \mathcal{C}\setminus S$ with $ |A| =|B| \le \rho $, we define the so-called multi-swap operation ${\rm swap}(A$, $B)$
such that all centers in $A$ are dropped from $S$ and  all centers in $B$ are added to $S$.

We further denote the connection cost of the point $x \in \cX$ by $\cost_c(x)$, i.e., $\cost_c(x):=\Delta(s_x,x)$, and denote by ${\rm cost}_c $, ${\rm cost}_p $, and ${\rm cost}(S)$
the following expressions
${\rm cost}_c  : = \sum_{x \in  \mathcal{X} \setminus P} {\rm cost}_c (x)$;
${\rm cost}_p : = \sum_{x \in  P} p_x $;
${\rm cost}  (S):={\rm cost}_c  + {\rm cost}_p$, where $P$ is the optimal penalized point set with respect to $S$.

Now we are ready to present our multi-swap local search algorithm.
\vspace{0.5cm}
\begin{breakablealgorithm}
\renewcommand{\algorithmicrequire}{\textbf{Input:}}
\renewcommand{\algorithmicensure}{\textbf{Output:}}
\caption{The multi-swap local search algorithm: LS-Multi-Swap($\cX,C,k,\{p_j\}_{j \in \cX},\rho$)}
\begin{algorithmic}[1]\label{alg-multi-swap}
    \REQUIRE data set $\cX$, candidate center set $C$, penalty cost $p_j$ for all $j \in \cX$, positive integers $k$ and $\rho \le k$.
    \ENSURE center set $S \subseteq C$.
    %\STATE Set $ p:= \lceil 3 / (\sqrt{9+ \varepsilon} - 3) \rceil $ and  $ {\hat \varepsilon} :=  1/p $.
    %\STATE Set $ {\hat \varepsilon} :=  1/p $.
    %\STATE Construct $ {\hat \varepsilon}$-approximate centroid set $ \mathcal{X}$.
    \STATE Arbitrarily choose a $k$-center subset $S$ from $ C$.
    \STATE Compute
        $(A,B): = \arg\min_{A\subseteq S, B \subseteq C\setminus S, |A|=|B| \le \rho}
        {\rm cost} (S\setminus A \cup B).$
    \WHILE {${\rm cost} (S\setminus A \cup B) <  {\rm cost} (S)$}
        \STATE Set $S:= S\setminus A \cup B$.
        \STATE Compute
            $(A,B): = \arg\min_{A\subseteq S, B \subseteq C\setminus S, |A|=|B| \le \rho}
            {\rm cost} (S\setminus A \cup B).$
    \ENDWHILE
    \RETURN $S$
\end{algorithmic}
\end{breakablealgorithm}
\vspace{0.5cm}

For $k$-MedP, we run LS-Multi-Swap($\cX,\cF,k,\{p_j\}_{j \in \cX},\rho$); for $k$-MeaP, we first call the algorithm of \cite{mmsw} to construct an $ {\hat \varepsilon}$-approximate centroid set $ \mathcal{\cC}' \subseteq \cX$, then run LS-Multi-Swap($\cX,\cC',k,\{p_j\}_{j \in \cX},\rho$). The values of $\rho$ and $\hat{\veps}$ will be determined in our analysis of the algorithm.

%\subsection{The key of the analysis: adapted cluster}\label{sec:penalty-analysis}
\subsection{The analysis}
Let  $S^*$ be  a global optimal solution  with the penalized set
$ P^* =\{x  \in \mathcal{X} | p_x \le \min_{s^* \in S} \Delta(x, s^*) \}  $.
Similar to the feasible solution $S$, we introduce the corresponding notations
$ s^*_x$, $ N^* (s^*)$, $ {\rm cost}^*_c (x)$, ${\rm cost}_c^*$, ${\rm cost}_p^*$
and ${\rm cost}  (S^*)$.
%It is obvious  that $ s^*= \cent_a(N^* (s^*))$ for each $s^* \in S^*$.

We use the standard analysis  for a local search algorithm, in which some swap operations between $S$ and $S^*$ are constructed, and then each point is reassigned to a center in the new solution. In the cluster form analysis, we try to bound the new cost for a set of points, rather than bounding the cost of each point individually and independently. To this end, we introduce the \emph{adapted cluster} as follows.
$$
N^*_q (s^*): = N^* (s^*) \setminus  P, \qquad \forall  s^* \in S^*.
$$

With the adapted cluster, we set $\tilde S^* :=  \{ \cent_{\cC}(N^*_q (s^*) )|s^* \in S^*  \} $.
We introduce
a mapping $ \phi: \tilde S^* \rightarrow S$ and map each point $ c \in\tilde S^*$ to $ \phi(c) : = \arg \min_{s \in S} d (c, s) $. We say that the center $\phi(\cent_{\cC}(N^*_q (s^*))$ {\emph{captures}} $s^*$.
%We remark that this mapping $\phi$ is generally neither surjective nor injective.
Considering one of all constructed swap operations, we will reassign some points to a center determined by the mapping $\phi$ (for instance, reassign the point $x$ to $\phi( {\rm cent}  (N^*_q (s^*_x) ))$. The details will be stated later).
%(for instance, reassign $x$ to $\phi( {\rm cent}  (N^*_q (s^*_x) ))$. The details will be stated later.)

Combining all swap operations, the sum of the costs of these points appears in the right hand side of the inequality which is derived from the local optimality of $S$. For $k$-MeaP, we can bound this sum by the connection costs of $S$ and $S^*$, see Lemma \ref{lem-ub-cluster form}.
Note that all these points are not outliers in both $S$ and $S^*$. This is the reason why we need to use the adapted cluster rather than the cluster $N^*(s^*)$ which was used in the analysis for $k$-means \citep{gklmv}.

In the proof of Lemma \ref{lem-ub-cluster form}, we divide the set $\cX \setminus (P \cup P^*)$ into some adapted clusters with respect to all $s^* \in S^*$, and apply the Centroid Lemma to each adapted cluster. Afterwards we bound the square of distances between a centroid $c$ of the adapted cluster and its mapped point $\phi(c)$. This explains why the domain of the mapping $\phi$ is the set of centroids of adapted clusters.

%Before proving Theorem \ref{thm-multi-swap-kmeap}, we first state the following result.
\begin{lemma} \label{lem-ub-cluster form}
Let $ S$ and $S^*$ be a local optimal solution and a global optimal solution of
$k$-MeaP, respectively. Then,
\bea
\sum\limits_{x \in  \mathcal{X}  \setminus (P \cup P^*) }   d^2 (\phi( {\rm cent}  (N^*_q (s^*_x) )), x)
& \le &
\sum\limits_{x \in  \mathcal{X}  \setminus (P \cup P^*) }
\left(
2 {\rm cost}^*_c (x) +  {\rm cost}_c (x)
\right) +
\nn\\
&&
2 \sqrt{  \sum\limits_{x \in  \mathcal{X}  \setminus (P \cup P^*) }  {\rm cost}^*_c (x) }
\cdot  \sqrt{ \sum\limits_{x \in  \mathcal{X}  \setminus (P \cup P^*) }  {\rm cost}_c (x) }.
%\label{eq3-lem-ub-cluster form}
\eea
\end{lemma}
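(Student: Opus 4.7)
The plan is to decompose $\cX \setminus (P \cup P^*)$ across the adapted clusters $\{N^*_q(s^*)\}_{s^* \in S^*}$ and to bound each cluster's contribution by combining the Centroid Lemma (Lemma~\ref{lem2.1}) with the triangle inequality. Since $\{N^*(s^*)\}_{s^* \in S^*}$ partitions $\cX \setminus P^*$ and $N^*_q(s^*) = N^*(s^*) \setminus P$, the family $\{N^*_q(s^*)\}_{s^* \in S^*}$ partitions $\cX \setminus (P \cup P^*)$. The left-hand side of the claimed inequality therefore rewrites as $\sum_{s^* \in S^*} d^2(\phi(\cent(N^*_q(s^*))),\, N^*_q(s^*))$, so it suffices to bound each per-cluster sum and then add.

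Fix $s^*$ and abbreviate $c_{s^*} := \cent(N^*_q(s^*))$. I would first apply Lemma~\ref{lem2.1} with $D = N^*_q(s^*)$ and $c = \phi(c_{s^*})$ to obtain
$$d^2(\phi(c_{s^*}),\, N^*_q(s^*)) = d^2(c_{s^*},\, N^*_q(s^*)) + |N^*_q(s^*)| \cdot d^2(c_{s^*},\, \phi(c_{s^*})).$$
The first summand is handled by invoking Lemma~\ref{lem2.1} a second time with $c = s^*$, which yields $d^2(c_{s^*},\, N^*_q(s^*)) \le d^2(s^*,\, N^*_q(s^*)) = \sum_{x \in N^*_q(s^*)} \cost^*_c(x)$, using that every $x \in N^*_q(s^*) \subseteq N^*(s^*)$ satisfies $s^*_x = s^*$. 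For the second summand I would exploit $\phi(c_{s^*}) = \arg\min_{s \in S} d(c_{s^*}, s)$: for every $x \in N^*_q(s^*)$,
$$d(c_{s^*}, \phi(c_{s^*})) \le d(c_{s^*}, s_x) \le d(c_{s^*}, x) + d(x, s_x)$$
by the triangle inequality. Squaring, summing over $x \in N^*_q(s^*)$, and expanding $(a+b)^2 = a^2 + 2ab + b^2$ gives
$$|N^*_q(s^*)| \cdot d^2(c_{s^*}, \phi(c_{s^*})) \le \sum_{x \in N^*_q(s^*)} \left[ d^2(c_{s^*}, x) + 2\, d(c_{s^*}, x)\, d(x, s_x) + \cost_c(x) \right].$$

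Summing these per-cluster bounds over $s^* \in S^*$, the pure $d^2(c_{s^*}, x)$ terms recombine into $\sum_{s^*} d^2(c_{s^*},\, N^*_q(s^*))$, which is again bounded by $\sum_{x} \cost^*_c(x)$ by the centroid-optimality argument applied globally, while the $\cost_c(x)$ terms sum directly to $\sum_x \cost_c(x)$. For the cross term I would apply the Cauchy--Schwarz inequality over the paired sum indexed by $x \in \cX \setminus (P \cup P^*)$, with the centroid selected by $s^*_x$; this produces $2 \sqrt{\sum_x d^2(c_{s^*_x},x)} \cdot \sqrt{\sum_x d^2(x,s_x)} \le 2 \sqrt{\sum_x \cost^*_c(x)} \cdot \sqrt{\sum_x \cost_c(x)}$. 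Combining the contributions (two copies of $\sum_x \cost^*_c(x)$, one copy of $\sum_x \cost_c(x)$, and the cross term) yields the claimed inequality exactly. The main subtlety will be the bookkeeping for the cross-cluster Cauchy--Schwarz step and the dual use of the centroid-optimality bound $d^2(c_{s^*},\, N^*_q(s^*)) \le \sum \cost^*_c$ (once as the first summand of the Centroid Lemma, once for the pure-square part of the expansion); once those are aligned, the remaining manipulations are routine.
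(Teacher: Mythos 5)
Your proposal is correct and follows essentially the same route as the paper's proof: decompose over the adapted clusters, apply the Centroid Lemma with $c=\phi(\cent(N^*_q(s^*)))$, bound $d(\cent,\phi(\cent))\le d(\cent,s_x)\le d(\cent,x)+d(x,s_x)$, and finish with Cauchy--Schwarz on the cross term, picking up two copies of $\cost_c^*$ and one of $\cost_c$. The only cosmetic difference is that you obtain $d^2(\cent(D),D)\le d^2(s^*,D)$ by a second application of the Centroid Lemma rather than by the minimality in the definition of $\cent_{\cC}(\cdot)$, which is an equivalent step.
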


\proof
With the Cauchy-Schwarz inequality,
we obtain
\bea
& & \sum\limits_{s^* \in S^*}  \sum\limits_{x \in  N^*_q (s^* ) }  d  (x, \cent_{\cC}(N^*_q (s^*) )  ) \cdot   d   (x,   s_x )
\nn \\
& \le &  \sqrt{ \sum\limits_{s^* \in S^*}  \sum\limits_{x \in  N^*_q (s^* ) }  d^2  (x, \cent_{\cC}(N^*_q (s^*) )  ) }
\cdot  \sqrt{ \sum\limits_{s^* \in S^*}  \sum\limits_{x \in  N^*_q (s^* ) }   d^2   (x,   s_x ) }.
\label{eq2-lem-ub-cluster form}
\eea

Lemma \ref{lem2.1}  and the definition of $\phi( \cdot )$ then yield
\bea
& &  \sum\limits_{x \in  \mathcal{X}  \setminus (P \cup P^*) }   d^2 (\phi( \cent_{\cC}( N^*_q (s^*_x) )), x)
\nn \\
& =  &
\sum\limits_{s^* \in S^*}   \sum\limits_{x \in  N^*_q (s^*) }   d^2 (\phi( \cent_{\cC}(  N^*_q (s^*) )), x)
\nn \\
& =  &
\sum\limits_{s^* \in S^*}  \left[
d^2 (\cent_{\cC}(N^*_q (s^*) ), N^*_q (s^*)  )
+ | N^*_q (s^*) | \cdot   d^2 (\cent_{\cC}(  N^*_q (s^*) ),  \phi (\cent_{\cC}(  N^*_q (s^*) )) )
\right]
\nn \\
& =  &
\sum\limits_{s^* \in S^*}
d^2 (\cent_{\cC}(  N^*_q (s^*) ), N^*_q (s^*)  )
+ \sum\limits_{s^* \in S^*}  \sum\limits_{x \in  N^*_q (s^* ) }    d^2 (\cent_{\cC}(  N^*_q (s^*) ),  \phi (\cent_{\cC}(  N^*_q (s^*) )) )
\nn  \\
& \le &
\sum\limits_{s^* \in S^*}
d^2 (\cent_{\cC}(  N^*_q (s^*) ), N^*_q (s^*)  )
+ \sum\limits_{s^* \in S^*}  \sum\limits_{x \in  N^*_q (s^* ) }    d^2 (\cent_{\cC}(  N^*_q (s^*) ),   s_x ).
\label{eq4-lem-ub-cluster form}
\eea
Using the triangle inequality for $ d  (\cdot, \cdot)$, we obtain
\bea
& &
\sum\limits_{s^* \in S^*}  \sum\limits_{x \in  N^*_q (s^* ) }    d^2 (\cent_{\cC}(  N^*_q (s^*) ),   s_x )
\nn \\
& \le &
\sum\limits_{s^* \in S^*}  \sum\limits_{x \in  N^*_q (s^* ) }
\left(
d  (x, \cent_{\cC}(  N^*_q (s^*) )  )
+ d  (x,   s_x )
\right)^2
\nn \\
& =&
\sum\limits_{s^* \in S^*}
d^2 (\cent_{\cC}(N^*_q (s^*) ), N^*_q (s^*)  )
+  \sum\limits_{s^* \in S^*}  \sum\limits_{x \in  N^*_q (s^* ) }  d^2  (x,   s_x )
\nn \\
& &
+\ 2 \sum\limits_{s^* \in S^*}  \sum\limits_{x \in  N^*_q (s^* ) }  d  (x, \cent_{\cC}(  N^*_q (s^*) )  ) \cdot   d   (x,   s_x ).
\label{eq5-lem-ub-cluster form}
\eea
Integrating  \reff{eq2-lem-ub-cluster form}-\reff{eq5-lem-ub-cluster form} and using the definition of $\cent_{\cC}( \cdot)$ then gives
\bea
& &
\sum\limits_{x \in  \mathcal{X}  \setminus (P \cup P^*) }   d^2 (\phi( \cent_{\cC}(  N^*_q (s^*_x) )), x)
\nn \\
& \le &
2 \sum\limits_{s^* \in S^*}
d^2 (\cent_{\cC}(  N^*_q (s^*) ), N^*_q (s^*)  )
+  \sum\limits_{s^* \in S^*}  \sum\limits_{x \in  N^*_q (s^* ) }  d^2  (x,   s_x )
\nn \\
& &
+\ 2 \sqrt{ \sum\limits_{s^* \in S^*}  \sum\limits_{x \in  N^*_q (s^* ) }  d^2  (x, \cent_{\cC}(  N^*_q (s^*) )  ) }
\cdot  \sqrt{ \sum\limits_{s^* \in S^*}  \sum\limits_{x \in  N^*_q (s^* ) }   d^2   (x,   s_x ) }
\nn  \\
& \le &
2 \sum\limits_{s^* \in S^*}
d^2 (s^*, N^*_q (s^*)  )
+  \sum\limits_{s^* \in S^*}  \sum\limits_{x \in  N^*_q (s^* ) }  d^2  (x,   s_x )
\nn \\
& &
+\ 2 \sqrt{ \sum\limits_{s^* \in S^*}  \sum\limits_{x \in  N^*_q (s^* ) }  d^2  (x, s^* ) }
\cdot  \sqrt{ \sum\limits_{s^* \in S^*}  \sum\limits_{x \in  N^*_q (s^* ) }   d^2   (x,   s_x ) }
\nn  \\
& = &
\sum\limits_{x \in  \mathcal{X}  \setminus (P \cup P^*) }
\left(
2 {\rm cost}^*_c (x) +  {\rm cost}_c (x)
\right)
+ 2 \sqrt{  \sum\limits_{x \in  \mathcal{X}  \setminus (P \cup P^*) }  {\rm cost}^*_c (x) }
\cdot  \sqrt{ \sum\limits_{x \in  \mathcal{X}  \setminus (P \cup P^*) }  {\rm cost}_c (x) }.
\nn
\eea
Note that $\cent_{\cC}(\cdot) = \cent(\cdot)$ for $k$-MeaP. So we complete the proof.
\endproof

Consider now $ \phi (\tilde S^*  )$, i.e., the image set of $\tilde{S^*}$ under $\phi$.
We list all elements of $\phi (\tilde S^*)$ as $ \phi (\tilde S^*) = \{s_1, ..., s_m \}$
where  $ m:= | \phi (\tilde S^*) |$.
For each  $  l \in \{1, ..., m \} $,
let
$ S_l : = \{ s_l\}$
and
$ S^*_l : =   \{ s^* \in S^* | \phi (\cent_{\cC}(N^*_q ($ $s^*) ))= s_l  \}   $.
Thus, $S^*$ is partitioned into  $S^*_1,S^*_2, ..., S^*_m $.
Noting that $|S|= |S^*|=k $,
we can enlarge each $S_l$
such that $S_1,S_2, ..., S_m $ is a partition of $S$
with $ |S_l| = |S^*_l | $ for each $ l \in \{ 1,2,...,m \}$.

We will construct a swap operation  between the points in $S_l$ and $S^*_l$ for each pair $(S_l,S^*_l)$.
Before doing this, we note that a center $s^* \in S^*$ need not belong to the candidate center set $\cC'$ for $k$-MeaP. Thus, we introduce a center $ {\hat s^*} \in \mathcal{C} $
associated with each $s^* \in S^*$ to ensure that the swap operation involved in $s^*$
can be implemented in Algorithm \ref{alg-multi-swap}.
For each $s^* \in S^*$, let ${\hat s^*}: = \arg\min_{c \in \cC'} d (c,N^* (s^*))$.
Combined with Definition \ref{defi1},
we have (see \citealt{zhwxz})
\bea \label{ieq-approx-cent}
\sum\limits_{x \in N^* (s^*)} d^2 ({\hat s^*}, x)
& = &  d^2 ({\hat s^*},N^* (s^*))
  =    \min\limits_{c \in \mathcal{C} } d^2 (c,N^* (s^*))
\nn \\
& \le &
(1+ {\hat \varepsilon})  \min\limits_{c \in \mathbb{R}^d } d^2 (c,N^* (s^*))
  =
(1+ {\hat \varepsilon})  d^2 (s^*,N^* (s^*))
\nn \\
& = & (1+ {\hat \varepsilon}) \sum\limits_{x \in N^* (s^*)} d^2 (s^*,x).
\eea

The algorithm allows at most $\rho$ points to be swapped. To satisfy this condition, we consider the following two cases to construct swap operations (cf. Figure \ref{fig:swap1} for $\rho=3$).

\begin{figure}[t]
\centering
  \subfigure[$|S_l| \le \rho$]{
    \label{fig:multiswap} %% label for first subfigure
    \includegraphics[height=3.0cm]{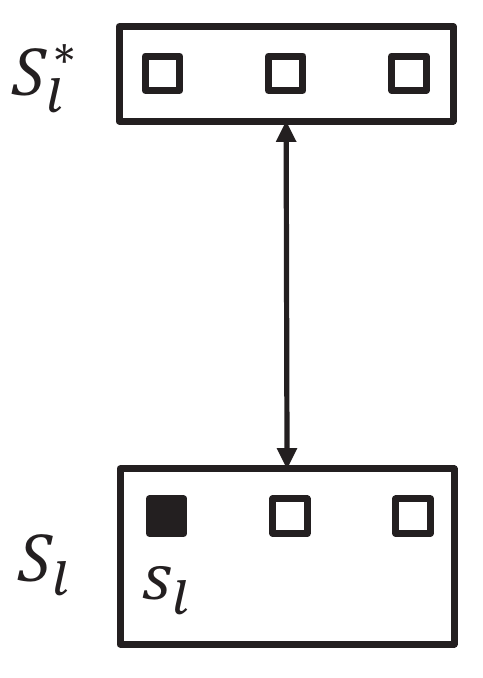}}
  \hspace{2.5cm}
  \subfigure[$|S_l| > \rho$]{
    \label{fig:singleswap1} %% label for first subfigure
    \includegraphics[height=3.0cm]{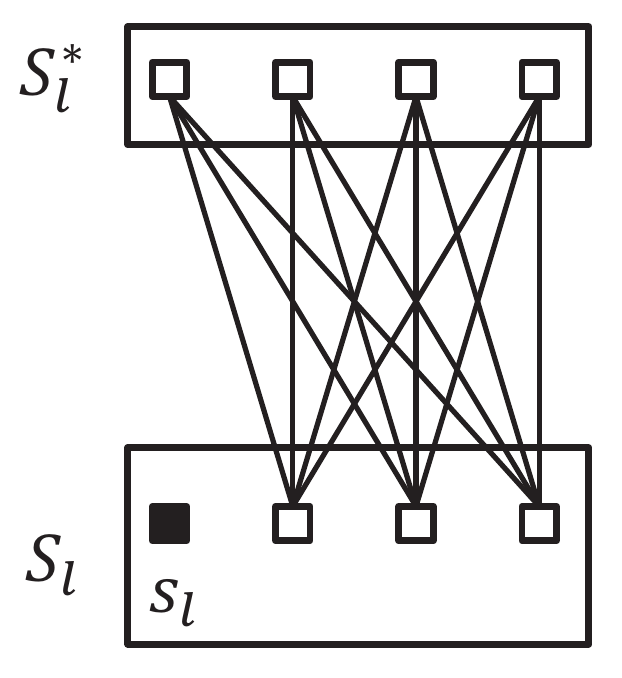}}
\caption{Two cases for constructing the swap operations between $S_l$ and $S_l^*$ for $\rho=3$. The solid squares belong to $\phi(\tilde{S}^*)$. }
\label{fig:swap1}
\end{figure}

\begin{description}
\item[Case 1](cf. Figure \ref{fig:multiswap}). For each $l$ with $ |S_l| = |S^*_l| \le \rho$,
we consider the pair $(S_l, S^*_l)$.
Let  ${\hat S^*_l}: = \{ \hat s^* |   s^* \in S^*_l \} $.
W.l.o.g., we assume that ${\hat S^*_l} \subseteq  \cX \setminus S $.
For $k$-MedP, we consider the swap$(S_l,S^*_l)$; for $k$-MeaP, we consider the swap$(S_l, {\hat S^*_l}) $.
Utilizing  these swap operations, we obtain the following result.
\begin{lemma}\label{lem-penalty-case1}
If $ |S_l| = |S^*_l| \le \rho$, then we have
\bea
0 & \le &
\sum\limits_{s \in  S_l }  \sum\limits_{x \in  N(s)\cap P^*} (p_x - {\rm cost}_c (x))+
\sum\limits_{s \in  S_l }  \sum\limits_{x \in  N(s) \setminus  P^* }
\left(  d(\phi( \cent_{\cC}(N^*_q (s^*_x) )), x)  - {\rm cost}_c (x) \right)+
\nn\\
&&
\sum\limits_{s^* \in  S_l^* } \sum\limits_{x \in   N^* (s^*) \setminus  P  } ({\rm cost}_c^* (x)  - {\rm cost}_c (x))+
\sum\limits_{s^* \in  S_l^* } \sum\limits_{x \in   N^* (s^*)  \cap  P  } ( {\rm cost}_c^* (x)  - p_x). \label{ieq1-lem-penalty-case1}
\eea
for $k$-MedP, and
\bea
0 & \le &
\sum\limits_{s \in  S_l }  \sum\limits_{x \in  N(s)\cap P^*} (p_x - {\rm cost}_c (x))+
\sum\limits_{s \in  S_l }  \sum\limits_{x \in  N(s) \setminus  P^* }
\left(  d^2 (\phi( \cent_{\cC}(N^*_q (s^*_x) )), x)  - {\rm cost}_c (x) \right)+
\nn\\
&&
\sum\limits_{s^* \in  S_l^* } \sum\limits_{x \in   N^* (s^*) \setminus  P  } ((1+ {\hat \varepsilon})  {\rm cost}_c^* (x)  - {\rm cost}_c (x))+
\sum\limits_{s^* \in  S_l^* } \sum\limits_{x \in   N^* (s^*)  \cap  P  } ( (1+ {\hat \varepsilon}) {\rm cost}_c^* (x)  - p_x). \label{ieq2-lem-penalty-case1}
\eea
for $k$-MeaP.
\end{lemma}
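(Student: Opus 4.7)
The plan is to invoke the local optimality of $S$ via a single test swap: $\mathrm{swap}(S_l, S_l^*)$ for $k$-MedP, and $\mathrm{swap}(S_l, \hat S_l^*)$ for $k$-MeaP. Since $S$ is locally optimal we must have $\cost(S)\le\cost((S\setminus S_l)\cup T)$, where $T=S_l^*$ or $\hat S_l^*$. The post-swap cost can be upper bounded by the cost of any feasible point-to-center reassignment $\sigma$ together with an auxiliary outlier set, so the idea is to design $\sigma$ carefully, then subtract the contributions of the unchanged points and read off the stated inequality.

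The reassignment I would use is as follows. First, every $x\in N^*(s^*)$ with $s^*\in S_l^*$ is routed to $s^*$ (to $\hat s^*$ in the $k$-MeaP case); this single rule handles both $x\in P$ and $x\notin P$ and, upon subtracting the current per-point cost ($p_x$ or $\cost_c(x)$), produces exactly the third and fourth sums of \reff{ieq1-lem-penalty-case1}/\reff{ieq2-lem-penalty-case1}. Second, every $x\in N(s)\cap P^*$ with $s\in S_l$ is dropped into the outlier set, yielding the first sum with contribution $p_x-\cost_c(x)$. Third, for every $x\in N(s)\setminus P^*$ with $s\in S_l$ whose optimal cluster lies outside $S_l^*$ (``Case B''), I set $\sigma(x):=\phi(\cent_{\cC}(N^*_q(s_x^*)))$; by the construction of the partition $\{S_l^*\}$, this center lies in $S\setminus S_l$ and so survives the swap, contributing the Case B part of the second sum. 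All remaining points keep their current status and contribute zero to the change.

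Summing the per-point cost differences and invoking local optimality gives the first sum, the third sum, the fourth sum, and the Case B part of the second sum. To recover the statement, one must extend the second sum to all of $N(s)\setminus P^*$, that is, include also the ``Case A'' indices with $s_x^*\in S_l^*$. These extra Case A terms are nonnegative, because for such $x$ the definition of $S_l^*$ forces $\phi(\cent_{\cC}(N^*_q(s_x^*)))=s_l\in S$, so $d(s_l,x)\ge\min_{s\in S}d(s,x)=d(s_x,x)=\cost_c(x)$ (and squaring for $k$-MeaP). For the $k$-MeaP case, the replacement $s^*\mapsto \hat s^*$ in step one inflates the third- and fourth-sum contributions by at most $(1+\hat\veps)$ via \reff{ieq-approx-cent} applied to the entire cluster $N^*(s^*)$ and then split across its $\setminus P$ and $\cap P$ parts. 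The one genuinely delicate step is this Case A over-count: for $x$ with $s_x\in S_l$ and $s_x^*\in S_l^*$ the actual reassignment is to $s^*$ (already charged in the third sum), while the stated second sum also charges $d(\phi(\cdot),x)-\cost_c(x)$; the nonnegativity observation above shows that this is a harmless loosening rather than a double count, and everything else is bookkeeping over the choices $s^*\in S_l^*$ and $s\in S_l$.
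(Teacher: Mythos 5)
Your proposal is correct and follows essentially the same route as the paper: test the local optimum against ${\rm swap}(S_l,S_l^*)$ (resp.\ ${\rm swap}(S_l,\hat S_l^*)$), penalize $N(s)\cap P^*$, route $N^*(s^*)$ to $s^*$ (resp.\ $\hat s^*$, with \reff{ieq-approx-cent} applied clusterwise to get the $(1+\hat\varepsilon)$ factor), and send the remaining points of $N(s)$ to $\phi(\cent_{\cC}(N^*_q(s^*_x)))\in S\setminus S_l$. Your explicit justification that the ``Case A'' terms added to extend the second sum to all of $N(s)\setminus P^*$ are nonnegative (since $\phi(\cdot)\in S$ implies $d(\phi(\cdot),x)\ge \cost_c(x)$) is a detail the paper leaves implicit, and it is handled correctly.
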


\item[Case 2](cf. Figure \ref{fig:singleswap1}). For each $l$ with $ |S_l| = |S^*_l| = m_l > \rho$,
we consider $ (m_l-1)m_l$ pairs $(s, s^*)$ with
$s \in S_l \backslash \{ s_l\} $ and $ s^* \in S^*_l$.
For $k$-MedP, we consider the swap$(s, s^*)$; for $k$-MeaP, we consider the swap$(s, {\hat s^*})$.
Utilizing  these swap operations, we obtain the following result.
\begin{lemma}\label{lem-penalty-case2}
For any $s \in S_l \backslash \{ s_l\} $ and $ s^* \in S^*_l$, we have
\bea
0 & \le &
\sum\limits_{x \in  N(s)\cap P^*} (p_x - {\rm cost}_c (x))+
\sum\limits_{x \in  N(s) \setminus  P^* }
\left(  d(\phi( \cent_{\cC}(N^*_q (s^*_x) )), x)  - {\rm cost}_c (x) \right)+
\nn \\
& &
\sum\limits_{x \in   N^* (s^*) \setminus  P  } ({\rm cost}_c^* (x)  - {\rm cost}_c (x))+
\sum\limits_{x \in   N^* (s^*)  \cap  P  } ({\rm cost}_c^* (x)  - p_x)
\label{ieq1-lem-penalty-case2}
\eea
for $k$-MedP, and
\bea
0 & \le &
\sum\limits_{x \in  N(s)\cap P^*} (p_x - {\rm cost}_c (x)) +
\sum\limits_{x \in  N(s) \setminus  P^* }
\left(  d^2 (\phi( \cent_{\cC}(N^*_q (s^*_x) )), x)  - {\rm cost}_c (x) \right) +
\nn \\
&&
\sum\limits_{x \in   N^* (s^*) \setminus  P  } ((1+ {\hat \varepsilon})  {\rm cost}_c^* (x)  - {\rm cost}_c (x)) +
\sum\limits_{x \in   N^* (s^*)  \cap  P  } ( (1+ {\hat \varepsilon}) {\rm cost}_c^* (x)  - p_x)
\label{ieq2-lem-penalty-case2}
\eea
for $k$-MeaP.
\end{lemma}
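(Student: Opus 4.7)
The plan is to exploit local optimality of $S$ under the swap ${\rm swap}(s,s^*)$ (for $k$-MedP) or ${\rm swap}(s,\hat s^*)$ (for $k$-MeaP), which both yield $\cost(S')\geq \cost(S)$ where $S':=S\setminus\{s\}\cup\{s^*\}$ (resp.\ $S\setminus\{s\}\cup\{\hat s^*\}$). I would then upper bound $\cost(S')$ by exhibiting an explicit but possibly non-optimal reassignment of points and penalty set. A key structural observation I would use is that $s\in S_l\setminus\{s_l\}$ and that $S_1,\dots,S_m$ partition $S$, so $s$ is not among the captured centers $s_1,\dots,s_m$, and therefore every value $\phi(\cent_{\cC}(N^*_q(\cdot)))$ still lies in $S'$ and can safely serve as a reassignment target.

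The concrete reassignment I use is: every $x\in N^*(s^*)$ is moved to $s^*$ (or $\hat s^*$), and removed from the penalty set if it was in $P$; every $x\in N(s)\cap P^*$ is placed into the new penalty set at cost $p_x$; every $x\in (N(s)\setminus P^*)\setminus N^*(s^*)$ is reassigned to $\phi(\cent_{\cC}(N^*_q(s^*_x)))$; and every other point keeps its current assignment or penalty. Because $N^*(s^*)\cap P^*=\emptyset$ by the definition of $N^*$, these three altered subsets are pairwise disjoint, and a point-by-point accounting of the cost change yields exactly three of the four sums on the right hand side of \reff{ieq1-lem-penalty-case2}, together with a fourth sum over $(N(s)\setminus P^*)\setminus N^*(s^*)$ in place of $N(s)\setminus P^*$.

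To match the statement of the lemma exactly, I would enlarge this last sum to $N(s)\setminus P^*$; each extra term, indexed by $x\in N(s)\cap N^*(s^*)$, equals $d(\phi(\cent_{\cC}(N^*_q(s^*_x))),x)-\cost_c(x)\geq 0$, since $s_x=s$ is the closest center of $S$ to $x$ and $\phi(\cent_{\cC}(N^*_q(s^*_x)))\in S$. Enlarging therefore only weakens the inequality in a harmless direction, and combining with $\cost(S')-\cost(S)\geq 0$ produces exactly \reff{ieq1-lem-penalty-case2}. For $k$-MeaP the same reassignment scheme applies with $\hat s^*$ and squared distances; the pointwise factor $(1+\hat\veps)$ appears after aggregating $\sum_{x\in N^*(s^*)}d^2(\hat s^*,x)\leq (1+\hat\veps)\sum_{x\in N^*(s^*)}\cost_c^*(x)$ via \reff{ieq-approx-cent} and then redistributing across the two subsets $N^*(s^*)\setminus P$ and $N^*(s^*)\cap P$, which yields \reff{ieq2-lem-penalty-case2}.

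The main obstacle I foresee is handling the overlap $N(s)\cap N^*(s^*)$ cleanly: the actual reassignment assigns each such $x$ a single new cost (via the $N^*(s^*)$ rule), yet the target right hand side formally counts $x$ twice---once in the $N^*(s^*)\setminus P$ sum and once in the $N(s)\setminus P^*$ sum. This is precisely where the nonnegativity of $d(\phi(\cdot),x)-\cost_c(x)$ rescues the bound; the remainder of the argument is routine per-point bookkeeping using $N^*(s^*)\subseteq \cX\setminus P^*$ and $N(s)\subseteq \cX\setminus P$, plus (for $k$-MeaP) the aggregation step through \reff{ieq-approx-cent}.
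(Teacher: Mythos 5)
Your proposal is correct and matches the paper's own proof essentially step for step: both exploit local optimality under ${\rm swap}(s,s^*)$ (resp.\ ${\rm swap}(s,\hat s^*)$), use the fact that $s\notin\{s_1,\dots,s_m\}$ so that the $\phi$-targets survive the swap, perform the same three-part reassignment, enlarge the sum over $(N(s)\setminus P^*)\setminus N^*(s^*)$ to $N(s)\setminus P^*$ via the nonnegativity of $d(\phi(\cdot),x)-\cost_c(x)$, and invoke \reff{ieq-approx-cent} to introduce the $(1+\hat\varepsilon)$ factor for $k$-MeaP.
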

\end{description}

Lemma \ref{lem-penalty-case1} shows a relationship between the sets $S_l$ and $S^*_l$, while Lemma \ref{lem-penalty-case2} shows a relationship between two points in $S_l$ and $S^*_l$ respectively. We remark that Lemma \ref{lem-penalty-case2} holds for all pairs $(S_l,S^*_l)$ (no matter whether $|S_l| > \rho$). This is useful for the analysis of the algorithm for $k$-MedO/$k$-MeaO in Section \ref{sec:outlier}.

\begin{proof}[Proof of Lemma \ref{lem-penalty-case1}]
We only prove it for $k$-MeaP. The proof for $k$-MedP is similar.
After the operation swap$(S_l, {\hat S^*_l})$, we penalize all points in $N(s) \cap P^*$ for all $s \in S_l$, reassign each point $x \in N^*(s^*)$ to $\hat{s}^*$ for all $s^* \in S^*_l$, and reassign  each point $x \in  N(s)  \setminus \left( \bigcup_{s^* \in S^*_l} N^* (s^*) \cup P^*\right)$ to $\phi(\cent_{\cC}(N^*_q (s^*_x) ))$ for all $s \in S_l$ ($s^*_x \notin S^*_l$ implies $\phi(\cent_{\cC}(N^*_q (s^*_x) )) \notin S_l$).
Since the  operation swap$(S_l, {\hat S^*_l}) $ does not improve the local optimal solution $S$,
we have
\bea
0 & \le &
{\rm cost}(S \setminus \{ S_l \} \cup \{ \hat S^*_l \} ) - {\rm cost}(S)
\nn \\
& \le &
\sum\limits_{s \in S_l } \sum\limits_{x \in  N(s)\cap P^*} (p_x - {\rm cost}_c (x)) +
\nn\\
&&
\sum\limits_{s \in  S_l } \sum\limits_{x \in  N(s)  \setminus \left( \bigcup_{s^* \in S^*_l} N^* (s^*) \cup P^*\right)   }
\left( d^2 (\phi( \cent_{\cC}(N^*_q (s^*_x) )), x)   - {\rm cost}_c (x) \right)+
\nn \\
& &
\sum\limits_{s^* \in  S^*_l }  \sum\limits_{x \in   N^* (s^*) \setminus  P  } ( d^2 ({\hat s^*}, x)  - {\rm cost}_c (x))
+ \sum\limits_{s^* \in  S^*_l }  \sum\limits_{x \in   N^* (s^*)  \cap  P  } ( d^2 ({\hat s^*}, x)  - p_x). \nn\
\eea
Combining this with $\sum_{x \in  N^* (s^*)} = \sum_{x \in   N^* (s^*) \setminus P} + \sum_{x \in   N^* (s^*)  \cap  P}$ and inequality \reff{ieq-approx-cent} completes the proof.
\end{proof}

\begin{proof}[Proof of Lemma \ref{lem-penalty-case2}]
We again only prove it for $k$-MeaP, and the proof for $k$-MedP is again similar.
Recall the definition of ${\hat s^*} $.
W.l.o.g., we assume that ${\hat s^*} \notin S$. It follows from $s \neq s_l$ and $\phi(\cent_{\cC}(N^*_q(s^*)))=s_l$ that
$ \phi( \cent_{\cC}(N^*_q (s^*_x) ))$ $\ne  s $ when $ x \in  N (s) \setminus ( N^* (s^*) \cup P^*) $.
Since the  operation swap$(s, {\hat s^*})$ does not improve the current solution $S$,
we have
\bea
0 & \le &
{\rm cost}(S \setminus \{ s \} \cup \{ \hat s^* \} ) - {\rm cost}(S)
\nn \\
& \le &
\sum\limits_{x \in  N (s)\cap P^*} (p_x - {\rm cost}_c (x))
+ \sum\limits_{x \in  N (s) \setminus ( N^* (s^*) \cup P^*) }
\left( d^2 (\phi( \cent_{\cC}(N^*_q (s^*_x) )), x)   - {\rm cost}_c (x) \right)
\nn \\
& &
+ \sum\limits_{x \in   N^* (s^*) \setminus  P  } ( d^2 ({\hat s^*}, x)  - {\rm cost}_c (x))
+ \sum\limits_{x \in   N^* (s^*)  \cap  P  } ( d^2 ({\hat s^*}, x)  - p_x)
\nn \\
& \le &
\sum\limits_{x \in  N(s)\cap P^*} (p_x - {\rm cost}_c (x))
+ \sum\limits_{x \in  N(s) \setminus  P^* }
\left(  d^2 (\phi( \cent_{\cC}(N^*_q (s^*_x) )), x)  - {\rm cost}_c (x) \right)
\nn \\
& &
+ \sum\limits_{x \in   N^* (s^*) \setminus  P  } ((1+ {\hat \varepsilon})  {\rm cost}_c^* (x)  - {\rm cost}_c (x))
+ \sum\limits_{x \in   N^* (s^*)  \cap  P  } ( (1+ {\hat \varepsilon}) {\rm cost}_c^* (x)  - p_x).\nn
\eea
This completes the proof.
\end{proof}

Combining Lemmas \ref{lem-penalty-case1} and \ref{lem-penalty-case2}, we estimate the cost of $S$ for $k$-MedP and $k$-MeaP in the following two theorems respectively.
\begin{theorem} \label{thm-multi-swap-kmedp}
%Given $ \varepsilon >0$, set $\rho = 2/\veps$.
{\rm LS-Multi-Swap($\cX,\cF,k,\{p_j\}_{j \in \cX},\rho$)} for $k$-MedP
produces  a local optimal solution $S$ satisfying
%${\rm cost}_c +  {\rm cost}_p \le  (3+  \varepsilon )  {\rm cost}_c^*
%+  ( 1+   {\varepsilon}/{2}  )  {\rm cost}_p^*$.
${\rm cost}_c +  {\rm cost}_p \le  (3+ 2/\rho)  {\rm cost}_c^*
+  ( 1+  1/\rho )  {\rm cost}_p^*$.
\end{theorem}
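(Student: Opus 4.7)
The proof combines the swap inequalities of Lemmas~\ref{lem-penalty-case1} and~\ref{lem-penalty-case2} with a $k$-median analogue of the cluster-form bound of Lemma~\ref{lem-ub-cluster form}. The strategy is to sum weighted copies of these inequalities so that (i) every $s^*\in S^*$ contributes its $N^*(s^*)$-side terms with coefficient exactly $1$, and (ii) every $s\in S$ contributes its $N(s)$-side terms with coefficient at most $1+1/\rho$; the reassignment distance $d(\phi(\cent_{\cF}(N_q^*(s_x^*))),x)$ appearing on the right is then tamed by the cluster-form bound, and everything else reduces to a rearrangement of connection/penalty costs across the four classes $P\cap P^*$, $P\setminus P^*$, $P^*\setminus P$, $\cX\setminus(P\cup P^*)$.

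Partition $S^*$ as $S_1^*,\dots,S_m^*$ via the mapping $\phi$ and $S$ as $S_1,\dots,S_m$ with $|S_l|=|S_l^*|=:m_l$ and $s_l\in S_l$. If $m_l\le\rho$, invoke inequality~\reff{ieq1-lem-penalty-case1} once, with weight $1$. If $m_l>\rho$, invoke inequality~\reff{ieq1-lem-penalty-case2} for every $(s,s^*)\in(S_l\setminus\{s_l\})\times S_l^*$, each with weight $1/(m_l-1)$. A direct count then shows that each $s^*\in S_l^*$ accumulates total weight $(m_l-1)\cdot\tfrac{1}{m_l-1}=1$ on its $N^*(s^*)$-side, while each $s\in S_l\setminus\{s_l\}$ accumulates weight $m_l/(m_l-1)\le 1+1/\rho$ on its $N(s)$-side. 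The center $s_l$ is missing in Case~2, but its would-be contributions $(p_x-\cost_c(x))$ for $x\in N(s_l)\cap P^*$ and $(d(\phi(\cent_{\cF}(N_q^*(s_x^*))),x)-\cost_c(x))$ for $x\in N(s_l)\setminus P^*$ are both nonnegative (from $x\notin P\Rightarrow p_x\ge\cost_c(x)$, and from $\phi(\cent_{\cF}(N_q^*(s_x^*)))\in S$), so I may inflate their coefficient to $1+1/\rho$ without breaking the inequality. Summing and regrouping by membership in $P,P^*$ yields
\begin{align*}
0 \;\le\;& (1+\tfrac{1}{\rho})\sum_{x\in P^*\setminus P}(p_x-\cost_c(x)) + (1+\tfrac{1}{\rho})\sum_{x\in\cX\setminus(P\cup P^*)}\bigl(d(\phi(\cent_{\cF}(N_q^*(s_x^*))),x)-\cost_c(x)\bigr)\\
&+ \sum_{x\in\cX\setminus(P\cup P^*)}(\cost_c^*(x)-\cost_c(x)) + \sum_{x\in P\setminus P^*}(\cost_c^*(x)-p_x).
\end{align*}

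Next I derive the $k$-MedP cluster-form bound (which replaces the Centroid-Lemma step of Lemma~\ref{lem-ub-cluster form}). For each $s^*\in S^*$ and $x\in N_q^*(s^*)$, the triangle inequality gives $d(\phi(\cent_{\cF}(N_q^*(s^*))),x)\le d(\cent_{\cF}(N_q^*(s^*)),x)+d(\cent_{\cF}(N_q^*(s^*)),\phi(\cent_{\cF}(N_q^*(s^*))))$. Summing over $x\in N_q^*(s^*)$, the optimality of $\cent_{\cF}(N_q^*(s^*))$ evaluated against $s^*$ bounds the first sum by $\sum_{x\in N_q^*(s^*)}\cost_c^*(x)$. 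For the second sum, apply the triangle inequality via $s_{x'}\in S$ to get $d(\cent_{\cF}(N_q^*(s^*)),\phi(\cent_{\cF}(N_q^*(s^*))))\le d(\cent_{\cF}(N_q^*(s^*)),x')+\cost_c(x')$; averaging over $x'\in N_q^*(s^*)$ and multiplying by $|N_q^*(s^*)|$ bounds it by $\sum_{x\in N_q^*(s^*)}\cost_c^*(x)+\sum_{x\in N_q^*(s^*)}\cost_c(x)$. Summing over $s^*\in S^*$ and extending the cost sums to the full domains gives
$$\sum_{x\in\cX\setminus(P\cup P^*)}d(\phi(\cent_{\cF}(N_q^*(s_x^*))),x)\;\le\;2\cost_c^*+\cost_c.$$

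Substituting this into the previous display, the bracket $(1+1/\rho)[\sum d(\phi(\cent),x)-\sum\cost_c(x)]$ collapses to at most $(2+2/\rho)\cost_c^*$. Using $\sum_{x\in\cX\setminus P}\cost_c(x)=\cost_c$ and discarding the nonnegative slack $(1/\rho)\sum_{x\in P^*\setminus P}\cost_c(x)$ simplifies the LHS to $\cost_c+\sum_{x\in P\setminus P^*}p_x$; expanding $\sum_{x\in P^*\setminus P}p_x=\cost_p^*-\sum_{x\in P\cap P^*}p_x$ and adding $\sum_{x\in P\cap P^*}p_x$ to both sides rewrites the LHS as $\cost_c+\cost_p$ while leaving only a harmless surplus $-(1/\rho)\sum_{x\in P\cap P^*}p_x\le 0$ on the right, which can be dropped to yield the claimed bound. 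The main obstacle is pinpointing the right Case~2 weight: the choice $1/(m_l-1)$ is precisely what pins the $N^*(s^*)$-coefficients at $1$ (so that $\cost_c^*$ is not inflated beyond $3+2/\rho$) while keeping the $N(s)$-coefficients uniformly bounded by $1+1/\rho$; with the weights calibrated, both the $k$-MedP cluster-form bound and the four-class bookkeeping follow by straightforward manipulations.
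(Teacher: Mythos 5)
Your proposal is correct and follows essentially the same route as the paper's proof: the identical swap constructions and weights ($1$ for $|S_l|\le\rho$, $1/(m_l-1)$ for the $(m_l-1)m_l$ single swaps otherwise), the same use of nonnegativity to uniformize the $N(s)$-side coefficients at $1+1/\rho$, and the same final regrouping over $P,P^*$. Your ``cluster-form'' bound $\sum_{x}d(\phi(\cent_{\cF}(N_q^*(s_x^*))),x)\le 2\cost_c^*+\cost_c$ is exactly the paper's inequalities \reff{ieq2-thm-penalty-kmedp}--\reff{ieq4-thm-penalty-kmedp}, merely derived by averaging the triangle inequality over the cluster rather than applying it pointwise.
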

\begin{theorem} \label{thm-multi-swap-kmeap}
%Given $ \varepsilon >0$, set $\rho = \lceil 3 / (\sqrt{9+ \varepsilon} - 3) \rceil$.
Let $\cC'$ be an $\hat{\veps}$-approximate centroid set for $\cX$. {\rm LS-Multi-Swap($\cX,$ $\cC',k,\{p_j\}_{j \in \cX},\rho$)} for $k$-MeaP
produces  a local optimal solution $S$ satisfying
${\rm cost}_c +  {\rm cost}_p \le  \left( 3 + 2/\rho + {\hat \varepsilon} \right)^2 {\rm cost}_c^* +  \left(3+ 2/\rho + {\hat \varepsilon} \right) \left(1+ 1/\rho \right)
 {\rm cost}_p^*$.
\end{theorem}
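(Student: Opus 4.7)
The plan is to sum the local-optimality inequalities in Lemmas~\ref{lem-penalty-case1} and~\ref{lem-penalty-case2} with carefully chosen weights, invoke Lemma~\ref{lem-ub-cluster form} to convert the aggregate $d^2(\phi(\cent_{\cC}(N^*_q(s^*_x))), x)$ terms into a bound involving ${\rm cost}^*_c$, ${\rm cost}_c$ and $\sqrt{{\rm cost}^*_c\cdot{\rm cost}_c}$, and then solve the resulting quadratic inequality in $\sqrt{{\rm cost}_c}$ to extract the $(3+2/\rho+\hat{\varepsilon})^2$ ratio.

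Concretely, I first sum Lemma~\ref{lem-penalty-case1} with weight $1$ over every $l$ with $|S_l|\le\rho$, and sum Lemma~\ref{lem-penalty-case2} with weight $\tfrac{1}{m_l-1}$ over all $(m_l-1)m_l$ pairs $(s,s^*)$ for every $l$ with $|S_l|>\rho$. The bound $\tfrac{m_l}{m_l-1}\le 1+\tfrac{1}{\rho}$ ensures that every $s\in S$ contributes on the $s$-side with coefficient at most $1+1/\rho$ (and $0$ for the ``captured'' centers $s_l$ of Case~2), while every $s^*\in S^*$ contributes on the $s^*$-side with coefficient exactly~$1$. The points not covered on either side are those $x\in N(s_l)\cap P^*$ for some Case~2 index $l$, and those $x\in P\cap P^*$: for the former, $x\notin P$ forces ${\rm cost}_c(x)<p_x$ with $p_x$ counted in ${\rm cost}^*_p$; for the latter $p_x$ is contributed to both ${\rm cost}_p$ and ${\rm cost}^*_p$. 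Their joint contribution is therefore absorbed in an additive $O({\rm cost}^*_p)$ slack. I then apply Lemma~\ref{lem-ub-cluster form} to bound $\sum_{x\in\cX\setminus(P\cup P^*)} d^2(\phi(\cent_{\cC}(N^*_q(s^*_x))),x) \le 2\,{\rm cost}^*_c+{\rm cost}_c+2\sqrt{{\rm cost}^*_c\cdot{\rm cost}_c}$, passing the approximate-centroid factor $(1+\hat{\varepsilon})$ through via~\reff{ieq-approx-cent}.

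After rearrangement the composite inequality takes the form ${\rm cost}_c+{\rm cost}_p\le A\,{\rm cost}^*_c+B\,{\rm cost}^*_p+C\sqrt{{\rm cost}^*_c\cdot{\rm cost}_c}$ with $A=3+2/\rho+\hat{\varepsilon}$, $B=(1+1/\rho)(3+2/\rho+\hat{\varepsilon})$, and $C=2(1+1/\rho)$, once the $\hat{\varepsilon}$-corrections are collected by slightly enlarging $\hat{\varepsilon}$. Viewed as a quadratic in $y=\sqrt{{\rm cost}_c}$, the clean identity $(2+2/\rho)^2+4(3+2/\rho)=(4+2/\rho)^2$ makes the discriminant factor, and the quadratic formula together with $\sqrt{a+b}\le\sqrt{a}+\sqrt{b}$ yields $\sqrt{{\rm cost}_c}\le(3+2/\rho+\hat{\varepsilon})\sqrt{{\rm cost}^*_c}+\sqrt{(1+1/\rho)\,{\rm cost}^*_p}$; squaring reproduces the claimed inequality.

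The chief obstacle will be the bookkeeping in the first step: for each of the four classes of points determined by whether $x\in P$ and whether $x\in P^*$, one must verify that the $-{\rm cost}_c(x)$ and $-p_x$ contributions accumulate with total coefficient at least~$1$ outside the absorbable uncovered set, and that the weighted $s$-side coefficient $\tfrac{m_l}{m_l-1}$ on the $d^2(\phi(\cent_{\cC}(\cdot)),x)$ term is uniformly bounded by $1+1/\rho$ so that Lemma~\ref{lem-ub-cluster form} applies in aggregate form. Once the correct constants $A$, $B$, $C$ are pinned down, the discriminant computation and quadratic solve are mechanical.
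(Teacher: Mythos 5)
Your first two steps---the weighted summation of Lemmas \ref{lem-penalty-case1} and \ref{lem-penalty-case2} (weight $1$ for Case~1, weight $1/(m_l-1)$ over the $(m_l-1)m_l$ pairs of Case~2, using $m_l/(m_l-1)\le 1+1/\rho$), followed by Lemma \ref{lem-ub-cluster form}---match the paper exactly, and your bookkeeping concern is handled correctly: the uncovered points $x\in N(s_l)\cap P^*$ contribute nonnegative terms $p_x-\cost_c(x)$ because $x\notin P$, so extending the sums over all of $S$ only weakens the inequality. One correction at this stage: the composite inequality should carry the $\cost_p^*$ coefficient $B=1+1/\rho$, not $(1+1/\rho)(3+2/\rho+\hat\veps)$; the extra factor $3+2/\rho+\hat\veps$ only appears in the final answer.

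The genuine gap is in the last step. Treating the inequality as a quadratic in $\sqrt{\cost_c}$ alone forces you to discard the $\cost_p$ term, whereas the theorem bounds $\cost_c+\cost_p$; and even setting that aside, squaring $\sqrt{\cost_c}\le(3+2/\rho+\hat\veps)\sqrt{\cost_c^*}+\sqrt{(1+1/\rho)\cost_p^*}$ produces the cross term $2(3+2/\rho+\hat\veps)\sqrt{(1+1/\rho)\cost_c^*\cost_p^*}$, which cannot be absorbed into the claimed coefficients: any AM--GM split $2\sqrt{uv}\le tu+v/t$ adds a positive multiple of $\cost_c^*$, but the target coefficient $(3+2/\rho+\hat\veps)^2$ of $\cost_c^*$ has no slack left (take $\cost_p^*$ small and $\cost_c^*$ large to see that your squared bound actually exceeds the claimed right-hand side). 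The paper sidesteps both problems by setting up a different quadratic: it first bounds $\sqrt{\cost_c^*\,\cost_c}\le\frac{1}{\sqrt{3+2/\rho+\hat\veps}}\sqrt{\left((3+2/\rho+\hat\veps)\cost_c^*+(1+1/\rho)\cost_p^*\right)\left(\cost_c+\cost_p\right)}$, so the whole inequality becomes a quadratic in $\sqrt{\cost_c+\cost_p}$ whose constant term is the single aggregate quantity $(3+2/\rho+\hat\veps)\cost_c^*+(1+1/\rho)\cost_p^*$; factorizing then gives $\cost_c+\cost_p\le\frac{1}{\beta^2}\left((3+2/\rho+\hat\veps)\cost_c^*+(1+1/\rho)\cost_p^*\right)$ with $1/\beta^2\le 3+2/\rho+\hat\veps$, which multiplies both coefficients uniformly and yields exactly the stated bound after setting $\hat\veps=1/\rho$. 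You would need this substitution, or an equivalent device that keeps $\cost_c+\cost_p$ intact and avoids the residual cross term, to close the argument.
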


\proof[Proof of Theorem \ref{thm-multi-swap-kmedp}.]
Note that $ m_l/(m_l-1) \le (\rho+ 1)/\rho $ and $d(\phi(c), x) \ge {\rm cost}_c (x)$ for any $c \in \tilde{S}^*$ and any $x \in \cX$.
Summing the inequality \reff{ieq1-lem-penalty-case1} with weight $1$
and inequality \reff{ieq1-lem-penalty-case2} with weight $1/(m_l -1)$
over all constructed swap operations, we have
\bea
0 & \le &
\left(1+ \frac{1}{\rho} \right)
\sum\limits_{s \in S} \sum\limits_{x \in  N(s)\cap P^*} (p_x - {\rm cost}_c (x))
\nn \\
& &
+ \left(1+ \frac{1}{\rho} \right)
\sum\limits_{s \in S}  \sum\limits_{x \in  N(s) \setminus  P^* }
\left(  d(\phi( \cent_{\cC}(N^*_q (s^*_x) )), x)   - {\rm cost}_c (x) \right)
\nn \\
& &
+   \sum\limits_{s^* \in S^*}  \sum\limits_{x \in   N^* (s^*) \setminus  P  }
({\rm cost}_c^* (x)  - {\rm cost}_c (x))
\nn \\
& &
+ \sum\limits_{s^* \in S^*}   \sum\limits_{x \in   N^* (s^*)  \cap  P  }
({\rm cost}_c^* (x)  - p_x).
\label{ieq1-thm-penalty-kmedp}
\eea
The triangle inequality and the definition of $\phi(\cdot)$ imply that
\bea
&& d(\phi(\cent_{\cC}(N^*_q (s^*_x))), x) \nn\\
& \le &
d(\phi(\cent_{\cC}(N^*_q (s^*_x))),\cent_{\cC}(N^*_q (s^*_x))) + d(\cent_{\cC}(N^*_q (s^*_x)),x) \nn\\
& \le &
d(s_x,\cent_{\cC}(N^*_q (s^*_x))) + d(\cent_{\cC}(N^*_q (s^*_x)),x) \nn\\
& \le &
d(s_x,x) + d(\cent_{\cC}(N^*_q (s^*_x)),x) + d(\cent_{\cC}(N^*_q (s^*_x)),x) \nn\\
& = &
\cost_c(x) + 2d(\cent_{\cC}(N^*_q (s^*_x)),x).
\label{ieq2-thm-penalty-kmedp}
\eea
Combining inequalities \reff{ieq1-thm-penalty-kmedp} and \reff{ieq2-thm-penalty-kmedp}, we obtain
\bea
0 & \le &
\left(1+ \frac{1}{\rho} \right)
\sum\limits_{s \in S} \sum\limits_{x \in  N(s)\cap P^*} (p_x - {\rm cost}_c (x))
+ \left(1+ \frac{1}{\rho} \right)
\sum\limits_{s \in S}  \sum\limits_{x \in  N(s) \setminus  P^* }
2d(\cent_{\cC}(N^*_q (s^*_x)),x)
\nn \\
& &
+\sum\limits_{s^* \in S^*}  \sum\limits_{x \in   N^* (s^*) \setminus  P  }
({\rm cost}_c^* (x)  - {\rm cost}_c (x))
+ \sum\limits_{s^* \in S^*}   \sum\limits_{x \in   N^* (s^*)  \cap  P  }
({\rm cost}_c^* (x)  - p_x)
\nn\\
& \le &
\left(1+ \frac{1}{\rho} \right)
\sum\limits_{x \in  P^*} p_x
+ \left(1+ \frac{1}{\rho} \right)
\sum\limits_{s^* \in S^*}  \sum\limits_{x \in  N_q^*(s^*)}
2d(\cent_{\cC}(N^*_q (s^*_x)),x)
\nn \\
& &
+ \sum\limits_{x \in \cX \setminus P^*} {\rm cost}_c^* (x)
- \sum\limits_{x \in \cX \setminus P} {\rm cost}_c (x)
- \sum\limits_{x \in P} p_x \nn\\
& = &
\left(1+ \frac{1}{\rho} \right) \cost_p^*
+ \left(1+ \frac{1}{\rho} \right)
\sum\limits_{s^* \in S^*}  \sum\limits_{x \in  N_q^*(s^*)}
2d(\cent_{\cC}(N^*_q (s^*_x)),x)
\nn\\
&&
+\ \cost_c^* - \cost_c - \cost_p.  \label{ieq3-thm-penalty-kmedp}
\eea
From the definitions of $N^*_q(\cdot)$ and $\cent_{\cC}(\cdot)$, we get that
\bea \label{ieq4-thm-penalty-kmedp}
\sum\limits_{s^* \in S^*}  \sum\limits_{x \in  N_q^*(s^*)}
2d(\cent_{\cC}(N^*_q (s^*_x)),x) \le \sum\limits_{s^* \in S^*}  \sum\limits_{x \in  N_q^*(s^*)} 2d(s^*_x,x) \le 2\cost_c^*.
\eea
Finally, we complete the proof by combining  inequalities \reff{ieq3-thm-penalty-kmedp}-\reff{ieq4-thm-penalty-kmedp} for $\rho = 2 / \veps$.
\endproof

\proof[Proof of Theorem \ref{thm-multi-swap-kmeap}.]
Similar to the proof of Theorem \ref{thm-multi-swap-kmedp}, we obtain by summing  inequality \reff{ieq2-lem-penalty-case1} with weight $1$ and inequality \reff{ieq2-lem-penalty-case2} with weight $1/(m_l -1)$ over all constructed swap operations that
\bea
0 & \le &
\left(1+ \frac{1}{\rho} \right)
\sum\limits_{s \in S} \sum\limits_{x \in  N(s)\cap P^*} (p_x - {\rm cost}_c (x))
\nn \\
& &
+ \left(1+ \frac{1}{\rho} \right)
\sum\limits_{s \in S}  \sum\limits_{x \in  N(s) \setminus  P^* }
\left(  d^2 (\phi( \cent_{\cC}(N^*_q (s^*_x) )), x)   - {\rm cost}_c (x) \right)
\nn \\
& &
+   \sum\limits_{s^* \in S^*}  \sum\limits_{x \in   N^* (s^*) \setminus  P  }
((1+ {\hat \varepsilon}) {\rm cost}_c^* (x)  - {\rm cost}_c (x))
\nn \\
& &
+ \sum\limits_{s^* \in S^*}   \sum\limits_{x \in   N^* (s^*)  \cap  P  }
( (1+ {\hat \varepsilon}) {\rm cost}_c^* (x)  - p_x).
\label{ieq1-thm-penalty-kmeap}
\eea
Because of $\sum_{s \in S} \sum_{x \in N(s) \setminus P^*} = \sum_{x \in \cX \setminus (P \cup P^*)}$ and Lemma \ref{lem-ub-cluster form},  the RHS of \reff{ieq1-thm-penalty-kmeap} is not larger than
\bea
&   &
\left(3+ \frac{2}{\rho}+ {\hat \varepsilon}  \right) \sum\limits_{x \in  \mathcal{X} \setminus  P^*  }  {\rm cost}_c^* (x)
-  \sum\limits_{x \in  \mathcal{X} \setminus   P }   {\rm cost}_c (x)
\nn \\
& &
+\ 2 \left(1+ \frac{1}{\rho} \right) \sqrt{ \sum\limits_{x \in  \mathcal{X} \setminus   P^* } {\rm cost}_c^* (x)}
\sqrt{\sum\limits_{x \in  \mathcal{X} \setminus   P   } {\rm cost}_c (x)}
+ \left(1+ \frac{1}{\rho} \right) \sum\limits_{x \in P^*}  p_x
-\sum\limits_{x \in   P  }    p_x
\nn   \\
& = &
\left(3+ \frac{2}{\rho} + {\hat \varepsilon} \right) {\rm cost}_c^* - {\rm cost}_c + 2 \left(1+ \frac{1}{\rho} \right) \sqrt{{\rm cost}_c^* C_s}
+ \left(1+ \frac{1}{\rho} \right)  {\rm cost}_p^*  - {\rm cost}_p
\nn \\
& \le &
\left(\left(3+ \frac{2}{\rho}+ {\hat \varepsilon} \right)  {\rm cost}_c^* + \left(1+ \frac{1}{\rho} \right)  {\rm cost}_p^*  \right)
- \left( {\rm cost}_c +  {\rm cost}_p \right)
\nn \\
& &
+ \frac{2 \left(1+ \frac{1}{\rho} \right) }{\sqrt{ 3+ \frac{2}{\rho} + {\hat \varepsilon} } }
\sqrt{\left(\left(3+ \frac{2}{\rho} + {\hat \varepsilon} \right)  {\rm cost}_c^* + \left(1+ \frac{1}{\rho} \right)  {\rm cost}_p^*  \right)
\left( {\rm cost}_c +  {\rm cost}_p \right)}.
\label{ieq2-thm-penalty-kmeap}
\eea
%Using factorization technique,
The RHS of \reff{ieq2-thm-penalty-kmeap} is equal to
\bea
& & \left(
\sqrt{ \left(3+ \frac{2}{\rho} + {\hat \varepsilon} \right)  {\rm cost}_c^* + \left(1+ \frac{1}{\rho} \right)  {\rm cost}_p^*   }
+ \alpha \sqrt{{\rm cost}_c +  {\rm cost}_p}
\right)
\nn \\
& &
\times
\left(
\sqrt{ \left(3+ \frac{2}{\rho} + {\hat \varepsilon}\right)  {\rm cost}_c^* + \left(1+ \frac{1}{\rho} \right)  {\rm cost}_p^*   }
-
\beta
\sqrt{{\rm cost}_c +  {\rm cost}_p}
\right)
\nn
\eea
where
\bea
\alpha = \frac{1+1/\rho}{\sqrt{3+2/\rho+\hat{\veps}}} + \sqrt{\frac{(1+1/\rho)^2}{3+2/\rho+\hat{\veps}} + 1 - \veps}, \nn\\
\beta = -\frac{1+1/\rho}{\sqrt{3+2/\rho+\hat{\veps}}} + \sqrt{\frac{(1+1/\rho)^2}{3+2/\rho+\hat{\veps}} + 1 - \veps}.\nn
\eea
This implies that
\bea
\sqrt{ \left(3+ \frac{2}{\rho} + {\hat \varepsilon}\right)  {\rm cost}_c^* + \left(1+ \frac{1}{\rho} \right)  {\rm cost}_p^*   }
-
\beta
\sqrt{{\rm cost}_c +  {\rm cost}_p} \ge 0,
\nn
\eea
which is equivalent to
\bea
{\rm cost}_c +  {\rm cost}_p
& \le &
\frac{1}{\beta^2} \left( 3+ \frac{2}{\rho} + {\hat \varepsilon}\right) {\rm cost}_c^*
+ \frac{1}{\beta^2} \left(1+ \frac{1}{\rho} \right) {\rm cost}_p^*.
\nn
\label{ieq3-thm-penalty-kmeap}
\eea
Observe that
$$
\frac{1}{\beta^2} \le 3+\frac{2}{\rho} + \hat{\veps}.
$$
So, we have
\bea
{\rm cost}_c +  {\rm cost}_p
\le
 \left(   3+ \frac{2}{\rho} + {\hat \varepsilon}  \right)^2 {\rm cost}_c^*
+   \left(3+ \frac{2}{\rho} + {\hat \varepsilon} \right)
\left(1+ \frac{1}{\rho} \right)
 {\rm cost}_p^*.
\label{ieq4-thm-penalty-kmeap}
\eea
Substituting ${\hat \varepsilon} = 1 / \rho$ into \reff{ieq4-thm-penalty-kmeap} completes the proof.
\endproof

We remark that Algorithm \ref{alg-multi-swap} can be adapted to a polynomial-time algorithm that only sacrifices   $ \varepsilon $ in the approximation factor (see \citealt{agkmmp}). Combining this adaptation and Theorems \ref{thm-multi-swap-kmedp}-\ref{thm-multi-swap-kmeap}, we obtain a $(3+\veps)$-approximation algorithm for $k$-MedP, and a $(9+\veps)$-approximation algorithm for $k$-MeaP, if $\rho$ and $\hat{\veps}$ are sufficiently small.

\section{Local search algorithm for $k$-MedO/$k$-MeaO}\label{sec:outlier}
In this section, we focus on $k$-MedO and $k$-MeaO. We apply the technique for addressing outliers in a local search algorithm provided by \cite{gklmv} to $k$-MedO and $k$-MeaO, and use our new analysis to improve the approximation ratio.

\subsection{The algorithm}
Each iteration of the outlier-based multi-swap local search algorithm has a no-swap step and a swap step. Supposing that the current solution is $(S,P)$, the no-swap step implements an ``add outliers'' operation that adds the points in $\outlier(S,P)$ (defined in Section \ref{sec:model}) to $P$, if this operation can reduce the cost by a given factor. Then, the swap step searches for a better solution by the multi-swap together with the ``add outliers'' operations. The algorithm terminates when both the no-swap  and the swap step can not reduce the cost by the given factor.

Let ${\rm cost}(S,P)$ denote the the cost of the solution $(S,P)$. We give the formal description of the outlier-based local search algorithm in Algorithm \ref{alg-multi-swap-outlier}.
This algorithm has three parameters: $\rho$ is the number of points which are allowed to be swapped in a solution, $q$ and $\veps$ are used to control the descending step-length of the cost. The parameter $q$ is fixed as $k$ in the algorithm provided by \cite{gklmv}, while it is an input in our algorithm, because the approximation ratio is associated with the value of this parameter.

The following proposition holds for this algorithm.
\begin{proposition}[\citealt{gklmv}]\label{prop1}
Let $(S,P)$ be the solution produced by LS-Multi-Swap-Outlier($\cX,$ $C,z,k,\rho,\veps$), and set $q = k$ if $\rho = 1$, otherwise, set $q = k^2-k$. Then
\begin{itemize}
  \item[{\rm(i)}] $\cost(S, P \cup \outlier(S,P)) \ge  \left( 1-\veps/q \right) \cost(S,P)$,
  \item[{\rm(ii)}] $\cost(S \setminus A \cup B, P \cup \outlier(S \setminus A \cup B,P)) \ge \left( 1-\veps/q \right) \cost(S,P) $ for any $A \subseteq S$ and $B \subseteq C$.
\end{itemize}
\end{proposition}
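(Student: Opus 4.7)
The plan is to read off both claims of the proposition directly from the stopping rule of Algorithm \ref{alg-multi-swap-outlier}. By construction, each pass of the outer loop performs two kinds of improvement tests against the current solution $(S,P)$: a \emph{no-swap step} that tries the move $(S,P) \mapsto (S, P \cup \outlier(S,P))$, and a \emph{swap step} that tries, for every admissible pair with $|A| = |B| \le \rho$, the combined move $(S,P) \mapsto (S \setminus A \cup B,\, P \cup \outlier(S \setminus A \cup B, P))$. Each test is accepted only when the proposed move strictly reduces the cost by at least the multiplicative factor $\veps/q$. Hence the output $(S,P)$ is the first solution at which both tests simultaneously fail, and the two parts of the proposition are the literal contrapositives of this failure.

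For part (i), I invoke the failure of the no-swap test at the terminal solution: had $\cost(S, P \cup \outlier(S,P)) < (1-\veps/q)\,\cost(S,P)$, the algorithm would have advanced rather than returning $(S,P)$. The negation is exactly the stated inequality.

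For part (ii), I invoke the failure of the swap test at the terminal solution: for every $(A,B)$ the algorithm enumerates, the candidate $(S \setminus A \cup B,\, P \cup \outlier(S \setminus A \cup B,P))$ was rejected, giving the claimed bound for every such pair. The statement in the excerpt is quantified over ``any $A \subseteq S, B \subseteq C$,'' which in the context of LS-Multi-Swap-Outlier should be read as those $A,B$ with $|A| = |B| \le \rho$ that the swap step actually searches over; this is the natural reading that aligns the proposition with the algorithm.

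There is no genuine obstacle to proving these two inequalities in isolation; they are a near-tautological unpacking of the termination criterion. The only substantive design choice, which lives outside this proposition, is the particular value of $q$ ($q=k$ for $\rho = 1$ and $q=k^2-k$ otherwise). Its role is reserved for the subsequent analysis: when Lemmas \ref{lem-penalty-case1} and \ref{lem-penalty-case2} are combined over the $O(q)$ swap operations constructed to relate $(S,P)$ to the global optimum, the per-operation slack $\veps/q$ aggregates to an additive $\veps$ in the final approximation ratio, while simultaneously ensuring a polynomial number of outer iterations by a standard potential-function argument on $\log \cost$.
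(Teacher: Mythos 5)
Your proposal is correct and is essentially the intended argument: the paper gives no proof of Proposition \ref{prop1} (it is imported from \citealt{gklmv}), and the two inequalities are exactly the negations of the acceptance tests in the no-swap and swap steps of Algorithm \ref{alg-multi-swap-outlier}, which must both fail at the terminal $(S,P)$ since the outer loop exits only when an iteration leaves the cost unchanged. Your caveat that part (ii) should be read as quantified over the pairs the swap step actually enumerates (with $B \subseteq C\setminus S$ and $|A|=|B|\le\rho$) is the right reading, and that restricted form is all the subsequent lemmas use.
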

%\vspace{0.2cm}
\begin{breakablealgorithm}
\renewcommand{\algorithmicrequire}{\textbf{Input:}}
\renewcommand{\algorithmicensure}{\textbf{Output:}}
\caption{The outlier-based local search algorithm: LS-Multi-Swap-Outlier($\cX,C,z,k,\rho,q,\veps$)}
\label{alg-multi-swap-outlier}
\begin{algorithmic}[1]
    \REQUIRE Data set $\cX$, candidate center set $C$, positive integers $z$, $k$, $q$ and $\rho \le k$, real number $\veps > 0$.
    \ENSURE Center set $S \subseteq C$ and outlier set $P \subseteq \cX$.
    \STATE Arbitrarily choose a $k$-center subset $S$ from $C$.
    \STATE Set $P := \outlier(C)$.
    \STATE Set $\alpha := +\infty$.
    \WHILE {$\cost(S,P) < \alpha$}
        \STATE $\alpha \leftarrow \cost(S,P)$
        \IF {$\cost(S, P \cup \outlier(S,P)) < \left( 1-\dfrac{\veps}{q} \right)\cost(S,P)$}
            \STATE Set $P := P \cup \outlier(S,P)$.
        \ENDIF
        \STATE Compute
            $(A,B): = \arg\min_{A\subseteq S, B \subseteq C\setminus S, |A|=|B| \le \rho}
            {\rm cost} (S\setminus A \cup B, P\cup \outlier(S \setminus A \cup B, P)).$
        \STATE Set $S' := S \setminus A \cup B$ and $P' := P \cup \outlier(S \setminus A \cup B,P)$.
        \IF {$\cost(S',P') < \left( 1-\dfrac{\veps}{q} \right)\cost(S,P)$}
            \STATE Set $S := S'$ and $P := P'$.
        \ENDIF
    \ENDWHILE
    \RETURN $S$ and $P$
\end{algorithmic}
\end{breakablealgorithm}

For $k$-MedO, we run LS-Multi-Swap-Outlier($\cX,\cF,z,k,\rho,q,\veps$). For $k$-MeaO, we run LS-Multi-Swap-Outlier($\cX,\cC',z,k,\rho,q,\veps$), where $\cC'$ is an ${\hat \varepsilon}$-approximate centroid set for $\cX$. The values of $\rho$, $\veps$, and $\hat{\veps}$ will be determined in the analysis of the algorithm.

\subsection{The analysis}
The  time complexity of Algorithm \ref{alg-multi-swap-outlier} is shown in the following theorem.
\begin{theorem} \label{thm-time-outlier}
The running time of {\rm LS-Multi-Swap-Outlier($\cX,\cC,z,k,\rho,q,\veps$)} is $O\left(\frac{k^{\rho}n^{\rho}q}{\veps} \log(n\delta)\right)$.
\end{theorem}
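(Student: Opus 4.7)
The plan is to bound separately the number of iterations of the main \textbf{while} loop and the work done inside each iteration, then multiply.

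For the iteration count, observe that whenever the loop does not exit, at least one of the two updates in the body must have fired, and by construction each firing replaces $(S,P)$ by a solution whose cost is strictly less than $\bigl(1-\veps/q\bigr)\cost(S,P)$. Moreover the outer guard $\cost(S,P)<\alpha$ is only re-checked after $\alpha$ has been reset to the cost at the top of the iteration, so each non-terminating iteration drops the cost by at least a factor of $(1-\veps/q)$. After the initialization step, $\cost(S,P)$ is at most $\sum_{x\in\cX\setminus P}\Delta(s,x)$ for some fixed $s\in S$, which is $O(n\delta)$ for $k$-MedO and $O(n\delta^{2})$ for $k$-MeaO; in both cases this is $O\bigl((n\delta)^{O(1)}\bigr)$. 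Under the standard assumption that nonzero costs are bounded below by a fixed constant (which is the convention under which $\delta$ is the diameter, and can always be enforced by rescaling to integer or polynomial-precision distances), the number of iterations $T$ must satisfy
\[
(1-\veps/q)^{T}\cdot O\bigl((n\delta)^{O(1)}\bigr)\;\ge\;\Omega(1),
\]
so using $\log\!\bigl(1/(1-\veps/q)\bigr)\ge \veps/q$ we obtain $T=O\!\bigl(\tfrac{q}{\veps}\log(n\delta)\bigr)$.

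For the per-iteration work, the dominant cost is line~9, the computation of $\arg\min$ over all pairs $(A,B)$ with $A\subseteq S$, $B\subseteq C\setminus S$, $|A|=|B|\le \rho$. The number of such pairs is
\[
\sum_{i=0}^{\rho}\binom{k}{i}\binom{|C|-k}{i}\;=\;O\!\bigl(k^{\rho}\,|C|^{\rho}\bigr)\;=\;O\!\bigl(k^{\rho}n^{\rho}\bigr),
\]
using that $|C|$ is polynomial in $n$ for both $k$-MedO (where $|C|=m\le n$ after collapsing identical points) and $k$-MeaO (where $|C|=|\cC'|$ is polynomial in $n$ by Lemma~\ref{lem2.2}). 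For each candidate pair, evaluating $\cost\bigl(S\setminus A\cup B,\,P\cup \outlier(S\setminus A\cup B,P)\bigr)$ amounts to assigning each of the $n$ points to its nearest of $k$ centers and then removing the $z$ largest connection costs, which runs in time polynomial in $n$ and $k$; this polynomial factor is absorbed into the statement (or equivalently hidden by treating computation of a single objective value as a unit of work in the usual local-search convention).

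Multiplying the two bounds yields total running time $O\!\bigl(\tfrac{k^{\rho}n^{\rho}q}{\veps}\log(n\delta)\bigr)$, as claimed. The main subtlety I would have to justify carefully is the lower bound on nonzero cost that turns the geometric-decrease argument into a $\log(n\delta)$ iteration bound; I would handle this by explicitly assuming that distances in the input are scaled so the minimum nonzero distance is $1$ and the maximum is $\delta$, which is the standard convention under which the stated bound is meaningful, and which can always be achieved by a preliminary rescaling that does not affect either the set of local optima produced or the approximation ratio.
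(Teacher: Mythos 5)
Your proposal is correct and follows essentially the same route as the paper's proof: rescale so the optimal value is at least $1$ and every solution costs at most $n\delta$ (up to a constant power), deduce $O(\frac{q}{\veps}\log(n\delta))$ iterations from the geometric cost decrease by a factor $(1-\veps/q)$, and bound the neighborhood size by $O((kn)^{\rho})$. Your treatment is in fact slightly more explicit than the paper's about the rescaling convention and about absorbing the per-candidate evaluation cost, but these are presentational differences, not a different argument.
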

\proof
The proof  is similar to that in \cite{gklmv}. For the sake of completeness, we present it also here.
W.l.o.g, we can assume that the optimal value of the problem is
at least $1$ by scaling the distances, except for the trivial case that $k = n - z$.
Under this assumption, the cost of any solution is  at most $n\delta \ge 1$. The number of iterations is at most $O(- \log_{1-\veps/q}(n\delta)) = O(\frac{q}{\veps}\log(n\delta))$, since the cost is reduced to at most $(1- \veps/ q)$ times the old cost in each iteration. The number of solutions searched by a swap operation  is at most $O((kn)^\rho)$, since $|A|=|B|\le \rho$. This completes the proof.
\endproof

The algorithm may violates the outlier constraint in order to yield a bounded approximation ratio. We can also bound the number of outliers by a suitable factor, which is shown in the following result.
\begin{theorem}\label{thm-number-outlier}
The number of outliers returned by {\rm LS-Multi-Swap-Outlier($\cX,\cC,$ $z, k,\rho,q,\veps$)} is $O\left(\frac{zq}{\veps}\right.$ $\left.\log(n\delta)\right)$.
\end{theorem}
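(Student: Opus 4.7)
The plan is to bound the number of outliers by combining two ingredients already established (or almost already established) in the paper: a per-iteration bound on how many new outliers can be added to $P$, and the bound on the total number of iterations that follows from the proof of Theorem \ref{thm-time-outlier}.

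First I would examine the two places in Algorithm \ref{alg-multi-swap-outlier} at which $P$ grows. Initially, $P$ is set to $\outlier(\cC)$ (a set of size at most $z$ by the definition of $\outlier(\cdot,\cdot)$ in Section \ref{sec:model}). Inside the while loop there are exactly two update rules for $P$: the no-swap step which performs $P \leftarrow P \cup \outlier(S,P)$, and the swap step which performs $P \leftarrow P \cup \outlier(S\setminus A \cup B, P)$. Since each invocation of $\outlier(\cdot,\cdot)$ returns a set of cardinality at most $z$, each full iteration of the while loop enlarges $P$ by at most $2z$ points.

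Next I would bound the number of iterations. Exactly as in the proof of Theorem \ref{thm-time-outlier}, one may scale distances so that the optimal value is at least $1$ (the case $k=n-z$ being trivial), whence every solution has cost at most $n\delta$. The loop condition $\cost(S,P) < \alpha$ together with the strict inequalities in both the no-swap and swap updates ensures that whenever the while loop executes another iteration, the value of $\cost(S,P)$ has strictly decreased by a multiplicative factor of at least $(1-\veps/q)$ relative to some previous value. Therefore the number of iterations is at most $O(-\log_{1-\veps/q}(n\delta)) = O\!\left(\tfrac{q}{\veps}\log(n\delta)\right)$.

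Combining these two observations, the total size of $P$ upon termination is at most
\[
z \;+\; 2z \cdot O\!\left(\tfrac{q}{\veps}\log(n\delta)\right) \;=\; O\!\left(\tfrac{zq}{\veps}\log(n\delta)\right),
\]
which is the desired bound. I do not expect any serious obstacle here: the only subtle point is justifying that the algorithm never reaches a state where $P$ keeps growing without the cost genuinely decreasing by the $(1-\veps/q)$ factor, and this is guaranteed by the explicit conditional checks on lines 6 and 11 of Algorithm \ref{alg-multi-swap-outlier} (together with the outer $\cost(S,P) < \alpha$ test which prevents re-entering the loop when neither update strictly improves the cost).
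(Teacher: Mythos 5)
Your proposal is correct and follows essentially the same route as the paper's own proof: each iteration adds at most $2z$ points to $P$ (one call to $\outlier(\cdot,\cdot)$ in the no-swap step and one in the swap step), and the number of iterations is $O\left(\frac{q}{\veps}\log(n\delta)\right)$ by the argument in the proof of Theorem \ref{thm-time-outlier}. The only difference is that you additionally account for the initial set $\outlier(C)$ of size at most $z$, which the paper leaves implicit and which does not affect the asymptotic bound.
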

\proof
From the proof of Theorem \ref{thm-time-outlier}, we know that LS-Multi-Swap-Outlier($\cX,\cC,$ $z,k,\rho,q,\veps$) has at most $O\left(\frac{q}{\veps}\log(n\delta)\right)$ iterations.  In each iteration, the algorithm removes at most $2z$ outliers. This completes the proof.
\endproof

Let $(S,P)$ be the solution returned by Algorithm \ref{alg-multi-swap-outlier}, and $(S^*,P^*)$ be the global optimal solution.
Similar to the penalty version, we use the same notations (except that the outlier version has not penalty cost) and adopt the same partition of $S$ and $S^*$ ($S= \cup_l S_l,\ S^*= \cup_l S^*_l$).
Similar to Lemmas \ref{lem-penalty-case1} and \ref{lem-penalty-case2}, we obtain the following two results.
\begin{lemma}\label{lem-outlier-case1}
If $ |S_l| = |S^*_l| \le \rho$, we have
\bea
- \dfrac{\veps}{q} \cdot \cost(S,P) & \le &
\sum\limits_{s \in  S_l }  \sum\limits_{x \in  N(s) \setminus  P^* }
\left(d^2 (\phi( \cent_{\cC}(N^*_q (s^*_x) )), x)  - {\rm cost}_c (x) \right) +
\nn  \\
& &
\sum\limits_{s^* \in  S_l^* } \sum\limits_{x \in   N^* (s^*)} {\rm cost}_c^* (x)
- \sum\limits_{s^* \in  S_l^* } \sum\limits_{x \in   N^* (s^*) \setminus  P}{\rm cost}_c (x). \label{ieq1-lem-outlier-case1}
\eea
for $k$-MedO, and
\bea
- \dfrac{\veps}{q} \cdot \cost(S,P) & \le &
\sum\limits_{s \in  S_l }  \sum\limits_{x \in  N(s) \setminus  P^* }
\left(  d^2 (\phi( \cent_{\cC}(N^*_q (s^*_x) )), x)  - {\rm cost}_c (x) \right) +
\nn  \\
& &
\sum\limits_{s^* \in  S_l^* } \sum\limits_{x \in   N^* (s^*)} (1+ {\hat \varepsilon})  {\rm cost}_c^* (x)
- \sum\limits_{s^* \in  S_l^* } \sum\limits_{x \in   N^* (s^*) \setminus  P}{\rm cost}_c (x). \label{ieq2-lem-outlier-case1} \nn\\
\eea
for $k$-MeaO.
\end{lemma}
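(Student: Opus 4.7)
The plan is to mirror the proof of Lemma \ref{lem-penalty-case1} with two modifications: (i) replace the exact local optimality inequality by the approximate one from Proposition \ref{prop1}(ii), and (ii) emulate the penalty-version ``penalize at $p_x$'' move by supplying a specific feasible extension of the outlier set. Set $S':=S\setminus S_l\cup\hat S_l^*$ for $k$-MeaO (respectively $S\setminus S_l\cup S_l^*$ for $k$-MedO). Because the algorithm did not accept this swap upon termination, Proposition \ref{prop1}(ii) yields $\cost(S',P\cup\outlier(S',P))\ge(1-\veps/q)\cost(S,P)$.

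Next, I would turn this into an inequality against a hand-picked assignment. Take $Q:=\bigcup_{s\in S_l}(N(s)\cap P^*)$. Since $Q\subseteq P^*$ one has $|Q|\le z$, and since $N(s)\cap P=\emptyset$ one has $Q\cap P=\emptyset$, so $P\cup Q$ is a legitimate extended outlier set of the form optimized over by $\outlier(S',P)$. Hence $\cost(S',P\cup\outlier(S',P))\le \cost(S',P\cup Q)$, which rearranges to $-\frac{\veps}{q}\cost(S,P)\le \cost(S',P\cup Q)-\cost(S,P)$. I would then upper bound $\cost(S',P\cup Q)$ by the cost of the explicit assignment mirroring the penalty-version reassignment: each $x\in N^*(s^*)\setminus P$ with $s^*\in S_l^*$ goes to $\hat s^*$ (to $s^*$ for $k$-MedO); each $x\in N(s)\setminus(P^*\cup\bigcup_{s^*\in S_l^*}N^*(s^*))$ with $s\in S_l$ goes to $\phi(\cent_{\cC}(N^*_q(s^*_x)))$; every other non-outlier keeps its current center in $S\setminus S_l$, while the points of $Q$ simply forfeit their $\cost_c(x)$ contribution. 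Apply \reff{ieq-approx-cent} summed over $s^*\in S_l^*$ to convert the $d^2(\hat s^*,x)$ terms into $(1+\hat\veps)\cost_c^*(x)$ (this step is vacuous for $k$-MedO, which explains the absence of $(1+\hat\veps)$ in the first inequality), and enlarge the sum from $N(s)\setminus(P^*\cup\bigcup_{s^*\in S_l^*}N^*(s^*))$ to $N(s)\setminus P^*$. The added summands satisfy $d^2(s_l,x)-\cost_c(x)\ge 0$ because $s^*_x\in S_l^*$ forces $\phi(\cent_{\cC}(N^*_q(s^*_x)))=s_l\in S$; and the $-\sum_{x\in Q}\cost_c(x)$ gain from the new outliers can be dropped to weaken the inequality to the stated form.

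The main obstacle is the outlier bookkeeping: one must verify that the hand-picked $Q$ is a valid outlier extension ($|Q|\le z$ and $Q\cap P=\emptyset$) and use the optimality of $\outlier(S',P)$ to pass from the algorithmic quantity $\cost(S',P\cup\outlier(S',P))$ to $\cost(S',P\cup Q)$. Once this handoff is clean, the remaining calculation essentially recovers the penalty-case argument of Lemma \ref{lem-penalty-case1} with every $p_x$ replaced by $0$.
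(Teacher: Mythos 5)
Your proposal is correct and follows essentially the same route as the paper: construct the swap $(S_l,\hat S_l^*)$, invoke Proposition \ref{prop1}(ii), designate $\bigcup_{s\in S_l}(N(s)\cap P^*)$ as the at most $z$ additional outliers (playing the role of the penalized points in Lemma \ref{lem-penalty-case1}), bound the resulting cost by the explicit reassignment, apply \reff{ieq-approx-cent}, and drop the nonpositive leftover terms. Your explicit verification that $Q$ is a valid extension ($|Q|\le z$, $Q\cap P=\emptyset$) and the handoff through the optimality of $\outlier(S',P)$ is exactly the bookkeeping the paper performs, just stated more carefully.
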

\proof
We only prove it for $k$-MeaO. The proof for $k$-MedO is similar.
We consider the swap$(S_l,{\hat S}^*_l)$. Since the swap step of the algorithm produces at most $z$ additional outliers in each iteration, and $|P \setminus \bigcup_{s^* \in S^*_l} N^* (s^*) \cup P^*| \le |P|+z$, we can let the points in $P \setminus \bigcup_{s^* \in S^*_l} N^* (s^*) \cup P^*$ be the additional outliers after the constructed swap operation. For the other points, it is obvious that we can apply the reassignments  in the proof of Lemma \ref{lem-penalty-case1} also here.
Then, Proposition \ref{prop1} yields
\bea
&& - \dfrac{\veps}{q} \cdot \cost(S,P) \nn\\
& \le &
{\rm cost}(S \setminus S_l \cup \hat{S^*_l}, P \cup \outlier(S \setminus S_l \cup \hat{S^*_l},P) ) - {\rm cost}(S,P)
\nn \\
& \le &
- \sum\limits_{s \in S_l } \sum\limits_{x \in  N(s)\cap P^*} {\rm cost}_c (x) \nn\\
&&
+ \sum\limits_{s \in  S_l } \sum\limits_{x \in  N(s)  \setminus \left( \bigcup_{s^* \in S^*_l} N^* (s^*) \cup P^*\right)   }
\left( d^2 (\phi( \cent_{\cC}(N^*_q (s^*_x) )), x)   - {\rm cost}_c (x) \right)
\nn \\
& &
+ \sum\limits_{s^* \in  S^*_l }  \sum\limits_{x \in   N^* (s^*) \setminus P} ( d^2 ({\hat s^*}, x)  - {\rm cost}_c (x))
+ \sum\limits_{s^* \in  S^*_l }  \sum\limits_{x \in   N^* (s^*)  \cap P}  d^2 ({\hat s^*}, x)
\nn \\
& \le &
- \sum\limits_{s \in  S_l }  \sum\limits_{x \in  N(s)\cap P^*} {\rm cost}_c (x) + \sum\limits_{s^* \in  S_l^* } \sum\limits_{x \in   N^* (s^*)  \cap P} (1+ {\hat \varepsilon}) {\rm cost}_c^* (x) \nn\\
&&
+ \sum\limits_{s \in  S_l }  \sum\limits_{x \in  N(s) \setminus  P^* }
\left(  d^2 (\phi( \cent_{\cC}(N^*_q (s^*_x) )), x)  - {\rm cost}_c (x) \right)
\nn  \\
& &
+ \sum\limits_{s^* \in  S_l^* } \sum\limits_{x \in   N^* (s^*) \setminus P} ((1+ {\hat \varepsilon})  {\rm cost}_c^* (x)  - {\rm cost}_c (x))
\nn\\
& \le &
\sum\limits_{s \in  S_l }  \sum\limits_{x \in  N(s) \setminus  P^* }
\left(  d^2 (\phi( \cent_{\cC}(N^*_q (s^*_x) )), x)  - {\rm cost}_c (x) \right)
\nn  \\
& &
+ \sum\limits_{s^* \in  S_l^* } \sum\limits_{x \in   N^* (s^*)} (1+ {\hat \varepsilon})  {\rm cost}_c^* (x)
- \sum\limits_{s^* \in  S_l^* } \sum\limits_{x \in   N^* (s^*) \setminus P}{\rm cost}_c (x), \nn
%\label{ieq1-thm1-kmeaO}
\eea
where the third inequality follows from  \reff{ieq-approx-cent}.

\endproof

\begin{lemma}\label{lem-outlier-case2}
For any point $s \in S_l \setminus \{s_l\}$ and $s^* \in S^*_l$,  we have
\bea
- \dfrac{\veps}{q} \cdot \cost(S,P)
& \le &
\sum\limits_{x \in  N(s) \setminus  P^* }
\left(  d(\phi( \cent_{\cC}(N^*_q (s^*_x) )), x)  - {\rm cost}_c (x) \right)
\nn \\
& &
+ \sum\limits_{x \in   N^* (s^*)} {\rm cost}_c^* (x)
- \sum\limits_{x \in   N^* (s^*) \setminus P} {\rm cost}_c(x)
\label{ieq1-lem-outlier-case2}
\eea
for $k$-MedO, and
\bea
- \dfrac{\veps}{q} \cdot \cost(S,P)
& \le &
\sum\limits_{x \in  N(s) \setminus  P^* }
\left(  d^2 (\phi( \cent_{\cC}(N^*_q (s^*_x) )), x)  - {\rm cost}_c (x) \right)
\nn \\
& &
+ \sum\limits_{x \in   N^* (s^*)} (1+ {\hat \varepsilon}) {\rm cost}_c^* (x)
- \sum\limits_{x \in   N^* (s^*) \setminus P } {\rm cost}_c (x)
\label{ieq2-lem-outlier-case2}
\eea
for $k$-MeaO.
\end{lemma}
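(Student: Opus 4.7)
The plan is to mirror the proof of Lemma \ref{lem-penalty-case2}, substituting the outlier-adapted inequality of Proposition \ref{prop1}(ii) for the penalty-version local-optimality condition, and picking up an auxiliary set of up to $z$ additional outliers along the way. I will present the $k$-MeaO case; the $k$-MedO case is strictly simpler because $\hat s^* = s^*$ and inequality \reff{ieq-approx-cent} collapses to an equality, removing the $(1+\hat\veps)$ slack.

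First I will compare $(S,P)$ to the neighbouring solution $(S', P'')$ with $S' := S \setminus \{s\} \cup \{\hat s^*\}$ (WLOG $\hat s^* \notin S$) and $P'' := P \cup Q$ where $Q := N(s) \cap P^*$. The feasibility check is that $Q$ is disjoint from $P$ (because $N(s) \cap P = \emptyset$) and has size at most $|P^*| \le z$; hence $P''$ is of the form covered by Proposition \ref{prop1}(ii), and the minimality of $\outlier(S', P)$ among $z$-bounded extensions of $P$ yields $\cost(S', P'') - \cost(S, P) \ge -\tfrac{\veps}{q}\cost(S, P)$.

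Second I will upper-bound $\cost(S', P'')$ by a concrete reassignment of the points in $\cX \setminus P''$: points of $N(s) \setminus (N^*(s^*) \cup P^*)$ are sent to $\phi(\cent_{\cC}(N^*_q(s^*_x)))$; points of $N^*(s^*) \setminus P$ (which contains $N(s) \cap N^*(s^*)$ since $N(s) \cap P = \emptyset$) are sent to $\hat s^*$; every other point keeps its current centre $s_x \in S \setminus \{s\}$. A small but essential check here is that $\phi(\cent_{\cC}(N^*_q(s^*_x))) \ne s$: since $s \in S_l \setminus \{s_l\}$, the way the partition was built forces $s \notin \{s_1,\dots,s_m\}$, while $\phi(\cdots)$ is always some $s_{l'}$. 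The resulting cost change is $-\sum_{N(s)\cap P^*}\cost_c(x) + \sum_{N(s)\setminus(N^*(s^*)\cup P^*)}(d^2(\phi(\cdots), x) - \cost_c(x)) + \sum_{N^*(s^*) \setminus P}(d^2(\hat s^*, x) - \cost_c(x))$.

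Finally, to reach the displayed form of the lemma, I will apply three routine estimates to this upper bound: (i) drop the non-positive summand $-\sum_{N(s)\cap P^*}\cost_c(x) \le 0$; (ii) enlarge the first summation range from $N(s)\setminus(N^*(s^*)\cup P^*)$ to $N(s)\setminus P^*$, which is legal because the integrand $d^2(\phi(\cdots), x) - \cost_c(x)$ is pointwise non-negative (indeed $\phi(\cdots) \in S$ and $s_x$ is the nearest centre of $S$ to $x$, so $d^2(\phi(\cdots),x) \ge d^2(s_x,x) = \cost_c(x)$); and (iii) invoke \reff{ieq-approx-cent} in the form $\sum_{N^*(s^*)\setminus P} d^2(\hat s^*, x) \le \sum_{N^*(s^*)} d^2(\hat s^*, x) \le (1+\hat\veps)\sum_{N^*(s^*)}\cost_c^*(x)$. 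The only genuinely non-mechanical step is the feasibility check $|Q| \le z$, which relies on $|P^*| \le z$; the remainder is the same sort of bookkeeping already carried out in the proofs of Lemmas \ref{lem-penalty-case2} and \ref{lem-outlier-case1}.
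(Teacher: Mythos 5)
Your proposal is correct and follows essentially the same route as the paper, which itself only remarks that the proof is "similar to those for Lemmas \ref{lem-penalty-case2} and \ref{lem-outlier-case1}": you combine the single-swap reassignment scheme of Lemma \ref{lem-penalty-case2} with the outlier-handling of Lemma \ref{lem-outlier-case1}, declaring $N(s)\cap P^*$ (of size at most $z$) as the additional outliers and invoking Proposition \ref{prop1}(ii) in place of local optimality. The feasibility check $|N(s)\cap P^*|\le z$, the verification that $\phi(\cent_{\cC}(N^*_q(s^*_x)))\ne s$, and the final enlargements via \reff{ieq-approx-cent} all match the paper's intended argument.
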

\begin{proof}
The proof is similar to those for Lemmas \ref{lem-penalty-case2} and \ref{lem-outlier-case1}.
\end{proof}

Next we will construct some swap operations for each pair $(S_l, S^*_l)$, and then apply Lemmas \ref{lem-outlier-case1} and \ref{lem-outlier-case2} to these swaps. Similar to the analysis for the penalty version, we consider two cases according to the size of $S_l$: $|S_l| \le \rho$ and $|S_l| =m_l > \rho$.

Note that the number of constructed swap operations will appear in the coefficient of $\cost(S,P)$ after summing the inequalities in Lemmas \ref{lem-outlier-case1} and \ref{lem-outlier-case2}. We want this number to be as small as possible, since it is proportional to the approximation ratio due to the later analysis. On the other hand, to obtain the entire cost of the solution $(S,P)$, we need to swap all centers in $S$ at least once. Thus, for the case of $|S_l| \le \rho$, we consider the same swap operations in the analysis for the penalty version (we state it again in the following Case 1), since each center in $S_l$ is swapped exactly once.

For the case of $|S_l| = m_l > \rho$, there are $m_l(m_l-1)$ single-swap operations in the analysis for the penalty version. This makes the coefficient of the cost of $(S^*,P*)$ small  ($m_l / (m_l-1) \rightarrow 1$ when $m_l \rightarrow +\infty$), but the number of swaps is large. In this section, we consider two methods to construct swap operations for this case, which are stated in Methods 1 and 2 in the following Case 2.
Note that Method 2 is the same as that in Section \ref{sec:penalty}.
\begin{description}
\item[Case 1] (cf. Figure \ref{fig:multiswap} for $\rho=3$). For each $l$ with $ |S_l| = |S^*_l| \le \rho$, let $S_l=\{s_l\}$ and $S^*_l=\{s^*_{l}\}$.
We construct the swap$(s_l, s^*)$ for $k$-MedO, and swap$(s_l, {\hat s^*_l}) $ for $k$-MeaO.
\item[Case 2.] For each $l$ with $ |S_l| = |S^*_l| = m_l > 1$, let $S_l=\{s_l,s_{l,2},\dots,s_{l,m_l}\}$ and $S^*_l=\{s^*_{l,1},s^*_{l,2},\dots,s^*_{l,m_l}\}$.
    \begin{itemize}
      \item[Method 1] (cf. Figure \ref{fig:singleswap2}). Set
        $$
        \psi (s^*) :=
        \left\{\ba{ll}
        s_{l,2},  &  {\rm if} \  s^* = s^*_{l,1}; \\
        s_{l,2},  &  {\rm if} \  s^* = s^*_{l,2}; \\
        s_{l,3},  &  {\rm if} \  s^* = s^*_{l,3}; \\
        ... & ... \\
        s_{l,m_l},  &  {\rm if} \  s^* = s^*_{l,m_l}.
        \ea\right.
        $$
        For each $s^* \in S^*_l$, we construct swap$(\psi(s^*),s^*)$ for $k$-MedO, and swap$(\psi($ $s^*),{\hat s^*})$ for $k$-MeaO.
      \item[Method 2] (cf. Figure \ref{fig:singleswap1}). We consider $ (m_l-1)m_l$ pairs $(s, s^*)$ with
        $s \in S_l \backslash \{ s_l\} $ and $ s^* \in S^*_l$. For $k$-MedO, we construct swap$(s, s^*)$ for each pair; for $k$-MeaO, we construct swap$(s, {\hat s^*})$ for each pair.
    \end{itemize}
\end{description}

\begin{figure} \label{fig:singleswap2}
  % Requires \usepackage{graphicx}
  \centering
  \includegraphics[width=3.0cm]{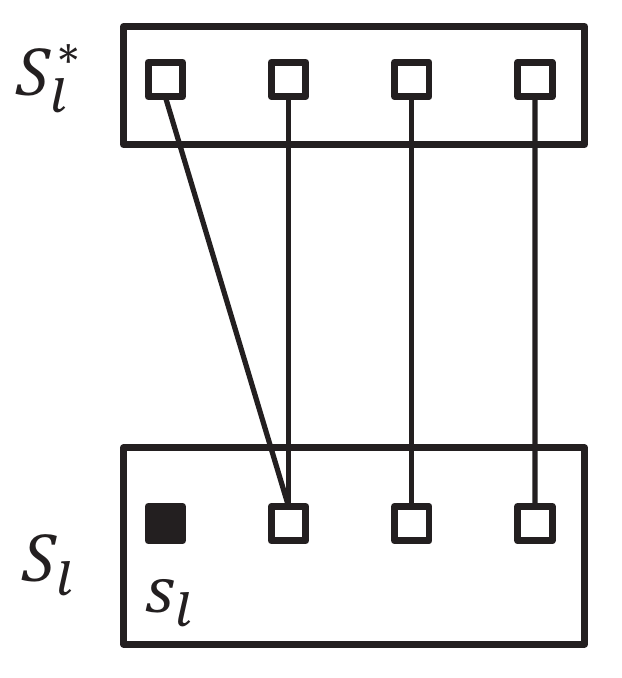}
  \caption{Sinlgle-swap operations for the case of $|S_l| > \rho$.}
\end{figure}

Combining these swap operations, we obtain the main results for Algorithm \ref{alg-multi-swap-outlier}, which are shown in the following two theorems.

\begin{theorem} \label{thm-multi-swap-kmedo}
Let $(S,P)$ be the solution returned by {\rm LS-Multi-Swap-Outlier($\cX,$ $\cF,z,k,\rho,q,\veps$)} for $k$-MedO. If $(1+k)\veps < q$, then we have
\bea\label{result1-thm-multi-swap-kmedo}
\cost(S,P) \le \frac{5}{1- (1+k)\veps / q} \cdot \cost(S^*,P^*).
\eea
If $(1+k^2-k)\veps < q$, then we have
\bea\label{result2-thm-multi-swap-kmedo}
\cost(S,P) \le \frac{3+2/\rho}{1- (1+k^2-k)\veps / q} \cdot \cost(S^*,P^*).
\eea
\end{theorem}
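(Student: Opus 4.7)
The plan is to mirror the proof of Theorem \ref{thm-multi-swap-kmedp} while inserting Lemmas \ref{lem-outlier-case1}--\ref{lem-outlier-case2} in place of \reff{ieq1-lem-penalty-case1}--\reff{ieq1-lem-penalty-case2}, carrying along the extra slack $-\frac{\veps}{q}\cost(S,P)$ that now sits on the left-hand side of every per-swap inequality, and closing the argument by invoking Proposition \ref{prop1}(i) to bridge $P$ and $P^*$. I would first partition $S=\bigcup_l S_l$ and $S^*=\bigcup_l S^*_l$ exactly as in Section \ref{sec:penalty}, and then treat each pair $(S_l,S^*_l)$ according to whether $|S_l|\le\rho$ (Case 1) or $|S_l|>\rho$ (Case 2).

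For bound \reff{result1-thm-multi-swap-kmedo} I take $\rho=1$: Case 1 is a single swap of $s_l$ with $s^*_l$ and Case 2 follows Method 1, producing $m_l$ single swaps of $\psi(s^*)$ with $s^*$. Feeding each of these into \reff{ieq1-lem-outlier-case2} (which coincides with \reff{ieq1-lem-outlier-case1} when $|S_l|=1$) with weight $1$ produces a total of $\sum_l m_l = k$ inequalities; by the construction of $\psi$ every $s^*\in S^*$ enters once and every $s\in S$ enters at most twice, so the weight cap on $s$ is $2$. For bound \reff{result2-thm-multi-swap-kmedo} I apply \reff{ieq1-lem-outlier-case1} with weight $1$ to each Case 1 pair and \reff{ieq1-lem-outlier-case2} with weight $1/(m_l-1)$ to each of the $m_l(m_l-1)$ pairs $(s,s^*)$ in Case 2 Method 2, exactly as in the proof of Theorem \ref{thm-multi-swap-kmedp}. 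Under this weighting every $s^*\in S^*$ accumulates weight exactly $1$, every $s\in S\setminus\{s_l\}$ accumulates weight at most $m_l/(m_l-1)\le 1+1/\rho$, and $s_l$ accumulates weight $0$ in Case 2.

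After summing, I process the $d(\phi(\cent_{\cC}(N^*_q(s^*_x))),x)-\cost_c(x)$ contribution by the same triangle-inequality chain
\[
d(\phi(\cent_{\cC}(N^*_q(s^*_x))),x)\ \le\ \cost_c(x)+2d(\cent_{\cC}(N^*_q(s^*_x)),x)\ \le\ \cost_c(x)+2\cost_c^*(x)
\]
used in \reff{ieq2-thm-penalty-kmedp}--\reff{ieq4-thm-penalty-kmedp}, where the second step uses $x\in N^*_q(s^*_x)$ together with the defining optimality of $\cent_{\cC}(N^*_q(s^*_x))$ as a $k$-median center. The $\cost_c(x)$ terms cancel, the contribution $\sum_{s^*\in S^*}\sum_{x\in N^*(s^*)}\cost_c^*(x)$ collapses to $\cost_c^*$, and what remains is an inequality of the form
\[
-\,\frac{W\veps}{q}\,\cost(S,P)\ \le\ R\,\cost_c^*\ -\ \sum_{x\in \cX\setminus(P\cup P^*)}\cost_c(x),
\]
with $R=1+2\cdot 2=5$ in the first construction and $R=1+2(1+1/\rho)=3+2/\rho$ in the second, where $W$ is the accumulated LHS weight, bounded by the total count of constructed swap operations -- at most $k$ in the first construction and at most $k^2-k$ in the second.

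The final step is to convert $\sum_{x\in \cX\setminus(P\cup P^*)}\cost_c(x)$ back into $\cost(S,P)=\sum_{x\in \cX\setminus P}\cost_c(x)$, which requires controlling $\sum_{x\in P^*\setminus P}\cost_c(x)$. Since $|P^*\setminus P|\le z$ and $\outlier(S,P)$ is by definition the set of $z$ points of $\cX\setminus P$ with the largest $\cost_c$-values, $\sum_{x\in P^*\setminus P}\cost_c(x)\le \sum_{x\in \outlier(S,P)}\cost_c(x)$, and Proposition \ref{prop1}(i) bounds this last sum by $\frac{\veps}{q}\cost(S,P)$. Substituting and rearranging yields $(1-\frac{(W+1)\veps}{q})\cost(S,P)\le R\cost_c^*\le R\cost(S^*,P^*)$, which is precisely \reff{result1-thm-multi-swap-kmedo} or \reff{result2-thm-multi-swap-kmedo} once $W$ is inserted. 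The main obstacle is the weight bookkeeping -- in particular verifying $\phi(\cent_{\cC}(N^*_q(s^*_x)))\ne s$ for every reassigned $x\in N(s)\setminus P^*$, which follows from the partition construction and $s\ne s_l$ exactly as in the proof of Lemma \ref{lem-penalty-case2} -- together with the conceptually new step of invoking the no-swap bound of Proposition \ref{prop1}(i) to isolate the residual outlier disagreement $P^*\setminus P$ from the swap-based analysis, an ingredient with no analogue in the penalty-version argument.
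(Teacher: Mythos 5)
Your proposal is correct and follows essentially the same route as the paper's own proof: the same partition of $S$ and $S^*$, Method~1 (each $s^*$ swapped once, each $s$ at most twice, $k$ swaps total) for the factor $5$ and Method~2 with weights $1$ and $1/(m_l-1)$ (at most $k^2-k$ swaps) for the factor $3+2/\rho$, the same triangle-inequality chain \reff{ieq2-thm-penalty-kmedp}--\reff{ieq4-thm-penalty-kmedp}, and the same use of the definition of $\outlier(S,P)$ together with Proposition \ref{prop1}(i) to absorb $\sum_{x\in P^*\setminus P}\cost_c(x)$ into an extra $\tfrac{\veps}{q}\cost(S,P)$ term. The only cosmetic imprecision is that $d(\cent_{\cC}(N^*_q(s^*_x)),x)\le\cost_c^*(x)$ holds in aggregate over each adapted cluster (by the optimality of $\cent_{\cC}$, exactly as in \reff{ieq4-thm-penalty-kmedp}) rather than pointwise, but since it is only ever applied under a uniform weight summed over whole clusters, this is all that is needed.
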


\begin{theorem} \label{thm-multi-swap-kmeao}
Let $\cC'$ be an $\hat{\veps}$-approximate centroid set for the data set $\cX$, and $(S,P)$ be the solution returned by {\rm LS-Multi-Swap-Outlier($\cX,\cC',z,k,\rho,q,\veps$)} for $k$-MeaO. If $(5+\hat{\veps})(1+k)\veps < (9+\hat{\veps})q$, then we have
\bea\label{result1-thm-multi-swap-kmeao}
\cost(S,P) \le \dfrac{5+\hat{\veps}}{\beta_1^2} \cdot \cost(S^*,P^*)
\eea
where
$$
\beta_1 = -\frac{2}{\sqrt{5 + \hat{\veps}}} + \sqrt{\frac{4}{5 + \hat{\veps}}+1- \frac{(1+k)\veps}{q}}.
$$
If $(1+k^2-k)\veps / q < (1+1/\rho)^2 / (3+ 2/\rho + \hat{\veps})+1$, then we have
\bea\label{result2-thm-multi-swap-kmeao}
\cost(S,P) \le \dfrac{3+2/\rho+\hat{\veps}}{\beta_2^2} \cdot \cost(S^*,P^*)
\eea
where
$$
\beta_2 = -\frac{1+1/\rho}{\sqrt{3+ 2/\rho + \hat{\veps}}} + \sqrt{\frac{(1+1/\rho)^2}{3+ 2/\rho + \hat{\veps}}+1- \frac{(1+k^2-k)\veps}{q}}.
$$
\end{theorem}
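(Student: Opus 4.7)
The proof follows the blueprint of Theorem \ref{thm-multi-swap-kmeap}, with the outlier analogues Lemmas \ref{lem-outlier-case1} and \ref{lem-outlier-case2} replacing Lemmas \ref{lem-penalty-case1} and \ref{lem-penalty-case2}, and with Proposition \ref{prop1}(i) contributing one additional $-\veps/q\cdot\cost(S,P)$ slack on the LHS. I begin by partitioning $S=\bigcup_l S_l$ and $S^*=\bigcup_l S^*_l$ through the capturing map $\phi$ exactly as in Section \ref{sec:penalty}, and by recording that for the outlier version $\cost(S,P)=\cost_c$ and $\cost(S^*,P^*)=\cost_c^*$ (there are no penalty terms to track).

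For the single-swap ratio \eqref{result1-thm-multi-swap-kmeao}, I construct Case 1 swaps $(s_l,\hat{s}^*_l)$ for every $|S_l|=1$ and Case 2 Method 1 swaps $(\psi(s^*),\hat{s}^*)$ for each $s^*\in S^*_l$ with $|S_l|>1$, so that every $s^*\in S^*$ is covered exactly once and there are $k$ swaps in total. Applying Lemma \ref{lem-outlier-case1} (Case 1) and Lemma \ref{lem-outlier-case2} (Case 2) with weight $1$ per swap, summing, and adding Proposition \ref{prop1}(i) assembles an LHS of $-(1+k)\veps/q\cdot\cost(S,P)$. On the RHS the contributions $(1+\hat\veps)\cost_c^*(x)$ telescope into $(1+\hat\veps)\cost_c^*$, the raw $-\cost_c(x)$ contributions telescope into $-\cost_c$, and the cluster-form sum $\sum_{x\in\cX\setminus(P\cup P^*)}d^2(\phi(\cent_{\cC}(N^*_q(s^*_x))),x)$ appears with per-$s$ multiplicity at most $2$ (the image $s_{l,2}$ under $\psi$ is hit twice while $s_l$ is never hit). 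Bounding this sum by Lemma \ref{lem-ub-cluster form} at the worst uniform multiplicity of $2$ and invoking \eqref{ieq-approx-cent} to absorb $\hat\veps$, I obtain
\begin{equation*}
\bigl(1-(1+k)\veps/q\bigr)\cost_c \;\le\; (5+\hat\veps)\cost_c^* + 4\sqrt{\cost_c^*\,\cost_c}.
\end{equation*}
Viewing this as a quadratic in $\sqrt{\cost_c}$ with positive leading coefficient, applying the quadratic formula, and using the identity $b^2-c^2=1-(1+k)\veps/q$ with $b=\sqrt{4/(5+\hat\veps)+1-(1+k)\veps/q}$ and $c=2/\sqrt{5+\hat\veps}$, as in the proof of Theorem \ref{thm-multi-swap-kmeap}, gives $\sqrt{\cost_c}\le \sqrt{(5+\hat\veps)\cost_c^*}/\beta_1$ and hence $\cost(S,P)\le(5+\hat\veps)\cost(S^*,P^*)/\beta_1^2$.

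For the multi-swap ratio \eqref{result2-thm-multi-swap-kmeao}, I instead combine Case 1 with Case 2 Method 2, applying Lemma \ref{lem-outlier-case1} with weight $1$ and Lemma \ref{lem-outlier-case2} with weight $1/(m_l-1)$, mirroring the proof of Theorem \ref{thm-multi-swap-kmeap}. Each $s\in S$ then carries a uniform RHS-weight of at most $1+1/\rho$, each $s^*\in S^*$ contributes $(1+\hat\veps)\cost_c^*(x)$ exactly once, and the total of all $\veps/q$-slacks (including Proposition \ref{prop1}(i)) is at most $1+k^2-k$ by maximizing $\sum_l m_l(m_l-1)$ under $\sum_l m_l=k$. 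Lemma \ref{lem-ub-cluster form} then produces
\begin{equation*}
\bigl(1-(1+k^2-k)\veps/q\bigr)\cost_c \;\le\; (3+2/\rho+\hat\veps)\cost_c^* + 2(1+1/\rho)\sqrt{\cost_c^*\,\cost_c},
\end{equation*}
and an identical quadratic-formula argument returns $\cost(S,P)\le(3+2/\rho+\hat\veps)\cost(S^*,P^*)/\beta_2^2$.

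The main obstacle is the non-uniform weighting in Method 1: because $\psi$ is not injective, the effective coefficients on the RHS are $0$, $1$, or $2$ across different $s\in S$, and I must check that bounding uniformly at the worst multiplicity still yields exactly the constants $5+\hat\veps$ and $4$ (rather than the smaller penalty-version constants $3+\hat\veps$ and $2(1+1/\rho)$); this accounts for the gap between the two results of the theorem. Beyond that, the only quantitative check is that the stated hypotheses $(5+\hat\veps)(1+k)\veps<(9+\hat\veps)q$ and $(1+k^2-k)\veps/q<(1+1/\rho)^2/(3+2/\rho+\hat\veps)+1$ are exactly the conditions needed for the discriminants inside $\beta_1$ and $\beta_2$ to be non-negative, ensuring the two quadratic inequalities admit real solutions and the stated bounds hold.
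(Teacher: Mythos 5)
Your proposal is correct and follows essentially the same route as the paper: the same swap constructions (Case 1 with Method 1 for the single-swap bound, Case 1 with Method 2 with weights $1$ and $1/(m_l-1)$ for the multi-swap bound), the same use of Lemmas \ref{lem-outlier-case1}, \ref{lem-outlier-case2} and \ref{lem-ub-cluster form} together with Proposition \ref{prop1} to absorb the $\sum_{x\in P^*\setminus P}\cost_c(x)$ term, arriving at the same two quadratic inequalities and the same factorization. Your explicit handling of the non-injectivity of $\psi$ (uniform multiplicity $2$, justified by nonnegativity of the reassignment terms) is exactly the point the paper relies on when it states that each center of $S$ is swapped at most twice.
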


Each of these two theorems gives two approximation ratios for Algorithm \ref{alg-multi-swap-outlier}. The first one is obtained by Method 1, while the second one is obtained by Method 2.

\proof[Proof of Theorem \ref{thm-multi-swap-kmedo}.]
We first prove  inequality \reff{result1-thm-multi-swap-kmedo}. For Case 2, we use Method 1 to construct swap operations.
Note that each point in $S$ is swapped at most twice, and each point in $S^*$ is swapped once exactly, implying that the number of constructed swap operations is $|S^*|=k$. Summing inequality \reff{ieq1-lem-outlier-case2} over these $k$ swaps and using Proposition \ref{prop1}, we obtain
\bea\label{ieq1-thm-single-kmedo}
- \frac{k\veps}{q} \cdot \cost(S,P) & \le &
2 \sum\limits_{s \in S} \sum\limits_{x \in  N(s) \setminus  P^* }
\left(  d(\phi( \cent_{\cC}(N^*_q (s^*_x) )), x)  - {\rm cost}_c (x) \right)
\nn \\
& &
+ \sum\limits_{s^* \in S^*} \left( \sum\limits_{x \in N^* (s^*)} {\rm cost}_c^* (x)
- \sum\limits_{x \in N^* (s^*) \setminus P} {\rm cost}_c (x) \right)
\nn \\
& \le &
2 \sum\limits_{x \in \cX \setminus (P \cup P^*)}
\left(  d(\phi( \cent_{\cC}(N^*_q (s^*_x) )), x)  - {\rm cost}_c (x) \right)
\nn \\
& &
+ \sum\limits_{x \in \cX \setminus P^*} \cost_c (x) - \sum\limits_{x \in \cX \setminus P} \cost_c(x) + \sum\limits_{P^* \setminus P} \cost_c(x)
\nn\\
& \le &
4\cost(S^*,P^*) + \cost(S^*,P^*) - \cost(S,P) + \sum\limits_{P^* \setminus P} \cost_c(x)
\nn\\
& = &
5\cost(S^*,P^*) - \cost(S,P) + \sum\limits_{P^* \setminus P} \cost_c(x),
\eea
where the third inequality follows from \reff{ieq2-thm-penalty-kmedp} and \reff{ieq4-thm-penalty-kmedp}.

Using the definition of $\outlier(\cdot,\cdot)$, we obtain
\bea\label{ieq-ub-by-def-out}
\sum\limits_{x \in P^* \setminus P} \cost_c(x)
& \le & \sum\limits_{x \in \outlier(S,P)} \cost_c(x) \nn\\
& = & \cost(S,P) - \cost(S,P\cup \outlier(S,P)) \nn\\
& \le &  \dfrac{\veps}{q} \cdot \cost(S,P).
\eea

Combining inequalities  \reff{ieq1-thm-single-kmedo}-\reff{ieq-ub-by-def-out}, we have
\bea
0 \le 5{\rm cost}(S^*,P^*) - \left(1- \dfrac{(1+k)\veps}{q} \right){\rm cost}(S,P), \nn
\eea
which is equivalent to \reff{result1-thm-multi-swap-kmedo} under the condition that $(1+k)\veps < q$.

Next we will prove the inequality \reff{result2-thm-multi-swap-kmedo}. For Case 2, we use Method 2 to construct swap operations.
Let $L_1 := \{ l~|~ |S_l| \le \rho \}$ and $L_2 := \{ l~|~|S_l| > \rho \}$.
Summing  inequality \reff{ieq1-lem-outlier-case1} with weight 1 and inequality \reff{ieq1-lem-outlier-case2} with weight $1/ (m_l-1)$ over all constructed swap operations, and observing that $m_l/(m_l-1) \le (\rho+ 1)/\rho $, we obtain
\bea\label{ieq1-thm-outlier-kmedo}
& & - \sum\limits_{l \in L_1} \dfrac{\veps}{q} \cdot \cost(S,P) - \sum\limits_{l \in L_2} \frac{1}{m_l-1} \cdot m_l(m_l-1) \cdot \dfrac{\veps}{q} \cdot \cost(S,P)
\nn\\
& \le &
\left(1+ \frac{1}{\rho} \right)
\sum\limits_{s \in S}  \sum\limits_{x \in  N(s) \setminus  P^* }
\left(  d^2 (\phi( \cent_{\cC}(N^*_q (s^*_x) )), x)   - {\rm cost}_c (x) \right)
\nn \\
& &
+   \sum\limits_{s^* \in S^*}  \sum\limits_{x \in   N^* (s^*)}
{\rm cost}_c^* (x)
-  \sum\limits_{s^* \in S^*}  \sum\limits_{x \in   N^* (s^*) \setminus  P}
{\rm cost}_c (x).
\eea
Note that there are at most $k(k-1)$ constructed swap operations. It follows from $1 / (m_l-1) \le 1$ that
\bea\label{ieq2-thm-outlier-kmedo}
{\rm LHS~of~\reff{ieq1-thm-outlier-kmedo}}
& \ge & - \left( |L_1| + \sum\limits_{l \in L_2} m_l(m_l-1) \right) \cdot \dfrac{\veps}{q} \cdot \cost(S,P)
\nn\\
& \ge & \dfrac{(k^2-k)\veps}{q} \cdot \cost(S,P).
\eea
Inequality \reff{ieq2-thm-penalty-kmedp} then yields the following upper bound for the RHS of \reff{ieq1-thm-outlier-kmedo}.
\bea\label{ieq3-thm-outlier-kmedo}
{\rm RHS~of~\reff{ieq1-thm-outlier-kmedo}}
& \le &
\left(3+ \frac{2}{\rho}\right) \sum\limits_{x \in  \mathcal{X} \setminus  P^*  }  {\rm cost}_c^* (x)
-  \sum\limits_{x \in  \mathcal{X} \setminus  P} {\rm cost}_c (x) + \sum\limits_{x \in P^* \setminus P} \cost_c(x)
\nn \\
& = &
\left(3+ \frac{2}{\rho}\right) {\rm cost}_c^* - {\rm cost}_c +  \sum\limits_{x \in P^* \setminus P} \cost_c(x).
\eea

Combining inequalities \reff{ieq-ub-by-def-out}-\reff{ieq3-thm-outlier-kmedo}, we have
\bea
0 \le \left(3+ \frac{2}{\rho}\right) {\rm cost}(S^*,P^*) - \left(1-\dfrac{(1+k^2-k)\veps}{q}  \right){\rm cost}(S,P), \nn
\eea
which is equivalent to \reff{result2-thm-multi-swap-kmedo} under the condition $(1+k^2-k)\veps < q$.

\endproof

\proof[Proof of Theorem \ref{thm-multi-swap-kmeao}.]
We first use Method 1 for Case 2 to prove \reff{result1-thm-multi-swap-kmeao}. Similar to the proof for $k$-MedO, we have
\bea\label{ieq1-thm-single-kmeao}
- \dfrac{k\veps}{q} \cdot \cost(S,P) & \le &
2 \sum\limits_{x \in \cX \setminus (P \cup P^*)}
\left(  d^2(\phi( \cent_{\cC}(N^*_q (s^*_x) )), x)  - {\rm cost}_c (x) \right)
\nn \\
& &
+ \sum\limits_{x \in \cX \setminus P^*} (1+\hat{\veps}) \cost_c (x) - \sum\limits_{x \in \cX \setminus P} \cost_c(x) + \sum\limits_{P^* \setminus P} \cost_c(x)
\nn\\
& \le &
4 \sum\limits_{x \in  \mathcal{X}  \setminus (P \cup P^*) } {\rm cost}^*_c (x)
\nn\\
&&
+ 4 \sqrt{ \sum\limits_{x \in  \mathcal{X}  \setminus (P \cup P^*) }  {\rm cost}^*_c (x) }
\cdot  \sqrt{ \sum\limits_{x \in  \mathcal{X}  \setminus (P \cup P^*) }  {\rm cost}_c (x) }
\nn\\
&&
+ \sum\limits_{x \in \cX \setminus P^*} (1+\hat{\veps}) \cost_c (x) - \sum\limits_{x \in \cX \setminus P} \cost_c(x) + \sum\limits_{P^* \setminus P} \cost_c(x)
\nn\\
& \le &
4 \sqrt{ \cost(S^*,P^*) } \cdot  \sqrt{ \cost(S,P) }
\nn\\
&&
+ (5+\hat{\veps}) \cost(S^*,P^*) - \cost(S,P) +  \dfrac{\veps}{q} \cdot \cost(S,P),
\eea
where the second inequality follows from Lemma \ref{lem-ub-cluster form} (this lemma still holds for the outlier version of $k$-means), and the third inequality follows from \reff{ieq-ub-by-def-out}.

When $(5+\hat{\veps})(1+k)\veps < (9+\hat{\veps})q$, it follows by factorization that inequality \reff{ieq1-thm-single-kmeao} is equivalent to
\bea\label{ieq-factorization-single}
0 & \le & \left( \sqrt{\left(5 + \hat{\veps}\right)\cost(S^*,P^*)} + \alpha \sqrt{\cost(S,P)} \right) \nn\\
&& \times \left( \sqrt{\left(5 + \hat{\veps}\right)\cost(S^*,P^*)} - \beta_1 \sqrt{\cost(S,P)} \right),
\eea
where
\bea
\alpha &=& \frac{2}{\sqrt{5 + \hat{\veps}}} + \sqrt{\frac{4}{5 + \hat{\veps}}+1-\frac{(1+k)\veps}{q}}, \nn\\
\beta_1 &=& -\frac{2}{\sqrt{5 + \hat{\veps}}} + \sqrt{\frac{4}{5 + \hat{\veps}}+1-\frac{(1+k)\veps}{q}}. \nn
\eea
Since the first term of the RHS of \reff{ieq-factorization-single} is non-negative, we obtain
$$
\sqrt{\left(5 + \hat{\veps}\right)\cost(S^*,P^*)} - \beta_1 \sqrt{\cost(S,P)} \ge 0,
$$
which gives \reff{result1-thm-multi-swap-kmeao}.

Next we prove inequality \reff{result2-thm-multi-swap-kmeao}. For Case 2, we use Method 2 to construct swap operations.
Similar to the proof of Theorem \ref{thm-multi-swap-kmedo}, summing inequality  \reff{ieq2-lem-outlier-case1} with weight 1 and inequality \reff{ieq2-lem-outlier-case2} with weight $1/ (m_l-1)$ over all constructed swap operations implies that
\bea\label{ieq1-thm-outlier-kmeao}
& & - \sum\limits_{l \in L_1} \dfrac{\veps}{q} \cdot \cost(S,P) - \sum\limits_{l \in L_2} \frac{1}{m_l-1} \cdot m_l(m_l-1) \cdot \dfrac{\veps}{q} \cdot \cost(S,P)
\nn\\
& \le &
\left(1+ \frac{1}{\rho} \right)
\sum\limits_{s \in S}  \sum\limits_{x \in  N(s) \setminus  P^* }
\left(  d^2 (\phi( \cent_{\cC}(N^*_q (s^*_x) )), x)   - {\rm cost}_c (x) \right)
\nn \\
& &
+   \sum\limits_{s^* \in S^*}  \sum\limits_{x \in   N^* (s^*)}
(1+ {\hat \varepsilon}) {\rm cost}_c^* (x)
-  \sum\limits_{s^* \in S^*}  \sum\limits_{x \in   N^* (s^*) \setminus  P}
{\rm cost}_c (x).
\eea
Because of Lemma \ref{lem-ub-cluster form}, the RHS of \reff{ieq1-thm-outlier-kmeao} is bounded from above by
\bea\label{ieq2-thm-outlier-kmeao}
{\rm RHS~of~\reff{ieq1-thm-outlier-kmeao}}
& \le &
\left(3+ \frac{2}{\rho}+ {\hat \varepsilon}  \right) \sum\limits_{x \in  \mathcal{X} \setminus  P^*  }  {\rm cost}_c^* (x)
-  \sum\limits_{x \in  \mathcal{X} \setminus  P} {\rm cost}_c (x) + \sum\limits_{x \in P^* \setminus P} \cost_c(x)
\nn \\
& &
+\ 2 \left(1+ \frac{1}{\rho} \right) \sqrt{ \sum\limits_{x \in  \mathcal{X} \setminus  P^* } {\rm cost}_c^* (x)}
\sqrt{\sum\limits_{x \in  \mathcal{X} \setminus  P} {\rm cost}_c (x)}
\nn \\
& = &
\left(3+ \frac{2}{\rho} + {\hat \varepsilon} \right) \cost(S^*,P^*) - \cost(S,P)  + \sum\limits_{x \in P^* \setminus P} \cost_c(x)
\nn\\
&&
+\ 2 \left(1+ \frac{1}{\rho} \right) \sqrt{\cost(S^*,P^*) \cost(S,P)}.
\eea

Combining inequalities \reff{ieq-ub-by-def-out}, \reff{ieq2-thm-outlier-kmedo},  \reff{ieq1-thm-outlier-kmeao} and \reff{ieq2-thm-outlier-kmeao}, we have
\bea
0 & \le & \left(3+ \frac{2}{\rho} + {\hat \varepsilon} \right) {\rm cost}(S^*,P^*) - \left(1-\frac{(1+k^2-k)\veps}{q} \right){\rm cost}(S,P)
\nn\\
&&
+\ 2 \left(1+ \frac{1}{\rho} \right) \sqrt{{\rm cost}(S^*,P^*) \cost(S,P)}. \nn
\eea
Using the factorization for this inequality and the condition in this theorem, we obtain the desired result.

\endproof

Consequently, we have the following corollaries that specify the tradeoff between the approximation ratio and the outlier blowup.
\begin{corollary}
There exists a bi-criteria $(5 + \veps, O(\frac{k}{\veps}\log(n\delta)))$-, and a bi-criteria $(3 + \veps, O(\frac{k^2}{\veps}\log(n\delta)))$-approximation algorithm for $k$-MedO.
\end{corollary}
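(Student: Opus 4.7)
The plan is to derive the two bi-criteria bounds by instantiating Algorithm \ref{alg-multi-swap-outlier} with two different parameter choices and then invoking Theorem \ref{thm-multi-swap-kmedo} together with Theorem \ref{thm-number-outlier}. In both cases I will call the algorithm with an internal tolerance $\veps'$ (distinct from the target $\veps$ in the bi-criteria statement) and then tune $\veps'$, $\rho$, and $q$ so that the ratio collapses to $5+\veps$ (respectively $3+\veps$) and the outlier blowup collapses to $O((k/\veps)\log(n\delta))$ (respectively $O((k^2/\veps)\log(n\delta))$).

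For the first bound I would take $\rho = 1$ and $q = k$, which is exactly the choice mandated by Proposition \ref{prop1}(i). Inequality \reff{result1-thm-multi-swap-kmedo} then yields
$$
\cost(S,P) \;\le\; \frac{5}{1-(1+k)\veps'/k}\cdot \cost(S^*,P^*),
$$
so setting $\veps' := k\veps/\bigl((1+k)(5+\veps)\bigr) = \Theta(\veps)$ makes the right-hand side exactly $(5+\veps)\cost(S^*,P^*)$. By Theorem \ref{thm-number-outlier} the returned outlier set has size $O\bigl((zq/\veps')\log(n\delta)\bigr) = O\bigl((k/\veps)\log(n\delta)\bigr)\cdot z$, giving the claimed $O((k/\veps)\log(n\delta))$ outlier-constraint blowup.

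For the second bound I would instead take $\rho = \lceil 6/\veps\rceil$ and $q = k^2-k$, which is the regime of Proposition \ref{prop1}(ii). Inequality \reff{result2-thm-multi-swap-kmedo} then gives
$$
\cost(S,P) \;\le\; \frac{3+2/\rho}{1-(1+k^2-k)\veps'/(k^2-k)}\cdot \cost(S^*,P^*).
$$
The choice of $\rho$ makes $3+2/\rho \le 3+\veps/3$, and picking $\veps' = \Theta(\veps)$ of the form $\veps' \le (k^2-k)\veps/\bigl(3(1+k^2-k)(3+\veps)\bigr)$ makes the denominator close enough to $1$ to bound the whole fraction by $3+\veps$. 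Theorem \ref{thm-number-outlier} then delivers an outlier blowup of $O((q/\veps')\log(n\delta)) = O((k^2/\veps)\log(n\delta))$, as required.

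The only piece that needs care, rather than calculation, is reconciling the constants: the theorem produces a ratio of the form $r/(1-\eta)$ with $\eta$ linear in $\veps'$, while we want the final ratio to be $r+\veps$. The trick is the standard observation that $r/(1-\eta) \le r+\veps$ iff $\eta \le \veps/(r+\veps)$, which lets us solve for the admissible $\veps'$ and then verify that the resulting $q/\veps'$ factor still has the desired order in $k$ and $\veps$. There is no genuine obstacle here beyond bookkeeping, since the work was already done in Theorems \ref{thm-multi-swap-kmedo} and \ref{thm-number-outlier}; the corollary is essentially a packaging statement that exposes the tradeoff between the single-swap regime ($\rho=1$) and the multi-swap regime ($\rho=\Theta(1/\veps)$).
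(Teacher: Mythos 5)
Your proposal is correct and follows essentially the same route as the paper: both instantiate Theorem \ref{thm-multi-swap-kmedo} with $q=\Theta(k)$ (single-swap) for the first bound and $q=\Theta(k^2)$, $\rho=\Theta(1/\veps)$ for the second, and then invoke Theorem \ref{thm-number-outlier} for the outlier blowup. The only immaterial difference is that the paper takes $q\ge k+1$ (resp.\ $q\ge k^2-k+1$) so the denominator is simply $1-\veps$ and the ratio is $5+O(\veps)$ (resp.\ $3+O(\veps)$), whereas you fix $q=k$ (resp.\ $q=k^2-k$) and solve explicitly for the internal tolerance $\veps'$.
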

\proof
If $q \ge k+1$, then
$$
\frac{5}{1- (1+k)\veps / q} \le \frac{5}{1- \veps} \sim 5 + O(\veps).
$$
If $q \ge k^2-k+1$ and $\rho \ge 2 / O(\veps)$, then
$$
\frac{3 + 2/\rho}{1- (1+k^k-k)\veps / q} \le \frac{3 + O(\veps)}{1- \veps} \sim 3 + O(\veps).
$$
Combining the above results, Theorems \ref{thm-number-outlier} and \ref{thm-multi-swap-kmedo} complete the proof.
\endproof

\begin{corollary}
There exists a bi-criteria $(25 + \veps, O(\frac{k}{\veps}\log(n\delta)))$-, and a bi-criteria $(9 + \veps, O(\frac{k^2}{\veps}\log(n\delta)))$-approximation algorithm for $k$-MeaO.
\end{corollary}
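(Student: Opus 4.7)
\proof
The plan is to mirror the argument used for the $k$-MedO corollary, plugging the appropriate parameter choices into Theorems \ref{thm-multi-swap-kmeao} and \ref{thm-number-outlier}. There are really two independent calculations (one per bi-criteria pair), and each reduces to a short limit analysis of the explicit expression for $\beta_1$ or $\beta_2$; no new structural ingredients are required.

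For the first bi-criteria guarantee I would invoke the bound \reff{result1-thm-multi-swap-kmeao} obtained via Method 1. Choose $q = k+1$ together with sufficiently small $\hat{\veps}$ and $\veps$, which satisfies the hypothesis $(5+\hat{\veps})(1+k)\veps < (9+\hat{\veps})q$. Then
$$
\beta_1 \;=\; -\frac{2}{\sqrt{5 + \hat{\veps}}} + \sqrt{\frac{4}{5+\hat{\veps}}+1-\frac{(1+k)\veps}{q}} \;\longrightarrow\; -\frac{2}{\sqrt{5}}+\sqrt{\frac{4}{5}+1} \;=\; \frac{1}{\sqrt{5}},
$$
so $(5+\hat{\veps})/\beta_1^{2} \to 25$ and the approximation factor can be made at most $25+\veps$ by shrinking $\hat{\veps}$ and $\veps$. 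The outlier count is controlled by Theorem \ref{thm-number-outlier}, which yields $O(zq/\veps\cdot\log(n\delta)) = O(zk/\veps\cdot\log(n\delta))$; expressed as a blowup factor over $z$, this gives exactly $O(k/\veps\cdot\log(n\delta))$.

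For the second bi-criteria guarantee I would use the bound \reff{result2-thm-multi-swap-kmeao} obtained via Method 2. Choose $q = k^{2}-k+1$, take $\rho$ a large integer tending to infinity, and shrink $\hat{\veps}$ and $\veps$; this satisfies the hypothesis $(1+k^{2}-k)\veps/q < (1+1/\rho)^{2}/(3+2/\rho+\hat{\veps})+1$. In this regime
$$
\beta_2 \;=\; -\frac{1+1/\rho}{\sqrt{3+2/\rho+\hat{\veps}}} + \sqrt{\frac{(1+1/\rho)^{2}}{3+2/\rho+\hat{\veps}}+1-\frac{(1+k^{2}-k)\veps}{q}} \;\longrightarrow\; -\frac{1}{\sqrt{3}}+\frac{2}{\sqrt{3}} \;=\; \frac{1}{\sqrt{3}},
$$
whence $(3+2/\rho+\hat{\veps})/\beta_2^{2}\to 9$. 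Hence for any target additive slack $\veps$ we can select $\rho = \Theta(1/\veps)$ and $\hat{\veps} = \Theta(\veps)$ so that the ratio is at most $9+\veps$. Theorem \ref{thm-number-outlier} with $q = k^{2}-k+1$ yields outlier count $O(zk^{2}/\veps\cdot\log(n\delta))$, i.e.\ the blowup factor $O(k^{2}/\veps\cdot\log(n\delta))$. There is no real obstacle here beyond verifying the two elementary limit computations for $\beta_1^{2}$ and $\beta_2^{2}$; the only subtle point, as in the $k$-MedO case, is to keep the hypotheses of Theorem \ref{thm-multi-swap-kmeao} satisfied while driving $\rho\to\infty$, $\hat{\veps}\to 0$, and $\veps\to 0$ simultaneously, which is ensured by fixing $q$ as above before letting the other parameters tend to their limits.
\endproof
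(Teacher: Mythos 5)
Your proposal is correct and follows essentially the same route as the paper: plug $q=k+1$ (resp.\ $q=k^2-k+1$) into Theorem \ref{thm-multi-swap-kmeao}, compute the limits $\beta_1\to 1/\sqrt{5}$ and $\beta_2\to 1/\sqrt{3}$ to obtain the ratios $25+O(\veps+\hat{\veps})$ and $9+O(\veps+\hat{\veps})$, and invoke Theorem \ref{thm-number-outlier} for the outlier blowup. The only difference is that you spell out the limit computations and the verification of the theorem's hypotheses, which the paper leaves implicit.
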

\proof
Recall the definitions of $\beta_1$ and $\beta_2$ in Theorem \ref{thm-multi-swap-kmeao}. We then have
$$
\dfrac{5+\hat{\veps}}{\beta_1^2} \sim 25 + O(\veps+\hat{\veps})
$$
when $q = k+1$,
and
$$
\dfrac{3+2/\rho+\hat{\veps}}{\beta_2^2} \sim 9 + O(\veps+\hat{\veps})
$$
when $q = k^2-k+1$ and $\rho$ is sufficiently large.
Then, combining these results, Theorems \ref{thm-number-outlier} and \ref{thm-multi-swap-kmeao} completes the proof.
\endproof

\section{Conclusions}
The previous analyses  of local search algorithms for the robust $k$-median/$k$-means, use only the individual form, in which the constructed connections between the local and global optimal solutions are individual for each point. This has the disadvantage that the joint information about outliers remains hidden. In this paper, we develop a cluster form analysis and define the adapted cluster that captures the outlier information.
We find that this new technique works better than the previous analysis methods of local search algorithms, since it improves the approximation ratios of local search algorithms for $k$-MeaP, $k$-MeaO and $k$-MedO, and obtain the same ratio which is the best for $k$-MedP.

We believe that our new technique will also work for the robust FLP, since the structure of FLP is similar to $k$-median/$k$-means. Also, our technique seems to be promising for the robust $k$-center problem, even for any algorithm for robust clustering problems that is based on local search.

\section{Acknowledgments}
The first author is supported by the NSFC under Grant No. 12001039. The second author is supported by the Science Foundation of the Anhui Education Department under Grant No. KJ2019A0834. The third author is supported by the NSFC under Grant No. 11971349. The fourth and fifth authors are supported by  the NSFC under Grant No. 11871081.  The fourth author is also supported by Beijing Natural Science Foundation Project under Grant No. Z200002.

%\bibliographystyle{hieeetr}
%\bibliography{references}

\begin{thebibliography}{36}
\providecommand{\natexlab}[1]{#1}
\providecommand{\url}[1]{\texttt{#1}}
\providecommand{\urlprefix}{URL }

\bibitem[{Ahmadian et~al.(2017)Ahmadian, Norouzi-Fard, Svensson,
  \protect\BIBand{} Ward}]{answ}
Ahmadian S, Norouzi-Fard A, Svensson O, Ward J (2017) Better guarantees for
  k-means and euclidean k-median by primal-dual algorithms. \emph{Proceedings
  of the 58th Annual Symposium on Foundations of Computer Science}, 61--72
  (IEEE).

\bibitem[{Arora et~al.(1998)Arora, Raghavan, \protect\BIBand{} Rao}]{arr}
Arora S, Raghavan P, Rao S (1998) Approximation schemes for euclidean k-medians
  and related problems. \emph{Proceedings of the 30th Annual ACM Symposium on
  Theory of Computing}, 106--113.

\bibitem[{Arya et~al.(2004)Arya, Garg, Khandekar, Meyerson, Munagala,
  \protect\BIBand{} Pandit}]{agkmmp}
Arya V, Garg N, Khandekar R, Meyerson A, Munagala K, Pandit V (2004) Local
  search heuristics for k-median and facility location problems. \emph{SIAM
  Journal on Computing} 33(3):544--562.

\bibitem[{Bernstein et~al.(2019)Bernstein, Modaresi, \protect\BIBand{}
  Saur{\'e}}]{bms}
Bernstein F, Modaresi S, Saur{\'e} D (2019) A dynamic clustering approach to
  data-driven assortment personalization. \emph{Management Science}
  65(5):2095--2115.

\bibitem[{Borgwardt \protect\BIBand{} Happach(2019)}]{bh}
Borgwardt S, Happach F (2019) Good clusterings have large volume.
  \emph{Operations Research} 67(1):215--231.

\bibitem[{Byrka et~al.(2014)Byrka, Pensyl, Rybicki, Srinivasan,
  \protect\BIBand{} Trinh}]{bprst}
Byrka J, Pensyl T, Rybicki B, Srinivasan A, Trinh K (2017) An improved
  approximation for k-median, and positive correlation in budgeted
  optimization. \emph{ACM Transactions on Algorithms} 13(2):23:1--23:31.

\bibitem[{Charikar et~al.(2002)Charikar, Guha, Tardos, \protect\BIBand{}
  Shmoys}]{cgts}
Charikar M, Guha S, Tardos {\'E}, Shmoys DB (2002) A constant-factor
  approximation algorithm for the k-median problem. \emph{Journal of Computer
  and System Sciencess} 65(1):129--149.

\bibitem[{Charikar et~al.(2001)Charikar, Khuller, Mount, \protect\BIBand{}
  Narasimhan}]{ckmn}
Charikar M, Khuller S, Mount DM, Narasimhan G (2001) Algorithms for facility
  location problems with outliers. \emph{Proceedings of the 12th Annual
  ACM-SIAM Symposium on Discrete Algorithms}, 642--651 (Society for Industrial
  and Applied Mathematics).

\bibitem[{Charikar \protect\BIBand{} Li(2012)}]{chl}
Charikar M, Li S (2012) A dependent lp-rounding approach for the k-median
  problem. \emph{International Colloquium on Automata, Languages, and
  Programming}, 194--205 (Springer).

\bibitem[{Chen(2008)}]{chen}
Chen K (2008) A constant factor approximation algorithm for k-median clustering
  with outliers. \emph{Proceedings of the 19th Annual ACM-SIAM Symposium on
  Discrete Algorithms}, 826--835.

\bibitem[{Cohen-Addad \protect\BIBand{} Karthik(2019)}]{ck}
Cohen-Addad V, Karthik C (2019) Inapproximability of clustering in lp metrics.
  \emph{Proceedings of the 60th Annual Symposium on Foundations of Computer
  Science}, 519--539 (IEEE).

\bibitem[{Cohen-Addad et~al.(2019)Cohen-Addad, Klein, \protect\BIBand{}
  Mathieu}]{ckm}
Cohen-Addad V, Klein PN, Mathieu C (2019) Local search yields approximation
  schemes for k-means and k-median in euclidean and minor-free metrics.
  \emph{SIAM Journal on Computing} 48(2):644--667.

\bibitem[{Feng et~al.(2019)Feng, Zhang, Shi, \protect\BIBand{} Wang}]{fzsw}
Feng Q, Zhang Z, Shi F, Wang J (2019) An improved approximation algorithm for
  the k-means problem with penalties. \emph{Proceedings of the 2nd Annual
  International Workshop on Frontiers in Algorithmics}, 170--181 (Springer).

\bibitem[{Friggstad et~al.(2019{\natexlab{a}})Friggstad, Khodamoradi, Rezapour,
  \protect\BIBand{} Salavatipour}]{fkrs}
Friggstad Z, Khodamoradi K, Rezapour M, Salavatipour MR (2019{\natexlab{a}})
  Approximation schemes for clustering with outliers. \emph{ACM Transactions on
  Algorithms} 15(2):1--26.

\bibitem[{Friggstad et~al.(2019{\natexlab{b}})Friggstad, Rezapour,
  \protect\BIBand{} Salavatipour}]{frs}
Friggstad Z, Rezapour M, Salavatipour MR (2019{\natexlab{b}}) Local search
  yields a ptas for k-means in doubling metrics. \emph{SIAM Journal on
  Computing} 48(2):452--480.

\bibitem[{Gupta et~al.(2017)Gupta, Kumar, Lu, Moseley, \protect\BIBand{}
  Vassilvitskii}]{gklmv}
Gupta S, Kumar R, Lu K, Moseley B, Vassilvitskii S (2017) Local search methods
  for k-means with outliers. \emph{Proceedings of the 43rd International
  Conference on Very Large Data Bases} 10(7):757--768.

\bibitem[{Hajiaghayi et~al.(2012)Hajiaghayi, Khandekar, \protect\BIBand{}
  Kortsarz}]{hkk}
Hajiaghayi M, Khandekar R, Kortsarz G (2012) Local search algorithms for the
  red-blue median problem. \emph{Algorithmica} 63(4):795--814.

\bibitem[{Hochbaum \protect\BIBand{} Liu(2018)}]{hl}
Hochbaum DS, Liu S (2018) Adjacency-clustering and its application for yield
  prediction in integrated circuit manufacturing. \emph{Operations Research}
  66(6):1571--1585.

\bibitem[{Jain et~al.(2003)Jain, Mahdian, Markakis, Saberi, \protect\BIBand{}
  Vazirani}]{jmmsv}
Jain K, Mahdian M, Markakis E, Saberi A, Vazirani VV (2003) Greedy facility
  location algorithms analyzed using dual fitting with factor-revealing {LP}.
  \emph{Journal of the ACM} 50(6):795--824.

\bibitem[{Jain \protect\BIBand{} Vazirani(2001)}]{jv}
Jain K, Vazirani VV (2001) Approximation algorithms for metric facility
  location and k-median problems using the primal-dual schema and lagrangian
  relaxation. \emph{Journal of the ACM} 48(2):274--296.

\bibitem[{Kanungo et~al.(2004)Kanungo, Mount, Netanyahu, Piatko, Silverman,
  \protect\BIBand{} Wu}]{kmnpsw}
Kanungo T, Mount DM, Netanyahu NS, Piatko CD, Silverman R, Wu AY (2004) A local
  search approximation algorithm for k-means clustering. \emph{Computational
  Geometry} 28(2-3):89--112.

\bibitem[{Korupolu et~al.(2000)Korupolu, Plaxton, \protect\BIBand{}
  Rajaraman}]{kpr}
Korupolu MR, Plaxton CG, Rajaraman R (2000) Analysis of a local search
  heuristic for facility location problems. \emph{Journal of Algorithms}
  37(1):146--188.

\bibitem[{Krishnaswamy et~al.(2018)Krishnaswamy, Li, \protect\BIBand{}
  Sandeep}]{kls}
Krishnaswamy R, Li S, Sandeep S (2018) Constant approximation for k-median and
  k-means with outliers via iterative rounding. \emph{Proceedings of the 50th
  Annual ACM SIGACT Symposium on Theory of Computing}, 646--659.

\bibitem[{Li(2013)}]{li}
Li S (2013) A 1.488 approximation algorithm for the uncapacitated facility
  location problem. \emph{Information and Computation} 222:45--58.

\bibitem[{Li \protect\BIBand{} Svensson(2016)}]{ls}
Li S, Svensson O (2016) Approximating k-median via pseudo-approximation.
  \emph{SIAM Journal on Computing} 45(2):530--547.

\bibitem[{Li et~al.(2013)Li, Shu, Wang, Xiu, Xu, \protect\BIBand{}
  Zhang}]{lswxxz}
Li Y, Shu J, Wang X, Xiu N, Xu D, Zhang J (2013) Approximation algorithms for
  integrated distribution network design problems. \emph{INFORMS Journal on
  Computing} 25(3):572--584.

\bibitem[{Lloyd(1982)}]{lloyd}
Lloyd S (1982) Least squares quantization in pcm. \emph{IEEE Transactions on
  Information Theory} 28(2):129--137.

\bibitem[{Lu \protect\BIBand{} Wedig(2013)}]{lw}
Lu SF, Wedig GJ (2013) Clustering, agency costs and operating efficiency:
  Evidence from nursing home chains. \emph{Management Science} 59(3):677--694.

\bibitem[{Mahdian et~al.(2006)Mahdian, Ye, \protect\BIBand{} Zhang}]{myz}
Mahdian M, Ye Y, Zhang J (2006) Approximation algorithms for metric facility
  location problems. \emph{SIAM Journal on Computing} 36(2):411--432.

\bibitem[{Makarychev et~al.(2016)Makarychev, Makarychev, Sviridenko,
  \protect\BIBand{} Ward}]{mmsw}
Makarychev K, Makarychev Y, Sviridenko M, Ward J (2016) A bi-criteria
  approximation algorithm for k-means. \emph{Proceedings of the 19th
  International Workshop on Approximation Algorithms for Combinatorial
  Optimization Problems (APPROX), and the 20th International Workshop on
  Randomization and Computation (RANDOM)}, 14:1--14:20 (Schloss
  Dagstuhl-Leibniz-Zentrum fuer Informatik).

\bibitem[{Matou{\v{s}}ek(2000)}]{mat}
Matou{\v{s}}ek J (2000) On approximate geometric k-clustering. \emph{Discrete
  \& Computational Geometry} 24(1):61--84.

\bibitem[{Ni et~al.(2020)Ni, Shu, Song, Xu, \protect\BIBand{} Zhang}]{nssxz}
Ni W, Shu J, Song M, Xu D, Zhang K (2020) A branch-and-price algorithm for
  facility location with general facility cost functions. \emph{INFORMS Journal
  on Computing} \urlprefix\url{http://dx.doi.org/10.1287/ijoc.2019.0921}.

\bibitem[{Wang et~al.(2015)Wang, Xu, Du, \protect\BIBand{} Wu}]{wxdw}
Wang Y, Xu D, Du D, Wu C (2015) Local search algorithms for k-median and
  k-facility location problems with linear penalties. \emph{Proceedings of the
  9th Annual International Conference on Combinatorial Optimization and
  Applications}, 60--71 (Springer).

\bibitem[{Wu et~al.(2018)Wu, Du, \protect\BIBand{} Xu}]{wdx}
Wu C, Du D, Xu D (2018) An approximation algorithm for the k-median problem
  with uniform penalties via pseudo-solution. \emph{Theoretical Computer
  Science} 749:80--92.

\bibitem[{Zhang et~al.(2019)Zhang, Hao, Wu, Xu, \protect\BIBand{}
  Zhang}]{zhwxz}
Zhang D, Hao C, Wu C, Xu D, Zhang Z (2019) Local search approximation
  algorithms for the k-means problem with penalties. \emph{Journal of
  Combinatorial Optimization} 37(2):439--453.

\bibitem[{Zhang(2007)}]{z}
Zhang P (2007) A new approximation algorithm for the k-facility location
  problem. \emph{Theoretical Computer Science} 384(1):126--135.

\end{thebibliography}

\end{document}